\tikzset{
recfmt/.style={
	rectangle,
	rounded corners,
	draw=black, very thick,
	minimum height=1em,
	text centered
	      }	      
}
\theoremstyle{definition}
\newtheorem{thm}{Theorem}[section]
\newtheorem{lem}[thm]{Lemma}
\numberwithin{equation}{section}
\newcommand{\cF}{\mathcal{F}}
\newcommand{\gt}{\rightarrow}
\renewcommand{\O}{\mathcal{O}}
\newcommand{\R}{\mathbb{R}}
\newcommand{\E}{\mathbb{E}}
\newcommand{\Q}{\mathbb{Q}}
\newcommand{\I}{\mathbb{I}}
\newcommand{\PP}{\mathbb{P}}
\newcommand{\C}{\mathcal{C}}
\newcommand{\G}{\mathcal{G}}
\newcommand{\B}{\mathcal{B}}
\newcommand{\BB}{\mathscr{B}}
\newcommand{\N}{\mathcal{N}}
\newcommand{\T}{\mathcal{T}}
\newcommand{\RR}{\mathcal{R}}
\newcommand{\M}{\mathcal{M}}
\newcommand{\HH}{\mathcal{H}}
\newcommand{\dd}{ \,\textrm{d}}
\newcommand{\PAR}[1]{\partial #1}
\newcommand{\OL}[1]{\overline{#1}}
\newcommand{\be}{\begin{equation}}
\newcommand{\en}{\end{equation}}
\def\eref#1{(\ref{#1})}
\def\Fref#1{Figure \ref{#1}}
\newtheorem{assumptions}[thm]{Assumptions}
\newtheorem{remarks}[thm]{Remarks}
\newtheorem{algorithm}[thm]{Algorithm}
\newtheorem{proposition}[thm]{Proposition}
\begin{document}


\baselineskip=17pt


\title{A Bayesian Level Set Method\\for Geometric Inverse Problems}

\author{Marco A. Iglesias$^\dag$, Yulong Lu$^\ast$ and Andrew M. Stuart$\ddag$\\
$^\dag$ School of Mathematical Sciences, University of Nottingham, Nottingham NG7 2RD, UK\\
$^\ast$ Mathematics Institute, University of Warwick, Coventry CV4 7AL, UK\\
$\ddag$ Mathematics Institute, University of Warwick, Coventry CV4 7AL, UK\\
$^\ast$ corresponding author. yulong.lu@warwick.ac.uk}

\date{}

\maketitle


\begin{abstract}
We introduce a level set based  approach to Bayesian geometric inverse 
problems. In these problems the interface between different domains 
is the key unknown, and is realized as the level set of a function. This
function itself becomes the object of the inference. Whilst the level set 
methodology has been widely used for the solution of geometric inverse problems,
the Bayesian formulation that we develop here contains two significant
advances: firstly it leads to a well-posed inverse problem in which the
posterior distribution is Lipschitz with respect to the observed data;
and secondly it leads to computationally expedient algorithms in which
the level set itself is updated implicitly via the MCMC methodology
applied to the level set function -- no explicit velocity field
is required for the level set interface. Applications are numerous
and include medical imaging, modelling of subsurface formations 
and the inverse source problem; our theory is illustrated with
computational results involving the last two applications.
\end{abstract}

\section{Introduction}
\label{sec:intro}

Geometric inverse problems, in which the interfaces between different
domains are the primary unknown quantities, are ubiquitous in
applications including medical imaging problems such as EIT
\cite{B02} and subsurface flow \cite{armstrong:11}; they also have an 
intrinsically interesting mathematical structure \cite{I90}.
In many such applications the data is sparse, so that the problem is
severely under-determined, and noisy. For both these reasons
the Bayesian approach to the inverse problem is attractive as
the probabilistic formulation allows for regularization of
the under-determined, and often ill-posed, inversion via the
introduction of priors, and simultaneously deals with the presence 
of noise in the observations \cite{KS05,S10}. The level set method
has been a highly successful methodology for the solution of
classical, non-statistical, inverse problems for interfaces since the seminal
paper of Santosa \cite{S96}; see for example \cite{CCT05,Bur01,Bur03,VC02,LLT06,LLT206,TC04,BO05,DL06,DL09,wang2015binary,IM11,ABH04,astrakova:14}
and for related Bayesian level set approaches see
\cite{xie2011uncertainty,Lorentzen2012,Lorentzen2012B,Ping2014}.

In this paper we marry the level set approach with the Bayesian approach
to geometric inverse problems. This leads to two significant
advances: firstly it leads to a well-posed inverse problem in which the
posterior distribution is Lipschitz with respect to the observed data,
in the Hellinger metric -- there is no analogous well-posedness
theory for classical level set inversion; and secondly it leads to 
computationally expedient algorithms in which the interfaces are updated 
implicitly via the Markov chain Monte Carlo (MCMC) 
methodology applied to the level set function -- 
no explicit velocity field is required for the level set interface. 
We highlight that the
recent paper \cite{wang2015binary} demonstrates the potential for working
with regularized data misfit minimization in terms of a level set
function, but is non-Bayesian in its treatment of the problem, using
instead simulated annealing within an optimization framework.
On the other hand the paper \cite{xie2011uncertainty} adopts a
Bayesian approach and employs
the level set method, but requires a velocity field for propagation of
the interface and does not have the conceptual and implementational
simplicity, as well as the underlying theoretical basis, 
of the method introduced here. 
The papers \cite{Lorentzen2012,Lorentzen2012B,Ping2014}, 
whilst motivated by the Bayesian approach,
use the ensemble Kalman filter and are therefore not strictly Bayesian --
the method does not deliver a provably reliable approximation of the posterior
distribution except for linear Gaussian inverse problems. 

The key idea which underpins our work is this. Both the theory and
computational practice of the level set method for geometric inverse
problems is potentially hampered by the fact that the mapping from
the space of the level set function to the physical parameter space
is discontinuous. This discontinuity occurs when the level set
function is flat at the critical levels, and in particular where
the desired level set has non-zero Lebesgue measure. This is dealt
with in various {\em ad hoc} ways in the applied literature. The
beauty of the Bayesian approach is that, with the right choice of
prior in level set space, these discontinuities have probability
zero. As a result a well-posedness theory (the posterior is Lipschitz
in the data) follows automatically, and computational algorithms
such as MCMC may be formulated in level set space. We thus have
practical algorithms which are simultaneously founded on a 
sound theoretical bedrock. 

In section \ref{sec:blip} we set up the inverse problem of interest,
describe the level set formulation, and state assumptions under which
we have a well-posed inverse problem for the level set function.
We also characterize the discontinuity set of the level set map
and demonstrate the existence of Gaussian priors for which this
discontinuity set is a probability zero event. In section
\ref{sec:exa} we describe two examples -- inverse gravimetry
and permeability determination in groundwater flow -- which
can be shown to satisfy the theoretical framework of the 
preceding section and hence for which there is a well-posed
inverse problem for the level set function. Section \ref{sec:num}
contains numerical experiments for both of these examples,
demonstrating the potential of the methodology, and also
highlighting important questions for future research. We
conclude in section \ref{sec:con}, and then the two appendices contain
various technical details and proofs which have been deferred
in order to maintain the flow of ideas in the main body of the article.

\section{Bayesian Level Set Inversion}
\label{sec:blip}

\subsection{The Inverse Problem}
\label{ssec:tip}

This paper is concerned with inverse problems of the following type:
recover function $\kappa \in X:=L^{q}(D;\R)$, $D$ a bounded open set in
$\R^2$, from a finite set of
noisily observed linear functionals $\{\O_j\}_{j=1}^J$ of $p \in V$, for some Banach space
$V$, where $p=G(u)$ for nonlinear operator $G \in C(X,V).$
Typically, for us, $\kappa$ will represent input data for a 
partial differential equation (PDE),
$p$ the solution of the PDE and $G$ the solution operator mapping
the input $\kappa$ to the solution $p$. Collecting the linear functionals
into a single operator $\O: V \to \R^J$ and assuming additive noise
$\eta$ we obtain the inverse problem of finding $\kappa$ from $y$ where
\begin{equation}
\label{eq:IP1}
y=(\O \circ G)(\kappa)+\eta.
\end{equation}
This under-determined inverse problem is well-adapted to both the 
classical \cite{EHN96} and Bayesian \cite{S13} approaches
to regularized inversion, because $\O \circ G \in C(X,\R^J).$

\subsection{Level Set Parameterization}
\label{ssec:lsp}

There are many inverse problems where $\kappa$ is known
{\em a priori} to have the form
\begin{equation}
\label{eq:kd}
 \kappa(x) = \sum_{i=1}^n \kappa_i \I_{D_i}(x);
\end{equation}
here $\I_D$ denotes the indicator function of subset $B \subset \R^2$,
$\{D_i\}_{i=1}^n$ are subsets of $D$ such that $\cup_{i=1}^n \OL{D_i} = \OL{D}$ and $D_i \cap D_j = \varnothing$, and the
$\{\kappa_i\}_{i=1}^n$ are known positive constants.\footnote{Generalization to the $\kappa_i$ being unknown constants, or
unknown smooth functions on each domain $D_i$, are possible
but will not be considered explicitly in this paper.
Our focus is on the geometry of the interfaces implied by the $D_i.$}
In this setting the $D_i$ become the primary unknowns and the level set
method is natural. Given integer $n$ fix the constants $c_i\in \R$ for $i = 0, \cdots, n$ with $-\infty = c_0 < c_1 < \cdots <  c_n = \infty$ and consider a real-valued
continuous {\em level set  function} $u:D \to \R$. We can then define the $D_i$
by 
\be\label{eq_lss}
\ D_i = \{x\in D\ |\ c_{i-1} \leq u(x) < c_i\}.
\en
It follows that $\ D_i \cap D_{j} = \varnothing$ for $i,j \geq 1,\ i \neq j$. 
For later use define the $i$-th level set $D^0_i = \OL{D_i} \cap \OL{D_{i+1}} = \{x\in D\ |\ u(x) = c_i\}$. Let $U=C(\OL{D};\R)$ and, given the positive constants 
$\{\kappa_i\}_{i=1}^n$,
we define the level set map $F: U \to X$ by
\begin{equation}\label{eq_lsf}
(Fu)(x) \gt \kappa(x) = \sum_{i=1}^n \kappa_i\, \I_{D_i}(x).
\end{equation}
We may now define $\G=\O \circ G \circ F: U \to \R^J$ and reformulate
the inverse problem in terms of the level set function $u$: find
$u$ from $y$ where
\begin{equation}
\label{eq:IP2}
y=\G(u)+\eta.
\end{equation}
However, because $F: U \to X$, and hence $\G: U \to \R^J$, is discontinuous, 
the classical regularization theory for this form of inverse problem
is problematic.
Whilst the current state of the art for Bayesian regularization
assumes continuity of $\G$ for inverse problems of the form
\eref{eq:IP2}, we will demonstrate that the Bayesian setting
can be generalized to level set inversion. This will be achieved
by a careful understanding of the discontinuity set for $F$,
and an understanding of probability measures for which this set
is a measure zero event.

\subsection{Well-Posed Bayesian Inverse Problem}
\label{ssec:wpb}

In the Bayesian approach to inverse problems, all quantities in \eref{eq:IP2}
are treated as random variables. Let $U$ denote a separable Banach space
and define a complete\footnote{Complete probability space is defined at
the start of Appendix 2} probability space $(U,\Sigma,\mu_0)$ for the unknown 
$u$. (In our applications $U$ will be the space $C(\OL{D};\R)$ but we
state our main theorem in more generality). Assume that the noise $\eta$ is a 
random draw from the centered Gaussian $\Q_0 :=\N(0,\Gamma)$; 
we also assume that $\eta$ and $u$ are independent.\footnote{Allowing for 
non-Gaussian $\eta$ is also possible, as is dependence between $\eta$ and $u$;
however we elect to keep the presentation simple.} We may now define the 
joint random variable $(u,y) \in U \times \R^J$.
The Bayesian approach to the inverse problem \eref{eq:IP2} consists of seeking
the posterior probability distribution $\mu^y$ of the random variable $u|y$, 
$u$ given $y$. This measure $\mu^y$ describes our probabilistic knowledge about $u$
on the basis of the measurements $y$ given by \eref{eq:IP2} and the prior 
information $\mu_0$ on $u$.

We define the {\em least squares function} $\Phi: U \times \R^J \to \R^+$ by
\be\label{Phi}
\Phi(u; y) = \frac{1}{2} |y - \G(u)|^2_\Gamma
\en
with $|\cdot|:=|\Gamma^{-\frac12}\cdot|.$
We now state a set of assumptions under which the posterior distribution
is well-defined via its density with respect to the prior distribution,
and is Lipschitz in the Hellinger metric, with respect to data $y$.

\begin{assumptions}
\label{a:1}
The least squares function $\Phi: U\times \R^J\gt \R$ and probability
measure $\mu_0$ on the measure space $(U,\Sigma)$ 
satisfy the following properties:
\begin{enumerate}

\item for every $r > 0$ there is a $K = K(r)$ such that, for all $u\in U$ and 
all $y\in \R^J$ with $|y|_\Gamma < r$,
 \[
 0 \leq \Phi(u; y) \leq K;
 \]

\item for any fixed $y\in \R^J$, $\Phi(\cdot;y): U\gt \R$, is continuous
$\mu_0$-almost surely on the complete probability space $(U, \Sigma, \mu_0)$;

\item for $y_1, y_2\in \R^J$ with $\max\{|y_1|_\Gamma, |y_2|_\Gamma\} < r$, there exists a $C = C(r)$ such that, for all $u\in U$,
 \[
 |\Phi(u; y_1) - \Phi(u; y_2)| \leq C|y_1 - y_2|_\Gamma.
 \]
\end{enumerate}
\end{assumptions}

The Hellinger distance between $\mu$ and $\mu^\prime$ is defined as
  \[
  d_{{\rm Hell}}(\mu, \mu^\prime) = \left(\frac{1}{2} \int_{U} \left(\sqrt{\frac{\dd \mu}{\dd \nu}} - \sqrt{\frac{\dd \mu^{\prime}}{\dd \nu}}\right)^2 \dd \nu \right)^{\frac{1}{2}}
  \]
for any measure $\nu$ with respect to which $\mu$ and $\mu^\prime$ are
absolutely continuous; the Hellinger distance is, however, independent
of which reference measure $\nu$ is chosen. We have the following: 

\begin{thm} 
\label{t:main}
Assume that the least squares function $\Phi: U \times \R^J
\to \R$ and the probability measure $\mu_0$ on the measure space
$(U,\Sigma)$ satisfy Assumptions \ref{a:1}.  Then 
$\mu^y \ll \mu_0$ with Radon-Nikodym derivative
  \be\label{posterior}
  \frac{\dd \mu^y}{\dd \mu_0} = \frac{1}{Z} \exp(-\Phi(u;y))
  \en
  whera, for $y$ almost surely,
  \[
  Z := \int_U \exp(-\Phi(u; y))\mu_0 (du) > 0.
\]
Furthermore $\mu^y$ is locally Lipschitz with respect to $y$, in the Hellinger distance:
for all $y, y^\prime$ with $\max\{|y|_\Gamma, |y^\prime|_\Gamma\} < r$, there exists a $C = C(r) > 0$ such that
  \[
  d_{{\rm Hell}}(\mu^y, \mu^{y^\prime}) \leq C|y - y^\prime|_\Gamma.
  \]
This implies that, for all $f \in L^2_{\mu_0}(U;S)$ for separable
Banach space $S$,
$$\|\E^{\mu^y} f(u)-\E^{\mu^{y^\prime}} f(u)\|_{S} \le C|y-y^\prime|.$$
\end{thm}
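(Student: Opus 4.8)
The plan is to establish the three assertions in turn: existence of $\mu^y$ together with the Radon--Nikodym formula \eref{posterior}, the Hellinger-Lipschitz dependence on $y$, and finally the corollary on posterior expectations. For the first assertion I would invoke the infinite-dimensional Bayes' theorem for inverse problems \cite{S10}. The data model $y=\G(u)+\eta$ with $\eta\sim\N(0,\Gamma)$ independent of $u$ makes the conditional law of $y$ given $u$ equal to $\N(\G(u),\Gamma)$, whose Lebesgue density is proportional to $\exp(-\Phi(u;y))$, so the joint law of $(u,y)$ on $(U\times\R^J,\Sigma\otimes\mathcal{B}(\R^J))$ is $\mu_0(\dd u)\,\N(\G(u),\Gamma)(\dd y)$; disintegrating against the $y$-marginal produces the claimed conditional. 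Two preliminaries are needed. Joint measurability of $\Phi$: for each fixed $u$ the map $y\mapsto\Phi(u;y)=\frac12|y-\G(u)|_\Gamma^2$ is continuous, while for each fixed $y$ Assumption \ref{a:1}(2) together with completeness of $(U,\Sigma,\mu_0)$ makes $u\mapsto\Phi(u;y)$ $\Sigma$-measurable, so $\Phi$ is a Carath\'eodory function and hence $\Sigma\otimes\mathcal{B}(\R^J)$-measurable. Positivity and finiteness of $Z$: Assumption \ref{a:1}(1) gives $0\le\Phi\le K(r)$ whenever $|y|_\Gamma<r$, so $e^{-K(r)}\le Z(y)=\int_U e^{-\Phi(u;y)}\,\mu_0(\dd u)\le 1$, yielding $Z(y)\in(0,\infty)$ for all such $y$ (in particular $y$-almost surely) and, crucially, the uniform lower bound $Z(y)\ge e^{-K(r)}$ reused below.

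For the Lipschitz estimate I would take $\nu=\mu_0$, write $\rho_y:=Z(y)^{-1/2}e^{-\frac12\Phi(u;y)}$, and, after inserting the intermediate term $Z(y)^{-1/2}e^{-\frac12\Phi(u;y')}$ and using $(a+b)^2\le 2a^2+2b^2$, split $\sqrt2\,d_{{\rm Hell}}(\mu^y,\mu^{y'})=\|\rho_y-\rho_{y'}\|_{L^2_{\mu_0}}$ into two contributions. The first is controlled by the elementary bound $|e^{-a/2}-e^{-b/2}|\le\frac12|a-b|$ on $[0,\infty)$ and Assumption \ref{a:1}(3), giving $Z(y)^{-1}\cdot\frac14 C(r)^2|y-y'|_\Gamma^2$, which stays uniformly finite because $Z(y)\ge e^{-K(r)}$. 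The second is $|Z(y)^{-1/2}-Z(y')^{-1/2}|^2 Z(y')$; here I first bound $|Z(y)-Z(y')|\le\int_U|\Phi(u;y)-\Phi(u;y')|\,\mu_0(\dd u)\le C(r)|y-y'|_\Gamma$ (using that $t\mapsto e^{-t}$ is $1$-Lipschitz on $[0,\infty)$ and Assumption \ref{a:1}(3)), then transfer this through the Lipschitz continuity of $t\mapsto t^{-1/2}$ on $[e^{-K(r)},1]$ and use $Z(y')\le 1$. Summing the two and taking square roots gives $d_{{\rm Hell}}(\mu^y,\mu^{y'})\le C(r)|y-y'|_\Gamma$.

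For the corollary, given $f\in L^2_{\mu_0}(U;S)$ I would factor $\rho_y^2-\rho_{y'}^2=(\rho_y-\rho_{y'})(\rho_y+\rho_{y'})$, where $\rho_y^2=\dd\mu^y/\dd\mu_0$, and apply the Bochner/Cauchy--Schwarz inequality to obtain $\|\E^{\mu^y}f-\E^{\mu^{y'}}f\|_S\le\bigl(\int_U\|f\|_S^2(\rho_y+\rho_{y'})^2\,\mu_0(\dd u)\bigr)^{1/2}\bigl(\int_U(\rho_y-\rho_{y'})^2\,\mu_0(\dd u)\bigr)^{1/2}$. The second factor equals $\sqrt2\,d_{{\rm Hell}}(\mu^y,\mu^{y'})$, already bounded in the previous step. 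For the first I use $(\rho_y+\rho_{y'})^2\le2(\rho_y^2+\rho_{y'}^2)$ together with the density bound $\rho_y^2=Z(y)^{-1}e^{-\Phi}\le e^{K(r)}$ to get $\int_U\|f\|_S^2(\rho_y+\rho_{y'})^2\,\mu_0(\dd u)\le 4e^{K(r)}\|f\|_{L^2_{\mu_0}}^2$, finite precisely because $f\in L^2_{\mu_0}$. Combining gives $\|\E^{\mu^y}f-\E^{\mu^{y'}}f\|_S\le C(r)|y-y'|_\Gamma$, and equivalence of norms on $\R^J$ removes the subscript.

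The main obstacle is the measurability step underpinning the first assertion. Because Assumption \ref{a:1}(2) supplies continuity of $\Phi(\cdot;y)$ only off a $\mu_0$-null set -- which is exactly the mechanism by which the discontinuity of the level set map $F$ is tolerated -- I cannot invoke the classical well-posedness theory that presumes everywhere-continuity of $\G$, and must instead establish measurability on the completed space $(U,\Sigma,\mu_0)$, via the Carath\'eodory argument above, before the disintegration and Bayes' theorem apply. Once this measurability and the two-sided control $e^{-K(r)}\le Z(y)\le 1$ are secured, the remaining estimates are the standard robustness computations carried out by the Lipschitz bookkeeping sketched above.
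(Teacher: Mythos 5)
Your proposal is correct, and it reaches the paper's conclusions by the same overall strategy---the Bayes' theorem of \cite{S13} for existence of $\mu^y$, the two-sided bound $e^{-K(r)}\le Z\le 1$, the bound $|Z-Z'|\le C(r)|y-y'|_\Gamma$, and the standard two-term Hellinger decomposition---but it differs in the one step where this theorem genuinely departs from the classical theory, namely the measurability of $\Phi$. The paper observes that $\Phi$ is continuous $\nu_0$-almost surely \emph{jointly} on $U\times\R^J$, where $\nu_0=\mu_0\otimes\Q_0$, and applies Lemma \ref{lem_asc} directly on the product space; this tacitly requires the product probability space to be complete, a point the paper glosses over. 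You instead apply the almost-sure-continuity mechanism only fiberwise---for each fixed $y$, Assumption \ref{a:1}(2) plus completeness of $(U,\Sigma,\mu_0)$ gives $\Sigma$-measurability of $\Phi(\cdot;y)$---and then combine it with continuity in $y$ and separability of $\R^J$ to get joint $\Sigma\otimes\B(\R^J)$-measurability via the Carath\'eodory-function theorem. This decomposition is arguably cleaner: it needs completeness only of $(U,\Sigma,\mu_0)$, which Appendix 2 constructs explicitly, and sidesteps the product-completion subtlety. The price is that your fiberwise claim ``almost-sure continuity plus completeness implies measurability'' is precisely the paper's Lemma \ref{lem_asc}, which your sketch asserts without proof; you should cite it or reproduce its short argument (write $S=\{u:\cF(u)>c\}$ as $(S\cap M)\cup(S\setminus M)$ with $M$ the continuity set, note $S\setminus M$ is null hence measurable by completeness, and $S\cap M=M\cap\bigcup_{v\in S\cap M}B_{\delta_v}(v)$ is measurable as the intersection of $M$ with an open set). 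A second, minor difference: for the final expectation bound you spell out the Cauchy--Schwarz computation with the density bound $Z^{-1}e^{-\Phi}\le e^{K(r)}$, whereas the paper defers this to \cite{S10} after remarking only that $\Phi\ge 0$ gives $f\in L^2_{\mu^y}$; your explicit version is valid and self-contained. Your remaining estimates---the inequality $|e^{-a/2}-e^{-b/2}|\le\frac12|a-b|$ on $[0,\infty)$, and the Lipschitz constant of $t\mapsto t^{-1/2}$ on $[e^{-K(r)},1]$, which plays the role of the paper's $Z^{-3}\vee(Z')^{-3}$ factor---all check out.
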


\begin{remarks}
\begin{itemize}

\item The {\em interpretation} of this result is very natural,
linking the Bayesian picture with least squares minimisation:
the posterior measure is large on sets where the least squares
function is small, and vice-versa, all measured relative to the
prior $\mu_0.$

\item The key technical advance in this theorem over existing theories
overviewed in \cite{S13} is that $\Phi(\cdot;y)$ is only continuous
$\mu_0-$almost surely; existing theories typically use that
$\Phi(\cdot;y)$ is continuous everywhere on $U$ and that $\mu_0(U)=1$;
these existing theories cannot be used in the level set inverse problem,
because of discontinuities in the level set map.
Once the technical Lemma \ref{lem_asc} has been 
established, which uses $\mu_0-$almost sure continuity to establish
measurability,
the {\em proof} of the theorem is a straightforward application of
existing theory; we therefore defer it to Appendix 1. 

\item What needs to be
done to {\em apply} this theorem in our level set context is to identify
the sets of discontinuity for the map $\G$, and hence $\Phi(\cdot;y)$,
and then construct prior measures $\mu_0$ for which these sets have measure 
zero.  We study these questions in general terms in the next two subsections,
and then, in the next section,
demonstrate two test model PDE inverse problems where
the general theory applies.

\item The {\em consequences} of this result are wide-ranging, and we
name the two primary ones: firstly we may apply the mesh-independent
MCMC methods overviewed in \cite{CRSW08} to sample the posterior
distribution efficiently; and secondly the well-posedness gives
desirable robustness which may be used to estimate the effect
of other perturbations, such as approximating $G$ by a numerical method,
on the posterior distribution \cite{S13}. 

\end{itemize}
\end{remarks}

\subsection{Discontinuity Sets of $F$}
\label{ssec:dis}

We return to the specific setting of the level set inverse
problem where $U=C(\OL{D};\R)$.
We first note that the level set map $F:U \to L^{\infty}(D;\R)$ is not
continuous except at points $u \in U$ where no level crossings
occur in an open neighbourhood. However as a mapping $F:U \to L^{q}(D;\R)$ for $q<\infty$
the situation is much better:

\begin{proposition}\label{thm_cont}
 For $u\in C(\OL{D})$ and $1 \leq q < \infty$, the level set map $F: C(\OL{D})\gt L^q(D)$ is continuous at $u$ if and only if $m(D^0_i) = 0$ for all $i=1,\cdots, n-1$.
\end{proposition}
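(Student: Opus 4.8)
The plan is to prove the two implications separately, both resting on a single observation: if we perturb $u$ by $v\in C(\OL{D})$, then $F(u+v)$ and $Fu$ can disagree at $x$ only when $u(x)$ and $(u+v)(x)$ fall in different cells $[c_{i-1},c_i)$ of the partition, and this forces some critical level $c_i$ to lie between the two values, whence $|u(x)-c_i|\le |v(x)| \le \|v\|_{C(\OL{D})}$. Writing $\delta = \|v\|_{C(\OL{D})}$, this gives the inclusion
\[
\{x\in D : (F(u+v))(x) \neq (Fu)(x)\} \subseteq \bigcup_{i=1}^{n-1}\{x\in D : |u(x)-c_i| \le \delta\} =: A_\delta.
\]
Since the values of $Fu$ lie in the finite set $\{\kappa_i\}$, the integrand $|F(u+v)-Fu|$ vanishes off $A_\delta$ and is bounded there by the constant $M := \max_{i\neq j}|\kappa_i - \kappa_j|$.

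For sufficiency, suppose $m(D^0_i)=0$ for every $i$. Then
\[
\|F(u+v)-Fu\|_{L^q(D)}^q \le M^q\, m(A_\delta) \le M^q \sum_{i=1}^{n-1} m\big(\{x : |u(x)-c_i|\le\delta\}\big).
\]
As $\delta \downarrow 0$ the sets $\{|u-c_i|\le\delta\}$ decrease to $D^0_i = \{u = c_i\}$; since $D$ is bounded and hence of finite Lebesgue measure, continuity of measure from above yields $m(\{|u-c_i|\le\delta\}) \to m(D^0_i)=0$. Thus the right-hand side tends to $0$ as $\delta \to 0$, which is precisely continuity of $F$ at $u$.

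For necessity, suppose $m(D^0_{i_0}) > 0$ for some $i_0 \in \{1,\dots,n-1\}$, and recall that on $D^0_{i_0} = \{u = c_{i_0}\}$ one has $Fu = \kappa_{i_0+1}$, since $c_{i_0}$ lies in the cell $[c_{i_0}, c_{i_0+1})$. I would then take the constant perturbations $v_k \equiv -1/k \to 0$ in $C(\OL{D})$. For $k$ large enough that $c_{i_0} - 1/k > c_{i_0-1}$ (automatic when $i_0 = 1$), the shifted value $c_{i_0} - 1/k$ lies in $[c_{i_0-1}, c_{i_0})$, so $F(u+v_k) = \kappa_{i_0}$ on all of $D^0_{i_0}$. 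Consequently
\[
\|F(u+v_k) - Fu\|_{L^q(D)}^q \ge |\kappa_{i_0} - \kappa_{i_0+1}|^q\, m(D^0_{i_0}) > 0,
\]
a bound independent of $k$, so $F(u+v_k) \not\to Fu$ and $F$ is discontinuous at $u$. This step uses that adjacent constants differ, $\kappa_{i_0} \neq \kappa_{i_0+1}$, the natural non-degeneracy hypothesis on the parameterization, which I would state explicitly.

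The two norm estimates are routine. The step requiring the most care is the set inclusion above together with the passage $m(A_\delta)\to\sum_i m(D^0_i)$: one must verify the elementary ``a critical level lies between the two values'' dichotomy cleanly and then invoke continuity of measure from above, which is exactly where finiteness of $m(D)$ (boundedness of $D$) enters. I do not expect the continuity of $u$ to be essential to the estimates themselves -- it only guarantees the cells are well defined and keeps $u+v_k$ inside $C(\OL{D})$ -- so the argument is at heart measure-theoretic.
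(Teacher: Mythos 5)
Your proof is correct and follows essentially the same route as the paper's: both directions rest on the observation that $F(u+v)$ and $Fu$ can disagree only on a set contained in shrinking neighbourhoods of the level sets $\{u=c_i\}$, and your necessity argument is exactly the paper's counterexample $u_\varepsilon = u-\varepsilon$. Your packaging of sufficiency (a single inclusion of the disagreement set into $A_\delta$ followed by continuity of measure from above, rather than the paper's pairwise decomposition over $D_{i,\varepsilon}\cap D_j$ with dominated convergence) is marginally cleaner, and you are right to flag that the ``only if'' direction tacitly needs adjacent constants to satisfy $\kappa_{i}\neq\kappa_{i+1}$ --- a non-degeneracy hypothesis the paper leaves implicit.
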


The proof is given in Appendix 1. For the inverse gravimetry problem
considered in the next section the space $X$ is naturally $L^{2}(D;\R)$ 
and we will be able to directly use the preceding proposition to establish
almost sure continuity of $F$ and hence $\G$.
For the groundwater flow inverse problem 
the space $X$ is naturally $L^{\infty}(D;\R)$
and we will not be able to use the proposition in this space to establish
almost sure continuity of $F$. However, we employ recent Lipschitz
continuity results for $G$ on $L^q(D;\R)$, $q < \infty$ 
to establish the almost sure continuity of $\G$ \cite{BDN03}.

\subsection{Prior Gaussian Measures}
\label{ssec:pri}

Let $D$ denote a bounded open subset of $\R^2$.
For our applications we will use the following two constructions
of Gaussian prior measures $\mu_0$ which are Gaussian $\N(0, \C_i), i=1,2$
on Hilbert function space $\HH_i, i=1,2$.
\begin{itemize}
 \item $\N(0, \C_1)$ on
 \[
 \HH_1 := \{u:D\gt \R\ |\ u\in L^2(D;\R)), \int_{D} u(x) \dd x = 0\},
 \]
 where
  \begin{equation}\label{eq:cova1}
\C_1 = \mathcal{A}^{-\alpha} \qquad  \textrm{and} \qquad \mathcal{A} := -\Delta
\end{equation}
 with domain\footnote{Here $\nu$ denotes the outward normal.}
 \[
 D(\mathcal{A}) := \{u: D\gt \R \ |\ u\in H^2(D;\R), \nabla u \cdot \nu = 0 \textrm{ on } \PAR{D} \textrm{ and } \int_D u(x) \dd x = 0\}.
 \]

 \item $\N(0, \C_2)$ on $\HH_2 := L^2(D;\R)$ with $\C_2:\HH_2 \gt \HH_2$ being the integral operator
  \begin{equation}\label{eq:cova2}
 \C_2 \phi(x) = \int_{D}c(x, y)\phi (y)\dd y \ \textrm{ with } c(x, y) = \exp\left(-\frac{|x - y|^2}{L^2}\right)
\end{equation}
   \end{itemize}
In fact, in the inverse model arising from groundwater flow studied
in \cite{ILS13geosciences,ILS13}, the Gaussian measure $\N(0, \C_1)$ was taken as the prior measure for the logarithm of the permeability. On the other
hand the Gaussian measure $\N(0,\C_2)$ is widely used to model the earth's subsurface \cite{ORL08} as
draws from this measure generate smooth random functions in which the
parameter $L$ sets the spatial correlation length.
For both of these measures it is known that, under suitable conditions
on the domain $D$, draws are almost surely in $C(\OL{D};\R)$; see
\cite{S13}, Theorems 2.16 and 2.18 for details.

Assume that $\alpha > 1$ in \eref{eq:cova1}, then the Gaussian random function with measure $\mu_0$ defined in either case above 
has the property that, for $U:=C(\OL{D};\R)$, $\mu_0(U)=1$. Since $U$ is a 
separable Banach space $\mu_0$ can be redefined as a Gaussian measure on $U$.
Furthermore it is possible to define the appropriate $\sigma$-algebra $\Sigma$
in such a way that $(U, \Sigma, \mu_0)$ is a complete probability space; 
for details see Appendix 2.  We have the following, 
which is a subset of what is proved in Proposition \ref{thm_lsm}.

\begin{proposition}\label{pro_lsm}
Consider a random function $u$ drawn from one of the Gaussian probability
measures $\mu_0$ on $U$ given above. 
Then $m(D^0_i) = 0, \,\mu_0$-almost surely, for $i=1,\cdots, n.$
\end{proposition}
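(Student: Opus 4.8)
The plan is to show $\E\, m(D^0_i)=0$ for each $i$; since $m(D^0_i)\ge 0$ this forces $m(D^0_i)=0$ $\mu_0$-almost surely. By definition $D^0_i=\{x\in D : u(x)=c_i\}$, so $m(D^0_i)=\int_D \I_{D^0_i}(x)\dd x$. Because $\mu_0$ lives on $U=C(\OL D)$, the evaluation map $(x,\omega)\mapsto u(x)=\omega(x)$ is (jointly) continuous, hence Borel measurable on $D\times U$; applying Tonelli's theorem to the nonnegative integrand gives
\[
\E\, m(D^0_i)=\int_D \PP\bigl(u(x)=c_i\bigr)\dd x.
\]
Everything then reduces to showing that the inner probability vanishes for Lebesgue-almost every $x\in D$.

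For fixed $x$ the real random variable $u(x)$ is a centred Gaussian with variance $\sigma^2(x)=c(x,x)$, the covariance kernel on the diagonal, and a centred Gaussian carries no point mass as soon as its variance is strictly positive; thus I only need $\sigma^2(x)>0$ for a.e.\ $x$. For the prior $\N(0,\C_2)$ this is immediate, since $c(x,x)=\exp(0)=1$ for every $x$. For $\N(0,\C_1)$ I would use the Karhunen--Lo\`eve expansion: writing $(\lambda_k,\phi_k)_{k\ge 1}$ for the eigenpairs of $\mathcal{A}$ on $\HH_1$, one has $\sigma^2(x)=\sum_{k\ge 1}\lambda_k^{-\alpha}\phi_k(x)^2$, a sum of nonnegative terms. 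Hence $\sigma^2(x)=0$ forces $\phi_1(x)=0$, so $\{x:\sigma^2(x)=0\}\subseteq\{x:\phi_1(x)=0\}$. Since the first nonconstant Neumann eigenfunction $\phi_1$ is real-analytic in $D$ (and nonconstant, as $\lambda_1>0$), its nodal set has zero Lebesgue measure, and therefore $\sigma^2(x)>0$ for a.e.\ $x\in D$.

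Combining the two displays, $\PP(u(x)=c_i)=0$ for a.e.\ $x\in D$, whence $\E\, m(D^0_i)=0$ and $m(D^0_i)=0$ $\mu_0$-almost surely, for every $i=1,\dots,n$ (the case $i=n$ being trivial, as $c_n=\infty$ makes $D^0_n$ empty).

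The step I expect to require the most care is the strict positivity of the diagonal variance $\sigma^2(x)$ for the prior $\C_1$; the squared-exponential case is trivial and the Tonelli interchange is routine once joint measurability of the continuous field is recorded. For $\C_1$ one must also know that the diagonal $c(x,x)$ is finite and that termwise evaluation of the series is legitimate, both of which are ensured by the assumption $\alpha>1$ (which gives the embedding of $D(\mathcal{A}^{\alpha})$ into $C(\OL D)$ in two dimensions); with that in hand, the nodal-set argument reduces the a.e.-positivity to the standard fact that a nonconstant real-analytic eigenfunction vanishes only on a Lebesgue-null set.
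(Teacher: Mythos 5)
Your proof is correct, and its core computation is the same as the paper's (Proposition \ref{thm_lsm}(i) in Appendix 2): interchange the integral over $D$ with the expectation and use that the Gaussian marginal $u(x)$ has no atom at $c_i$. Two points of genuine difference are worth recording. First, the measurability bookkeeping: the paper works on an auxiliary coefficient space $(\Omega,\mathscr{F},\PP)$, proves the statement there, and transfers it to $(U,\Sigma,\mu_0)$ via the pushforward, which requires the separate Lemma \ref{lem_meas1} showing that the superlevel sets $A_t=\{u: m(\{u=c\})\ge t\}$ are closed in $U$. You instead get measurability of $u\mapsto m(D^0_i)$ in one stroke from the joint continuity of the evaluation map on $D\times U$ together with Tonelli; this is cleaner and works directly on $U$, though you should make sure you are measuring with respect to the completed $\sigma$-algebra $\Sigma$ the paper uses (Borel measurability suffices, so this is harmless). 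Second, and more substantively, you are more careful than the paper on the atomless step: the paper simply asserts that a real-valued Gaussian random variable assigns zero probability to any point, which fails for a degenerate marginal when $c_i$ coincides with the mean (here $0$). Your verification that the diagonal variance $c(x,x)$ is strictly positive for a.e.\ $x$ --- trivially for $\C_2$, and via the termwise KL series plus the measure-zero nodal set of a (real-analytic, nonconstant) Neumann eigenfunction for $\C_1$ --- closes a gap the paper leaves open, at the mild cost of the extra hypotheses you flag (convergence of the diagonal series, which the paper's Appendix 2 assumptions $\alpha>1$ and $\sup_k\|\phi_k\|_\infty<\infty$ supply, and connectedness of $D$ so the nodal-set argument applies).
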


This, combined with Proposition \ref{thm_cont}, is the key to
making a rigorous well-posed formulation of Bayesian level set inversion.
Together the results show that priors may be constructed for which
the problematic discontinuities in the level set map are probability
zero events. In the next section we demonstrate how the theory may be applied,
by consider two examples.

\section{Examples}
\label{sec:exa}

\subsection{Test Model 1 (Inverse Potential Problem)}
\label{ssec:tm1}

Let $D \subset \R^2$ be a bounded open set with Lipschitz boundary.
Consider the PDE
\begin{equation}\label{eq:source}
\Delta p = \kappa \quad \quad \textrm{ in } D, \quad \quad p = 0  \quad \quad  \textrm{ on } \PAR{D}. 
\end{equation}
If $\kappa \in X:=L^2(D)$ it follows that
there is a unique solution $p\in H^1_0(D)$. 
Furthermore $\Delta p\in L^2(D)$, so that the Neumann trace can 
be defined in $V:=H^{-\frac{1}{2}}(\PAR{D})$ by the following 
Green's formula:
\[
\Bigl \langle \frac{\PAR{p}}{\PAR{\nu}}, \varphi \Bigr \rangle_{\PAR{D}} = \int_{D} \Delta p \varphi \dd x + \int_{D} \nabla p \nabla \varphi \dd x
\] 
for $\varphi \in H^{1}(D)$. Here $\nu$ is the unit outward normal vector on $\PAR{D}$ and $\langle \cdot, \cdot \rangle_{\PAR{D}}$ denotes the dual pairing on the boundary. 
We can then define the bounded linear map $G:X \to V$ by 
$G(\kappa) = \frac{\PAR{p}}{\PAR{\nu}}.$

Now assume that the source term $\kappa$ has the form
$$\kappa(x)=\I_{D_1}(x)$$
for some $D_1 \subseteq D$.
The inverse potential problem is to reconstruct the support $D_1$ 
from measurements of the Neumann data of $p$ on $\PAR{D}$. 
In the case where the Neumann data is measured everywhere on the
boundary $\PAR{D}$, and where the domain $D_1$ is assumed 
to be star-shaped with respect to its center of gravity, the
inverse problem has a unique solution; see \cite{I90, I06} for details
of this theory and see \cite{HR96, BO05} for discussion of numerical methods
for this inverse problem. We will study the underdetermined case
where a finite set of bounded linear functionals $\O_j: V \to \R$ are measured, 
noisily, on $\PAR{D}$:
\begin{equation}\label{eq:source2A}
y_j=\O_j\Bigl(\frac{\PAR{p}}{\PAR{\nu}}\Bigr)+\eta_j.
\end{equation}
Concatenating we have $y=(\O \circ G)(\kappa)+\eta.$
Representing the boundary of $D_1$ as the zero level set of function
$u \in U:=C(\overline{D};\R)$ we write the inverse problem 
in the form \eref{eq:IP2}:
\begin{equation}\label{eq:source2}
y=(\O \circ G \circ F)(u)+\eta.
\end{equation}
Since multiplicity and uncertainty
of solutions are then natural, we will adopt a Bayesian approach.

Notice that the level set map $F:U \to X$ is bounded: for all $u \in U$ we have
$\|F(u)\|_X \le {\rm Vol}(D):=\int_{D} dx.$
Since $G:X \to V$ and $\O: V \to \R^J$ are bounded linear maps
it follows that $\G=\O \circ G \circ F: U \to \R^J$ is bounded:
we have constant $C^+ \in \R^+$ such that, for all $u \in U$, 
$|\G(u)| \le C^+.$ From this fact Assumptions \ref{a:1}(i) and (iii)
follow automatically. 
Since both $G:X \to V$ and $\O:V \to \R^J$ are bounded, and hence continuous, 
linear maps, the discontinuity set of $\G$ is determined by the the 
discontinuity set of $F:U \to X.$ By Proposition \ref{thm_cont} this
is precisely the set of functions for which the measure of the level
set $\{u(x)=0\}$ is zero. By Proposition \ref{pro_lsm} this occurs with
probability zero for both of the Gaussian priors specified there and
hence  Assumptions \ref{a:1}(ii) holds with these priors. Thus
Theorem \ref{t:main} applies and we have a well-posed Bayesian
inverse problem for the level set function.

\vspace{0.1in}

\subsection{Test Model 2 (Discontinuity Detection in Groundwater Flow)}
\label{ssec:tm2}

Consider the single-phase Darcy-flow model given by 

\begin{equation}\label{sdarcy}
  -\nabla\cdot (\kappa \nabla p)   = f \quad   \textrm{ in } D, \\ 
 p   =  0   \quad  \textrm{ on } \PAR{D}.
\end{equation}
Here $D$ is a bounded Lipschitz domain in $\R^2$, $\kappa$ the real-valued
isotropic permeability function and $p$ the fluid pressure. The right hand side $f$ accounts for the source of groundwater recharge. 
Let $V=H^1_0(D;\R)$, $X=L^{\infty}(D;\R)$ and $V^*$ denote the dual
space of $V$. If $f \in V^*$ and $X^+:=\{\kappa \in X: {\rm essinf}_{x \in D}
\kappa (x) \ge \kappa_{\rm min}>0\}$ then $G: X^+ \mapsto V$ defined by $G(\kappa)=p$ is Lipschitz continuous and 
\be \label{eq_wellpos}
\|G(\kappa)\|_V=\|p\|_V \leq \|f\|_{V^\ast}/ \kappa_{\min}.
\en

We consider the practically useful situation
in which the permeability function $\kappa$ is modelled as piecewise
constant on different regions $\{D_i\}_{i=1}^n$ whose union comprise $D$;
this is a natural way to characterize heterogeneous subsurface structure
in a physically meaningful way. We thus have 
\[
 \kappa(x) = \sum_{i=1}^n \kappa_i \I_{D_i}(x)
\]
where $\{D_i\}_{i=1}^n$ are subsets of $D$ such that $\cup_{i=1}^n \OL{D_i} = \OL{D}$ and $D_i \cap D_j = \varnothing$, and where the
$\{\kappa_i\}_{i=1}^n$ are positive constants. 
We let $\kappa_{\rm min}=\min_i \kappa_i$.

Unique reconstruction of the permeability in some situations
is possible if the pressure $p$ is measured everywhere \cite{alessandrini1985identification,richter81}. 
The inverse problem of interest to us is to locate the discontinuity set
of the permeability from a finite set of
measurements of the pressure $p$. Such problems have
also 
been studied in the literature. For instance, the paper \cite{TC04} considers
the problem by using multiple level set methods in the framework of 
optimization; and  in \cite{ILS14}, the authors adopt a Bayesian approach to 
reconstruct the permeability function characterized by layered or 
channelized structures whose geometry can be parameterized finite 
dimensionally.  As we consider a finite set of noisy measurements
of the pressure $p$, in $V^\ast$, and the problem is underdetermined
and uncertain, the Bayesian approach is again natural. We make the
significant extension of \cite{ILS14} to consider arbitrary interfaces,
requiring infinite dimensional parameterization: we introduce
a level set parameterization of the domains $D_i$, as in 
\eref{eq_lss} and \eref{eq_lsf}.

Let $\O: V \to \R^J$ denote the collection of $J$ linear functionals
on $V$ which are our measurements.  Because of the estimate \eref{eq_wellpos}
it is straightforward to see that $\G=\O\circ G \circ F$ is bounded
as a mapping from $U$ into $\R^J$ and hence that Assumptions \ref{a:1}(i)
and (iii) hold; it remains to establish (ii). To that end, from now on we need slightly higher regularity on $f$. In particular, we assume that, 
for some $q > 2$, $f\in W^{-1}(L^q(D))$. Here the space 
$W^{-1}(L^q(D)) := ({W^{1,q^\ast}_0(D)})^\ast \subset V^\ast$ for $q^\ast$ 
and $q$ conjugate: $1/q + 1/q^\ast = 1$.  It is shown 
in \cite{BDN03} that there 
exits $q_0 > 2$ such that the solution of \eref{sdarcy} satisfies
$$\|\nabla p\|_{L^q(D)} \leq C \|f\|_{W^{-1}(L^q(D))}$$
for some $C < \infty$ provided $2\leq q < q_0$. We assume that such a $q$
is chosen. It then 
follows that $G$ is Lipschitz continuous from $L^r$ to $V$ 
where $r := 2q/ (q - 2) \in [2, \infty)$. To
be precise, let $p_i$ be the solution to the problem \eref{sdarcy} with $\kappa_i, i = 1,2$. Then the following is proved in \cite{BDN03}:  for any $q\geq 2$, 
$$\|p_1 - p_2\|_V \leq \frac{1}{\kappa_{\min}} \|\nabla p_1\|_{L^q(D)} \|\kappa_1 - \kappa_2\|_{L^r(D)}$$ 
provided $\nabla p_1\in L^q(D)$.

Hence $G: L^r(D) \gt V$ is Lipschitz under our assumption
that $f\in W^{-1}(L^q(D))$ for some $q \in (2,_0).$ . 
By viewing $F: U\gt L^r(D)$, it follows from Proposition \eref{thm_cont} and Proposition \eref{pro_lsm} that Assumptions \eref{a:1} (ii) holds with both Gaussian priors defined in subsection \ref{ssec:pri}. As a consequence Theorem \ref{t:main} also applies in the groundwater flow model. 

\section{Numerical Experiments}
\label{sec:num}

Application of the theory developed in subsection \ref{ssec:wpb} ensures that, for the choices of Gaussian priors discussed in subsection \ref{ssec:pri}, the posterior measure on the level set is well defined and thus suitable for numerical interrogation. In this section we display numerical experiments where we characterize the posterior measure by means of sampling with MCMC. In concrete we apply the preconditioned Crank-Nicolson (pCN) MCMC method explained
in \cite{CRSW08}. We start by defining this algorithm. 
Assume that we have a prior Gaussian measure $\N(0,\C)$ on the
level set function $u$ and a posterior measure $\mu^y$ given 
by \eref{posterior}. Define
$$a(u,v)=\min\{1,\exp\bigl(\Phi(u)-\Phi(v)\bigr)\}$$
and generate $\{u^{(k)}\}_{k \ge 0}$ as follows:

\begin{algorithm}[pCN-MCMC]\label{MCMC}~~

Set $k=0$ and pick $u^{(0)} \in X$
\begin{enumerate}

\item Propose $v^{(k)}=
\sqrt{(1-\beta^2)}u^{(k)}+\beta  \xi^{(k)},
\quad \xi^{(k)} \sim \N(0,\C)$.

\item Set $u^{(k+1)}=v^{(k)}$ with probability $a(u^{(k)},v^{(k)})$, independently of $(u^{(k)},\xi^{(k)})$.

\item Set $u^{(k+1)}=u^{(k)}$ otherwise.

\item $k \to k+1$ and return to (i).

\end{enumerate}
\end{algorithm}

Then the resulting Markov chain is reversible with respect to $\mu^y$ and,
provided it is ergodic, satisfies
$$\frac{1}{K}\sum_{k=0}^K \varphi\bigl(u^{(k)}\bigr) \to \E^{\mu^y} \varphi(u)$$for a suitable class of test functions. Furthermore a central limit theorem 
determines the fluctuations around the limit, which are asymptotically
of size $K^{-\frac12}.$

\subsection{Aim of the experiments}
By means of the MCMC method described above we explore the Bayesian posterior of the level set function that we use to parameterize unknown geometry (or discontinuous model parameters) in the geometric inverse problems discussed in Section \ref{sec:exa}. The first experiment of this section concerns the inverse potential problem defined in subsection  \ref{ssec:tm1}. The second and third experiments are concerned with the estimation of geologic facies for the groundwater flow model discussed in subsection \ref{ssec:tm2}. The main objective of these experiments is to display the capabilities of the level set Bayesian framework to provide an estimate, along with a measure of its uncertainty, of unknown discontinuous model parameters in these test models. We recall that for the inverse potential problem the aim is to estimate the support $D_{1}$ of the indicator function $\kappa(x)=\I_{D_{1}}(x)$ that defines the source term of the PDE (\ref{eq:source}) given data/observations from the solution of this PDE. Similarly, given data/observations from the solution of the Darcy flow model (\ref{sdarcy}), we wish to estimate the interface between geologic facies $\{D_{i}\}_{i=1}^{n}$  corresponding to regions of different structural geology and which leads to a discontinuous permeability $\kappa(x)=\sum_{i=1}^{n}\kappa_i\I_{D_{i}}(x)$ in the flow model (\ref{sdarcy}). In both test models, we introduce the level set function merely as an artifact to parameterize the unknown geometry (i.e. $\ D_i = \{x\in D\ |\ c_{i-1} \leq u(x) < c_i\}$), or equivalently, the resulting discontinuous field $\kappa(x)$. The Bayesian framework applied to this level/
set parameterization then provides us with a posterior measure $\mu^y$ on the level set function $u$. The push-forward of $\mu^y$ under the level set map $F$ (\ref{eq_lsf}) results in a distribution on the discontinuous field of interest $\kappa$. 
This push-forward of the level set posterior $F^\ast \mu^y := \mu^y \circ F^{-1}$ comprises the statistical solution of the inverse problem which may, in turn, be used for practical applications.

A secondary aim of the experiments is to explore the role of the choice
of prior on the posterior. Because the prior is placed on the
level set function, and not on the model paramerers of direct interest,
this is a non-trivial question. To be concrete, the posterior depends on the Gaussian prior that we put on the level set. While the prior may incorporate our a priori knowledge concerning the regularity and the spatial correlation of the unknown geometry (or alternatively, the regions of discontinuities in the fields of interest) it is clear that such selection of the prior on the level set may have a strong effect on the resulting posterior $\mu^y$ and the corresponding push-forward $F^\ast \mu^y$. One of the key aspects of the subsequent numerical study is to understand the role of the selection of the prior on the level set
functions in terms of the quality and efficiency of the solution to the 
Bayesian inverse problem as expressed via the push-forward of the
posterior $F^\ast \mu^y$.

\subsection{Implementational aspects}

For the numerical examples of this section we consider synthetic experiments. The PDEs that define the forward models of subsection \ref{sec:exa} (i.e. expressions (\ref{eq:source})  and (\ref{sdarcy})) are solved numerically, on the unit-square, with cell-centered finite differences \cite{Arbogast}. In order to avoid inverse crimes \cite{KS05}, for the generation of synthetic data we use a much finer grid (size specified below) than the one of size $80\times 80$ used for the inversion via the MCMC method displayed in Algorithm \ref{MCMC}. 

The Algorithm \ref{MCMC} requires, in step (i), sampling of the prior. This is accomplished by parameterizing the level set function in terms of the Karhunen-Loeve (KL) expansion associated to the prior covariance operator $\C$ (See Appendix 2, equation (\ref{eq_rf})). Upon discretization, the number of eigenvectors of $\C$ equals the dimensions of the discretized physical domain of the model problems (i.e. $N=6400$ in expression (\ref{eq_rf2})). Once the eigendecomposition of $\C$ has been conducted, then sampling from the prior can be done simply by sampling an i.i.d set of random variables $\{\xi_{k}\}$ with $\xi_1 \sim \N(0,1)$ and using it in (\ref{eq_rf2}). For the present experiments we consider all these KL modes without any truncation. However, during the burn-in period 
(which here is taken to comprise $10^4$ iterations) of the MCMC method, we 
find it advantageous to freeze the higher KL modes and conduct the sampling only for the lower modes. After the aforementioned burn-in, the sampling is then carried out on the full set of KL modes. This mechanism enables the MCMC method to quickly reach an ``optimal'' state where the samples of the level set function provides a field $\kappa(x)$ that is close to the truth. However, once this optimal state has been reached, it is essential to conduct the sampling on the full spectrum of KL modes to ensure that the MCMC chain mixes properly. More precisely, if only the lowest modes are retained for the full chain, the MCMC may collapse into the optimal state but without mixing. Thus, while the lowest KL modes determine the main geometric structure of the underlying discontinuous field, the highest modes are essential for the proper mixing and thus the proper and efficient characterization of the posterior.


\subsection{Inverse Potential Problem}

In this experiment we generate synthetic data by solving (\ref{eq:source}), on a fine grid of size $240\times 240$ with the ``true'' indicator function $\kappa^{\dagger}=\I_{D_{1}^{\dagger}}$ displayed in Figure \Fref{Fig1} (top). The observation operator $\O=(\O_{1},\dots,\O_{64})$ is defined in terms of 64 mollified Dirac deltas $\{\O_{j}\}_{j=1}^{64}$ centered  at the measurement locations display as white squares along the boundary of the domain in \Fref{Fig1} (top).  Each coordinate of the data is computed by means of (\ref{eq:source2}) with $p$ from the solution of the PDE with the aforementioned true source term and by adding Gaussian noise $\eta_{j}$ with standard deviation of $10\%$ of the size of the noise-free measurements (i.e. of $\O_{j}(\frac{\partial p}{\partial \nu})$). We reiterate that, in order to avoid inverse crimes \cite{KS05}, we use a coarser grid of size $80\times 80$ for the inversion via the MCMC method (Algorithm \ref{MCMC}). The parameterization of $D_{1}$ in terms of the level set function is given by $\ D_1 = \{x\in D\ |  u(x) < 0\}$ (i.e. by simply choosing $c_{0}=-\infty$ and $c_{1}=0$ in (\ref{eq_lss})). 

For this example we consider a prior covariance $\C$ of the form presented in (\ref{eq:cova2}) for some choices of $L$ in the correlation function. We construct $\C$ directly from this correlation function and then we conduct the eigendecomposition needed for the KL expansion and thus for sampling the prior. In \Fref{Fig2} we display samples from the prior $\N(0,\C)$ on the level set function $u$ (first, third and fifth rows) and the corresponding indicator function $\kappa=\I_{D_{1}}$ (second, fourth and sixth rows) for (from left to right) $L=0.1, 0.15, 0.2, 0.3, 0.4$. Different values of $L$ in (\ref{eq:cova2}) clearly result in substantial differences in the spatial correlation of the zero level set associated to the samples of the level set function. The spatial correlation of the 
zero level set funtion, under the prior, has significant effect on $\I_{D_{1}}$ which we use as the right-hand side (RHS) in problem (\ref{eq:source}) and whose solution, via expression  (\ref{eq:source2}), determines the likelihood (\ref{Phi}). It then comes as no surprise that the posterior measure on the level set is also strong;y dependent on the choice of the prior via the parameter $L$. 
We explore this effect in the following paragraphs.

In \Fref{Fig3} we present the numerical results from different MCMC chains computed with different priors corresponding to the aforementioned choices of $L$. The MCMC mean of the level set function is displayed in the top row of \Fref{Fig3} for the choices (from left to right) $L=0.1, 0.15, 0.2, 0.3, 0.4$. We reiterate that although the MCMC method provides the characterization of the posterior of the level set function, our primary aim is to identify the field $\kappa(x)=\I_{D_{1}}(x)$ that determines the RHS of (\ref{eq:source}) by means of conditioning the prior $\N(0,\C)$ to noisy data from (\ref{eq:source2}). A straightforward estimate of such field can be obtained by mapping, via the level set map (\ref{eq_lsf}), the posterior mean level set function denoted by $\overline{u}$ into the corresponding field $F(\overline{u}(x))=\overline{\kappa}(x)=\I_{\overline{D}_{1}}(x)$ where $\overline{D}_1 = \{x\in D\ |  \overline{u}(x) < 0\}$. We display $\overline{\kappa}(x)=\I_{\overline{D}_{1}}(x)$ in the top-middle row of \Fref{Fig3} along with the plot of the true field $\kappa^{\dagger}=\I_{D_{1}^{\dagger}}$ (right column) for comparison. 

As mentioned earlier, we are additionally interested in the push-forward of the posterior measure of the level set function $u$ under the level set map (i.e. $(F^\ast \mu^y)(du)$). We characterize $F^\ast \mu^y$ by mapping under $F$ our MCMC samples from $\mu^{y}$. In \Fref{Fig3} we present the mean (bottom-middle) and the variance (bottom) of $F^\ast \mu^y$. \Fref{Fig4} shows some posterior (MCMC) samples $u$ of the level set function (first, third and fifth rows) and the corresponding level set map $F(u)=\I_{D_{1}}$ with $\ D_1 = \{x\in D\ |  u(x) < 0\}$ associated to these posterior samples (second, fourth and sixth rows).

The push-forward of the posterior measure under the level set map (i.e. $F^\ast \mu^y$) thus provides a probabilistic description of the inverse problem of identifying the true $\kappa^{\dagger}=\I_{D_{1}^{\dagger}}$. We can see from \Fref{Fig3} that, for some choices of $L$, the mean of $F^\ast \mu^y$ provides reasonable estimates of the truth. However, the main advantage of the Bayesian approach proposed here is that a measure of the uncertainty of such estimate is also obtained from $F^\ast \mu^y$. The variance  (\Fref{Fig3} bottom), for example, is a measure of the uncertainty in the location of the interface between the two regions $D$ and $D\setminus D_{1}$. 

The results displayed in \Fref{Fig3} show the strong effect that the selection of the prior has on the posterior measure $\mu^{y}$ and the corresponding 
pushforward measure $F^\ast \mu^y$. In particular, there seems to be a critical value $L=0.2$ above of which the corresponding posterior mean on $F^\ast\mu^{y}$ provides a reasonable identification of the true $\I_{D_{1}^{\dagger}}$ with relatively small variance. This critical value seems to be related to the size and the shape of the inclusions that determines the true region $D_{1}^{\dagger}$ (\Fref{Fig1} (top)). It is intuitive that posterior samples that result from very small spatial correlation cannot easily characterize these inclusions accurately
unless the data is overwhelmingly informative. The lack of a proper characterization of the geometry from priors associated with small $L$ is also reflected with larger variance around the interface. It is then clear that the capability of the proposed level set Bayesian framework to properly identify a shape $D_{1}^{\dagger}$ (or alternatively its indicator function $\I_{D_{1}^{\dagger}}$) depends on properly incorporating, via the prior measure,  a priori information on the regularity and spatial correlation of the unknown geometry of $D_{1}^{\dagger}$.

Since the selection of the prior has such a clear effect on the posterior, it comes as no surprise that it also affects the efficiency of the MCMC method as we now discuss. In the bottom-right panel of \Fref{Fig1} we show the autocorrelation function (ACF) of the first KL mode of the level set function from different MCMC chains with different priors corresponding to our different choices of correlation length $L$ in (\ref{eq:cova2}). The tunable parameters in the pCN-MCMC method are fixed for these experiments. We recall from \Fref{Fig3} that larger values of $L$ result in a mean level set whose corresponding indicator function better captures the spatial structures form the truth and with smaller variance around the interface. However, the larger the value of $L$ the slower the decay of the ACF. From these ACF plots, we note that even for the apparent optimal value of $L=0.3$, our MCMC method produces samples that are highly correlated and thus very long chains may be needed in order to produce a reasonable number of uncorrelated samples needed for statistical analysis. For this particular choice of $L=0.3$ we have conducted 50 multiple MCMC chains of length $10^6$ (after burn-in period) initialized from random samples from the prior. In \Fref{Fig1} (bottom-left) we show the Potential Scale Reduction Factor (PSRF)  \cite{Gelman}
computed from MCMC samples of the level set function (red-solid line) and the corresponding samples under $F$ (i.e. the $\I_{D_{1}}$'s) (blue-dotted line). We observe that the PSRF goes below 1.1 after  (often taken as an approximate 
indication of convergence  \cite{Gelman}); thus the
Gelman-Rubin diagnostic \cite{Gelman} based on the PSRF is passed for this selection of $L$. The generation of multiple independent MCMC chains that are statistically consistent opens up the possibility of using high-performance computing to enhance our capabilities of properly exploring the posterior. While we use a relatively small number of chains as a proof-of-concept, the MCMC chains are fully independent and so the computational cost of running multiple chains scales with the number of available processors.

The $5\times 10^7$ samples that we obtained from the $50$ MCMC chains are combined to provide a full characterization of the posterior $\mu^{y}$ on the level set and the corresponding push-forward $F^\ast \mu^y$ (i.e. The $\I_{D_{1}}$'s computed from $D_{1}$ with posterior samples $u$). We reemphasize that our aim is the statistical identification of $\I_{D_{1}}^{\dagger}$. Therefore, in order to obtain a quantity from the true $\I_{D_{1}^{\dagger}}$ against to which compare the computed push-forward of the level set posterior, we consider the Discrete Cosine Transform (DCT) of the true field $\I_{D}$. Other representations/expansions of the true field could be considered for the sake of assessing the uncertainty of our estimates with respect to the truth. In \Fref{Fig5}  we show the prior and posterior densities of the first DCT coefficients of $\I_{D_{1}}$ where $\ D_1 = \{x\in D\ |  u(x) < 0\}$ with $u$ from our MCMC samples (the vertical dotted line corresponds to the DCT coefficient of the true $\I_{D_{1}^{\dagger}}$). We can observe how the push forward posterior are concentrated around the true values. It is then clear how the data provide a strong conditioning on the first DCT coefficients of the discontinuous field that we aim at obtaining with our Bayesian level set approach.

\begin{figure}[htbp]
\begin{center}
\includegraphics[scale=0.4]{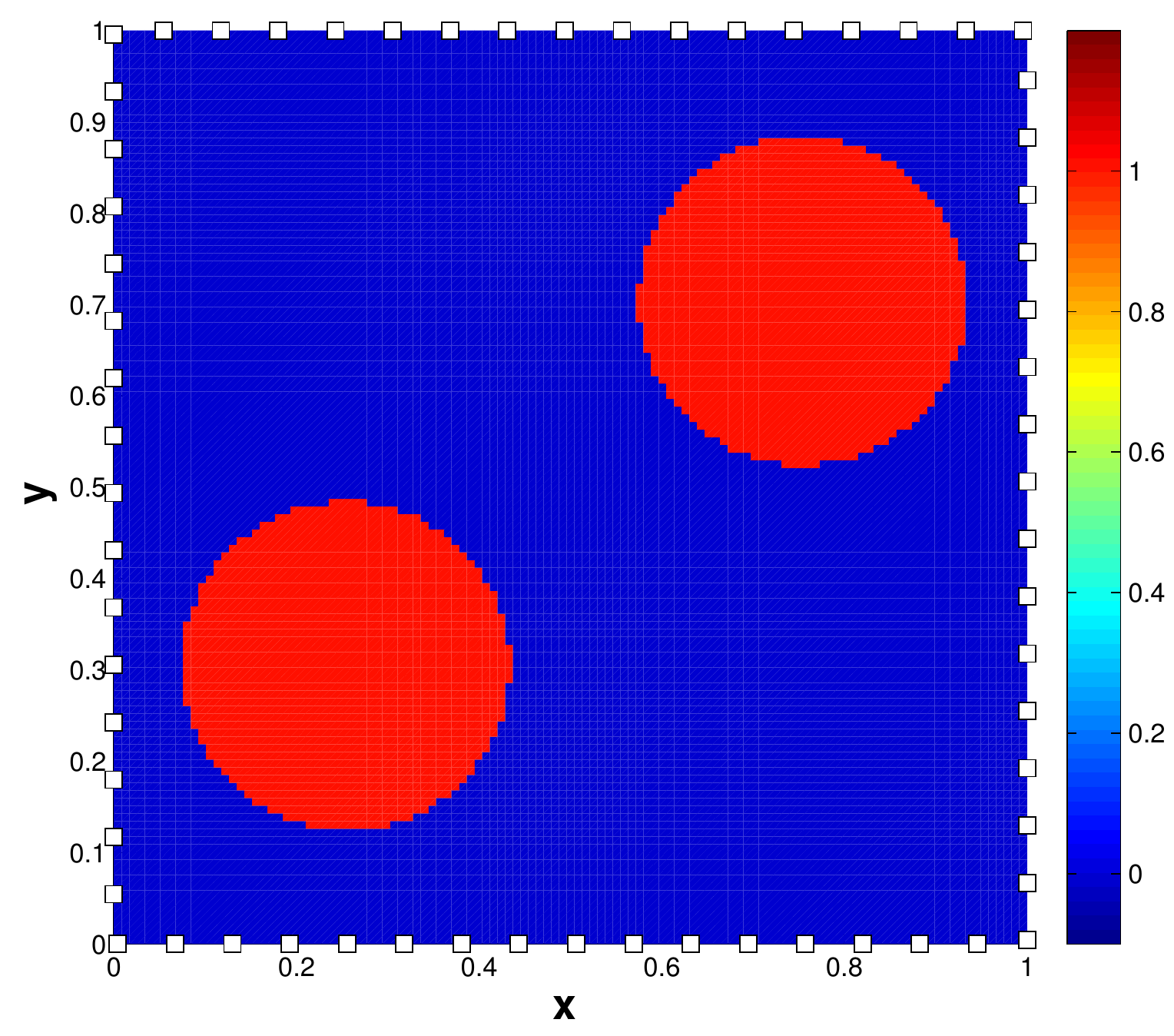}\\
\includegraphics[scale=0.4]{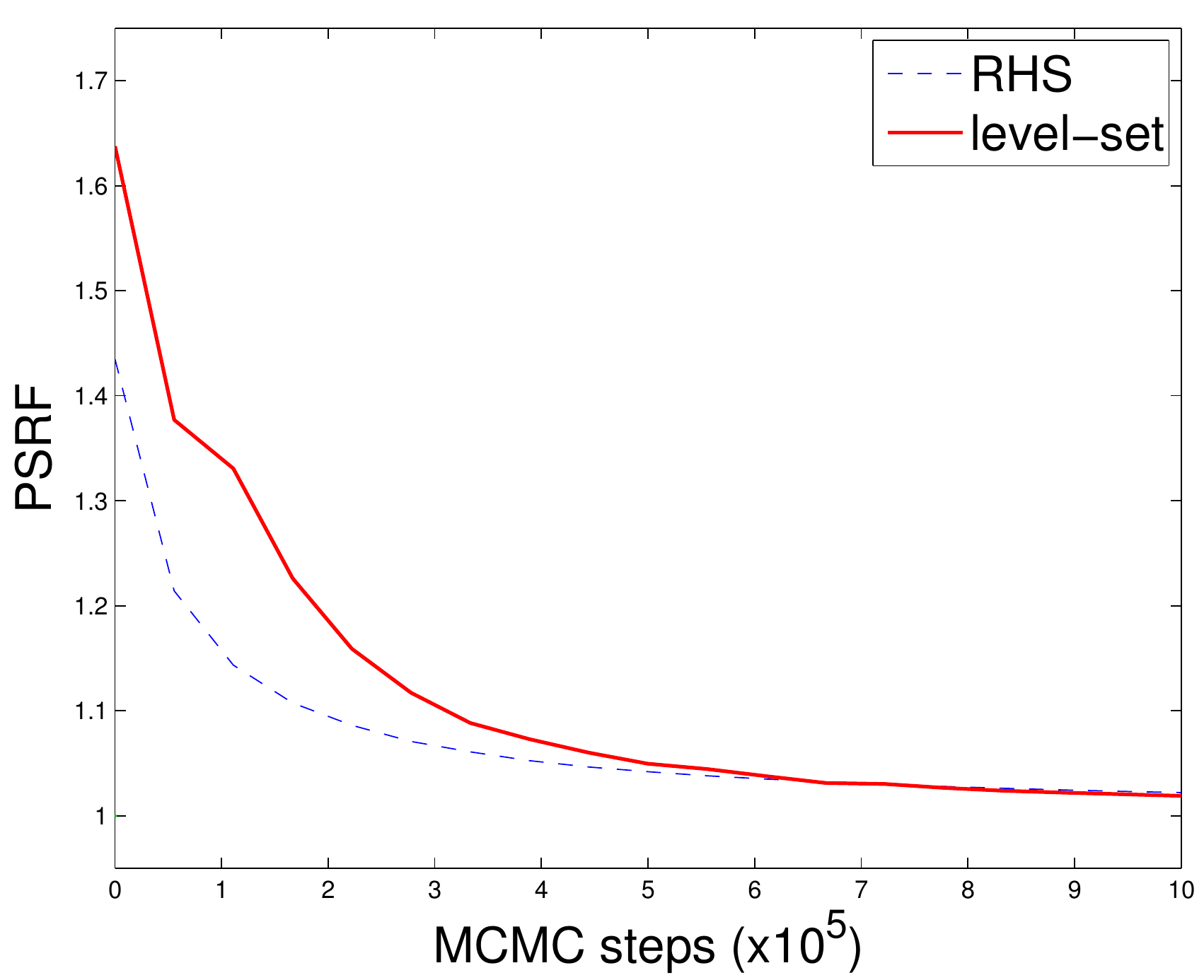}
\includegraphics[scale=0.4]{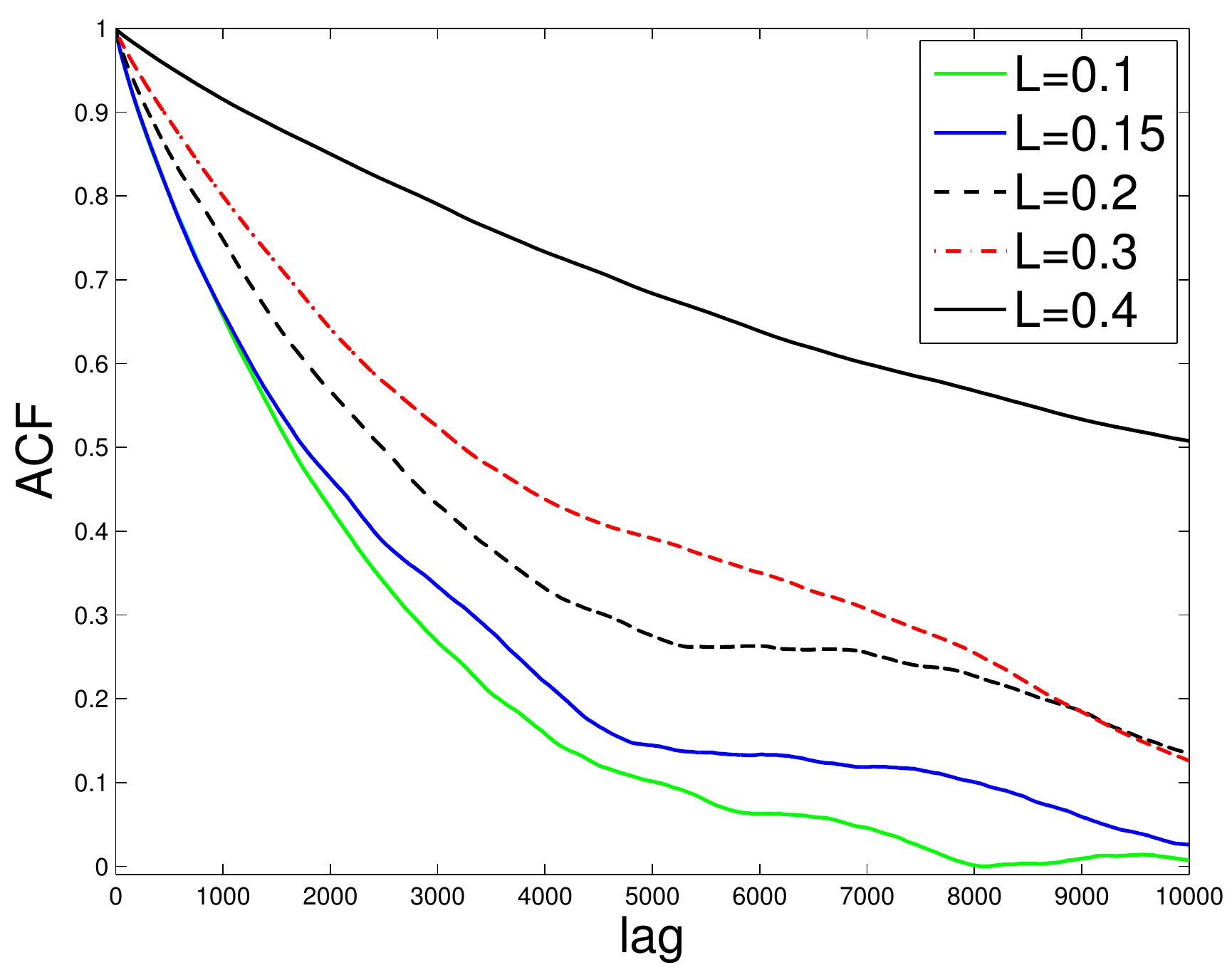}
 \caption{Inverse Potential. Top: True source term $\kappa^{\dagger}=\I_{D_{1}^{\dagger}}$ of eq. (\ref{eq:source}). Bottom-left: PSRF from multiple chains with $L=0.3$ in (\ref{eq:cova2}).  Bottom-right: ACF of first KL mode of the level set function from single-chain MCMC with different choices of $L$.}   \label{Fig1}
\end{center}
\end{figure}

\begin{figure}[htbp]
\begin{center}
  \includegraphics[scale=0.85]{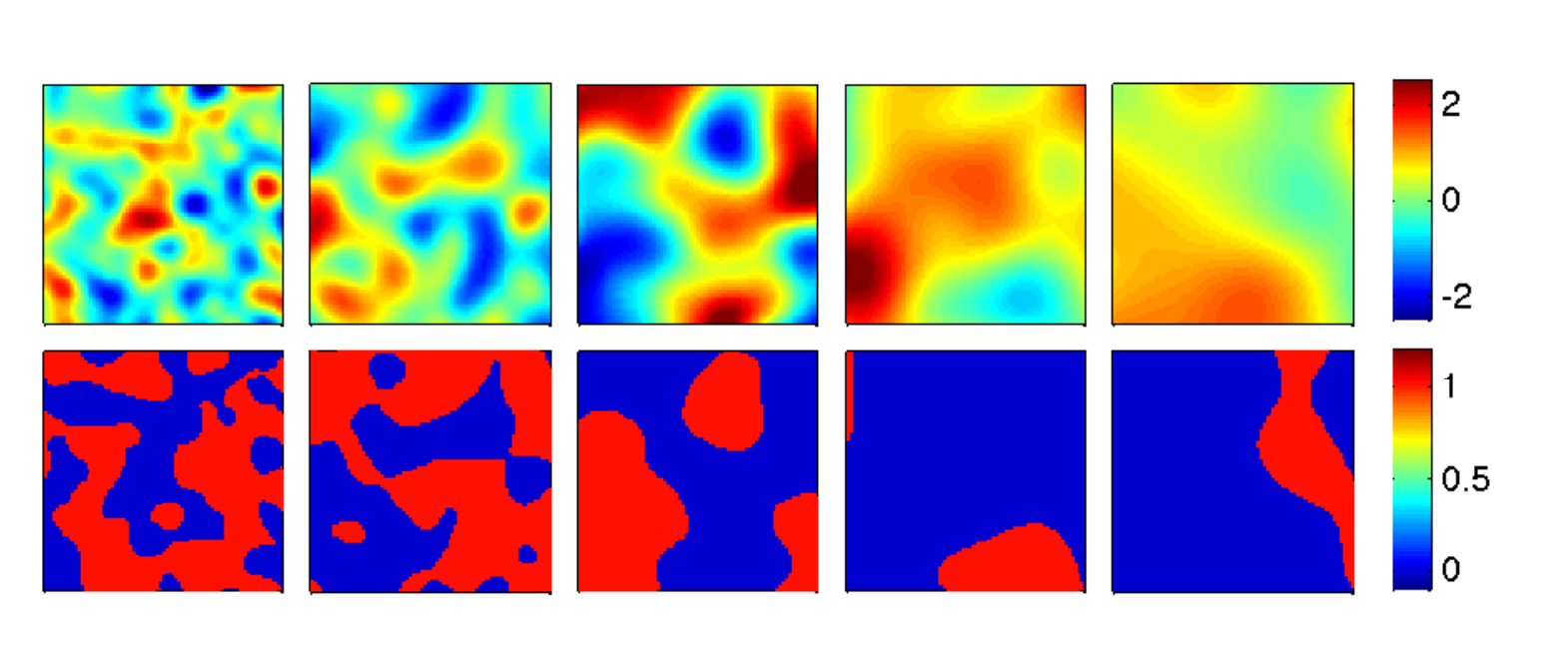}
\vskip-15pt
     \includegraphics[scale=0.85]{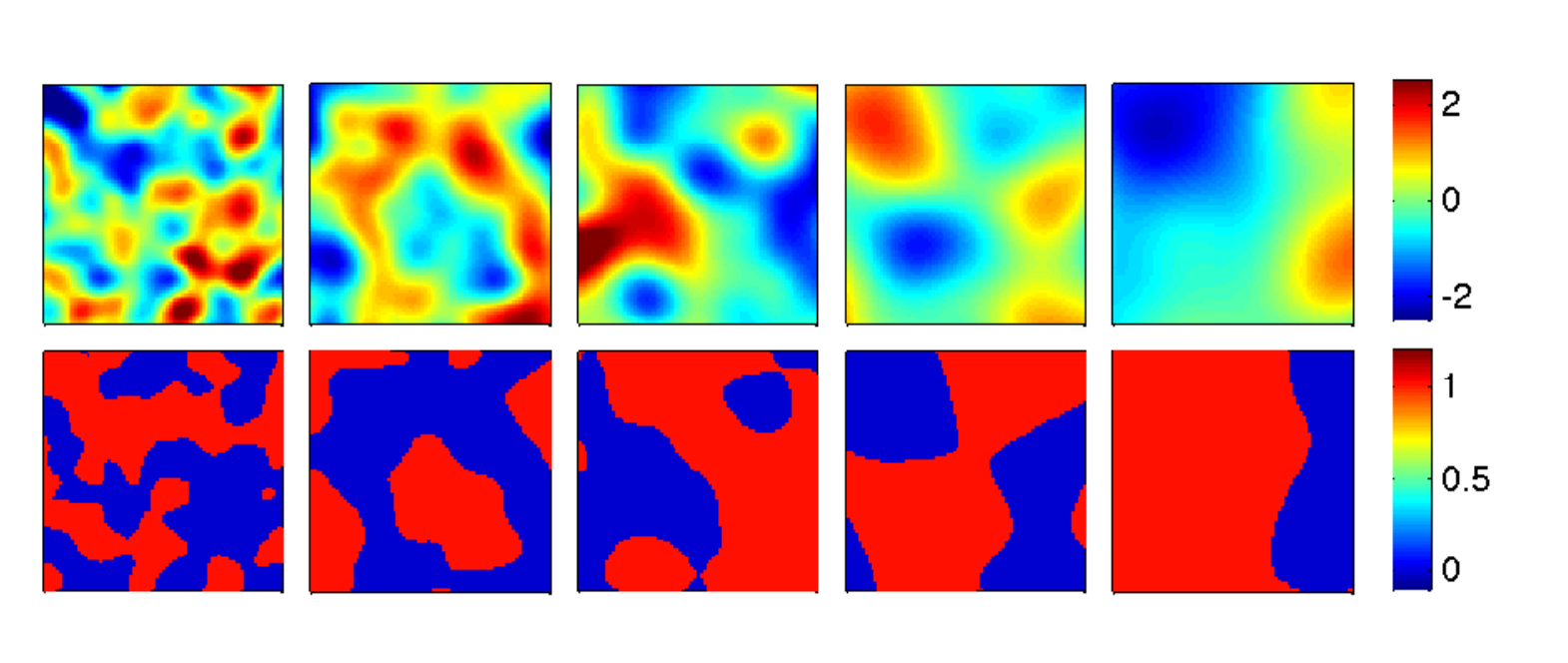}\\
\vskip-15pt
     \includegraphics[scale=0.85]{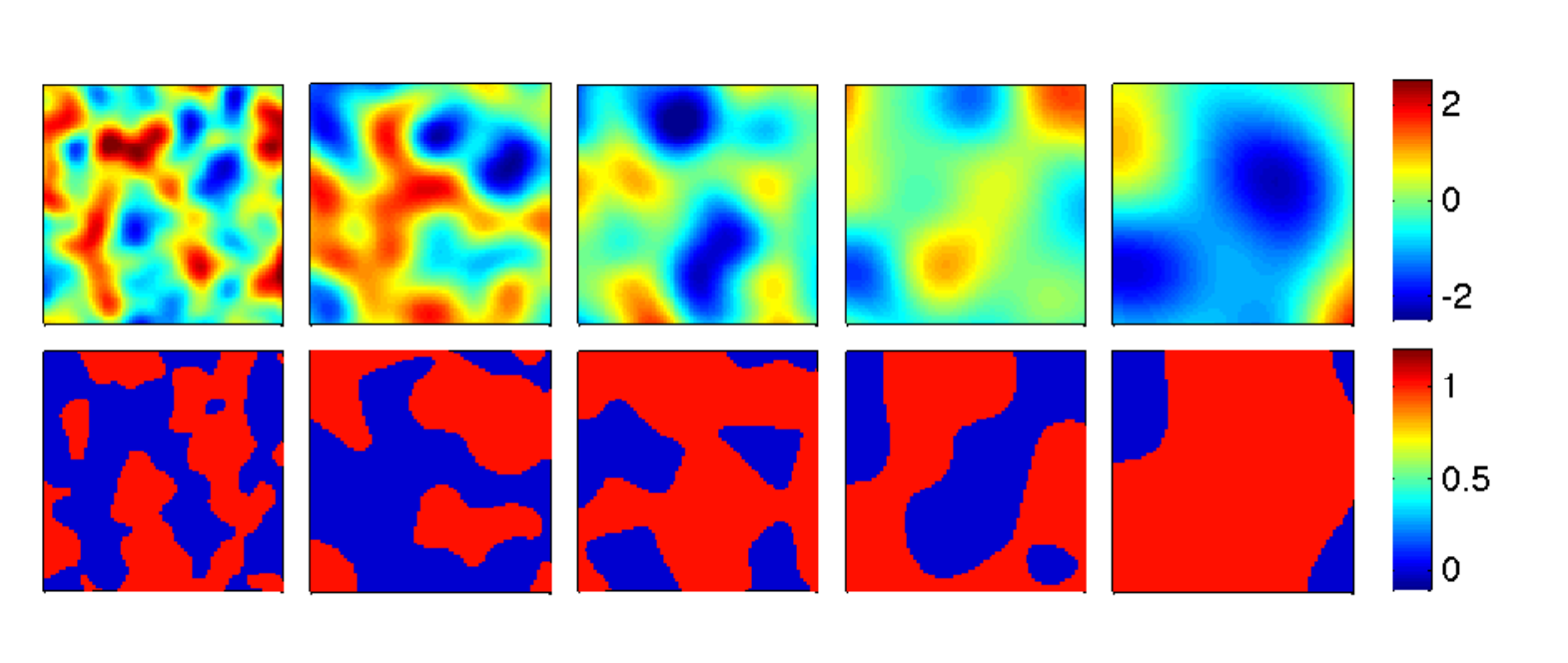}
 \caption{Inverse Potential. Samples from the prior on the level set function $u$ (first, third and fifth rows) for (from left to right) $L=0.1, 0.15, 0.2, 0.3, 0.4$. Corresponding $\I_{D_{1}}$ with $\ D_1 = \{x\in D\ |  u(x) < 0\}$ (second, fourth and sixth rows) associated to each of these samples from the level set function.}   \label{Fig2}
\end{center}
\end{figure}

\newpage

\begin{figure}[htbp]

\includegraphics[scale=1.1]{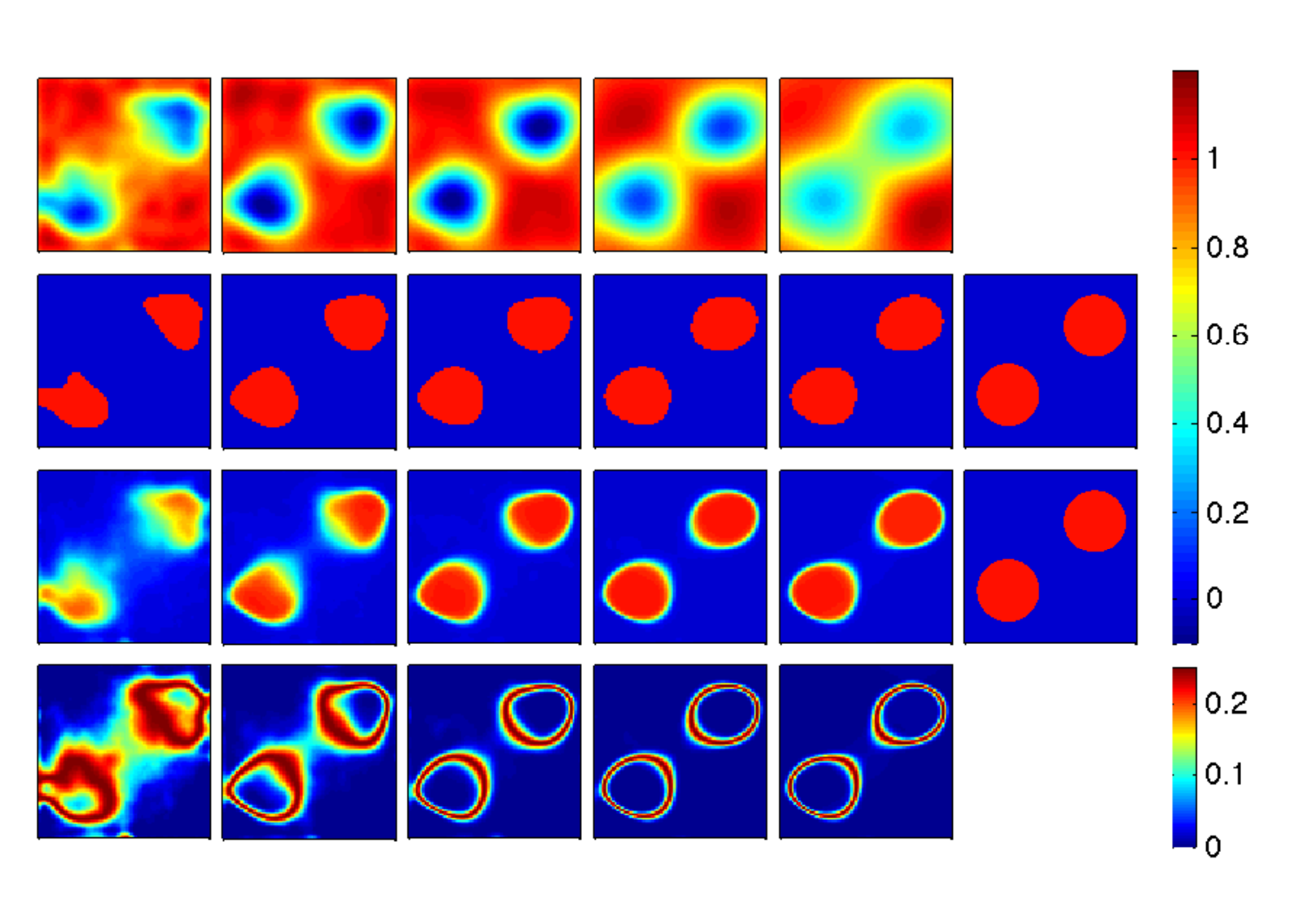}
 \caption{Inverse Potential. MCMC results for (from left to right) $L=0.1, 0.15, 0.2, 0.3, 0.4$  in the eq. (\ref{eq:cova2}). Top: Posterior mean level set function $\overline{u}$ (computed via MCMC). Top-middle: Plot of $\I_{\overline{D}_{1}}$ with $\overline{D}_1 = \{x\in D\ |  \overline{u}(x) < 0\}$ (the truth $\I_{D_{1}^{\dagger}}$ is presented in the right column). Bottom-middle: Mean of $\I_{D_{1}}$ where $\ D_1 = \{x\in D\ |  u(x) < 0\}$ and $u$'s are the posterior MCMC samples (the truth is presented in the right column). Bottom: Variance of $\I_{D_{1}}$ where $\ D_1 = \{x\in D\ |  u(x) < 0\}$ and $u$'s are the posterior MCMC samples }
   \label{Fig3}

\end{figure}

\begin{figure}[htbp]
\begin{center}
\includegraphics[scale=0.85]{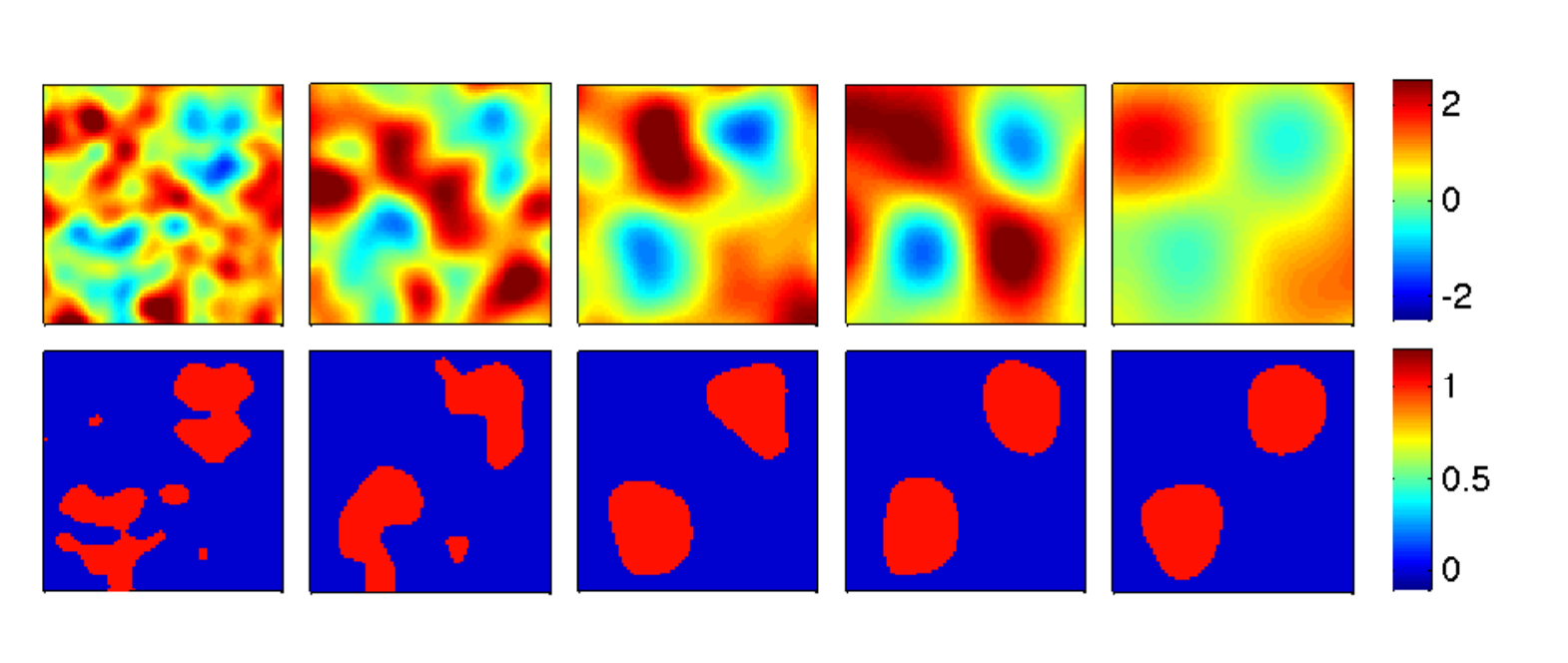}
\vskip-15pt
\includegraphics[scale=0.85]{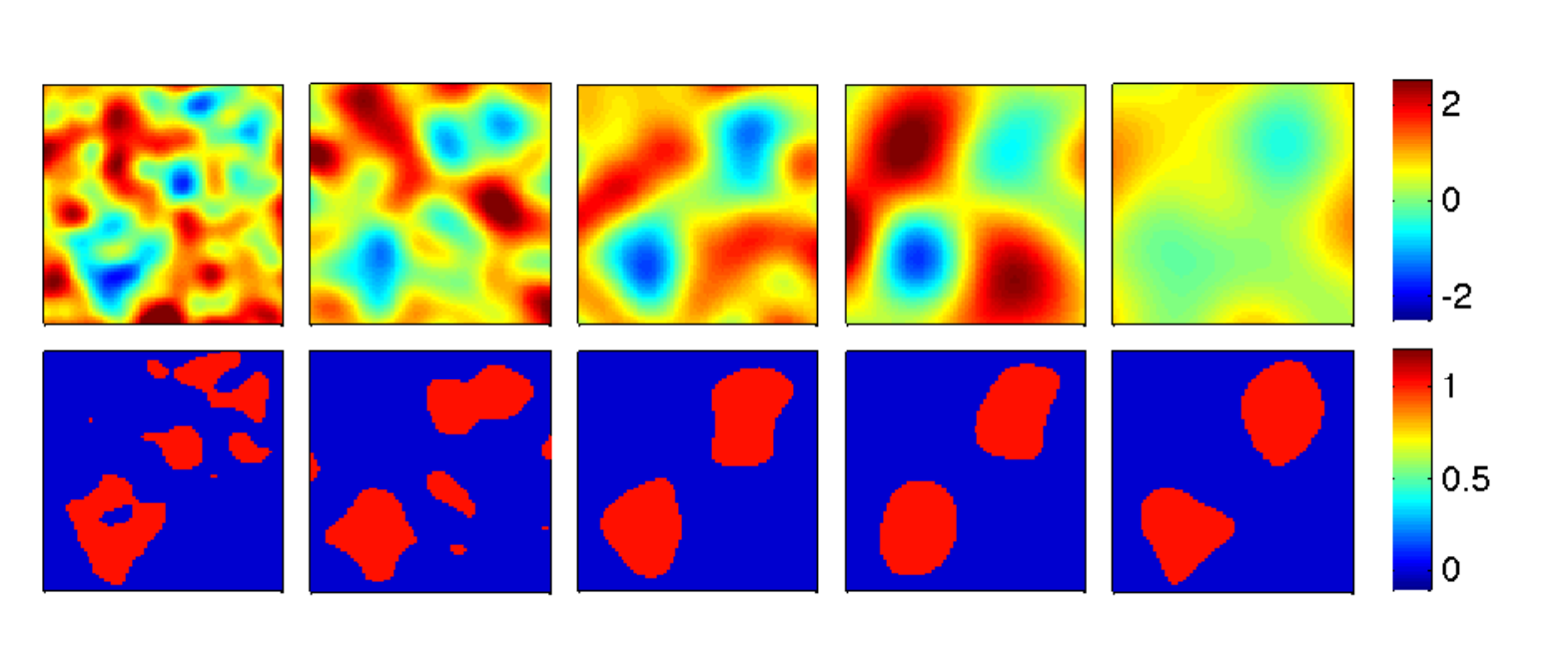}\\
\vskip-15pt
\includegraphics[scale=0.85]{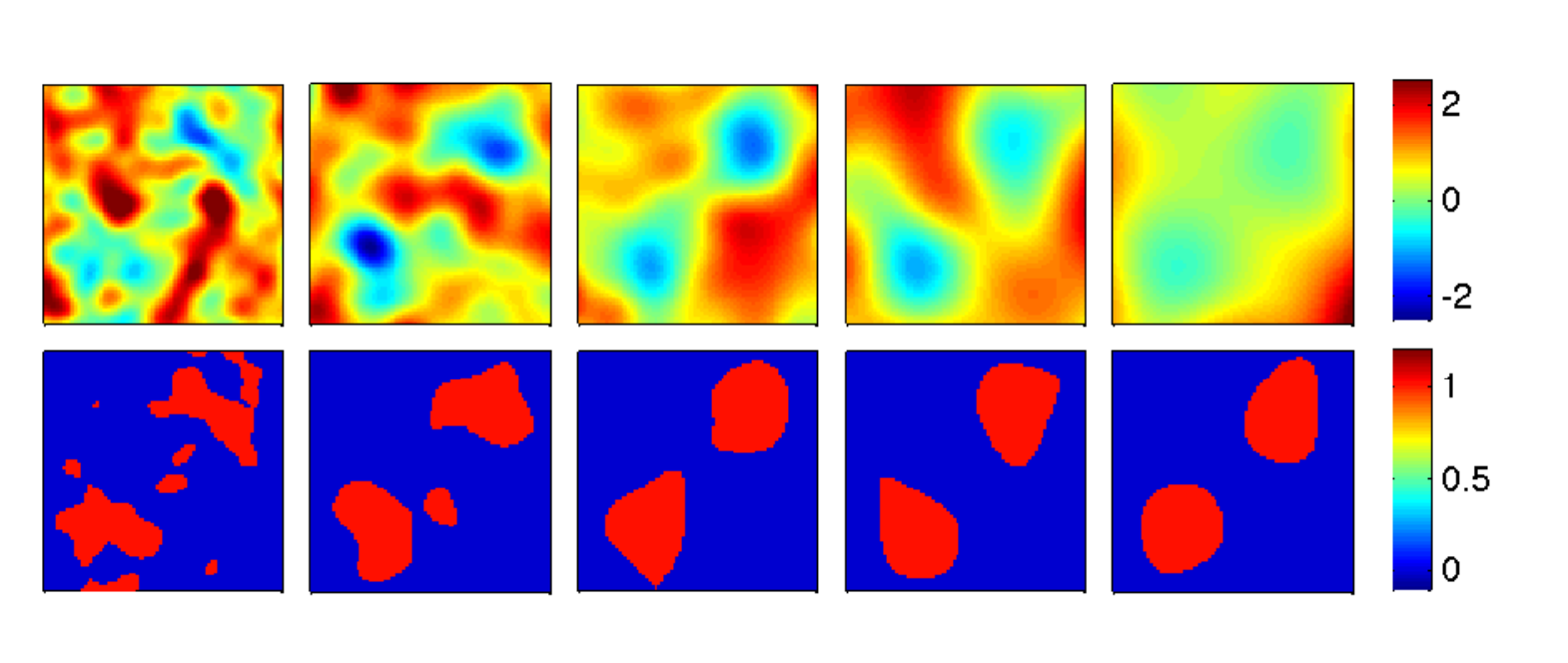}

 \caption{Inverse Potential. Samples from the posterior on the level set $u$ (first, third and fifth rows) for (from left to right) $L=0.1, 0.15, 0.2, 0.3, 0.4$. Corresponding $\I_{D_{1}}$ where $\ D_1 = \{x\in D\ |  u(x) < 0\}$ (second, fourth and sixth rows) associated to each of these samples from the level set function.}   \label{Fig4}
\end{center}
\end{figure}

\begin{figure}[htbp]
\begin{center}
\includegraphics[scale=0.25]{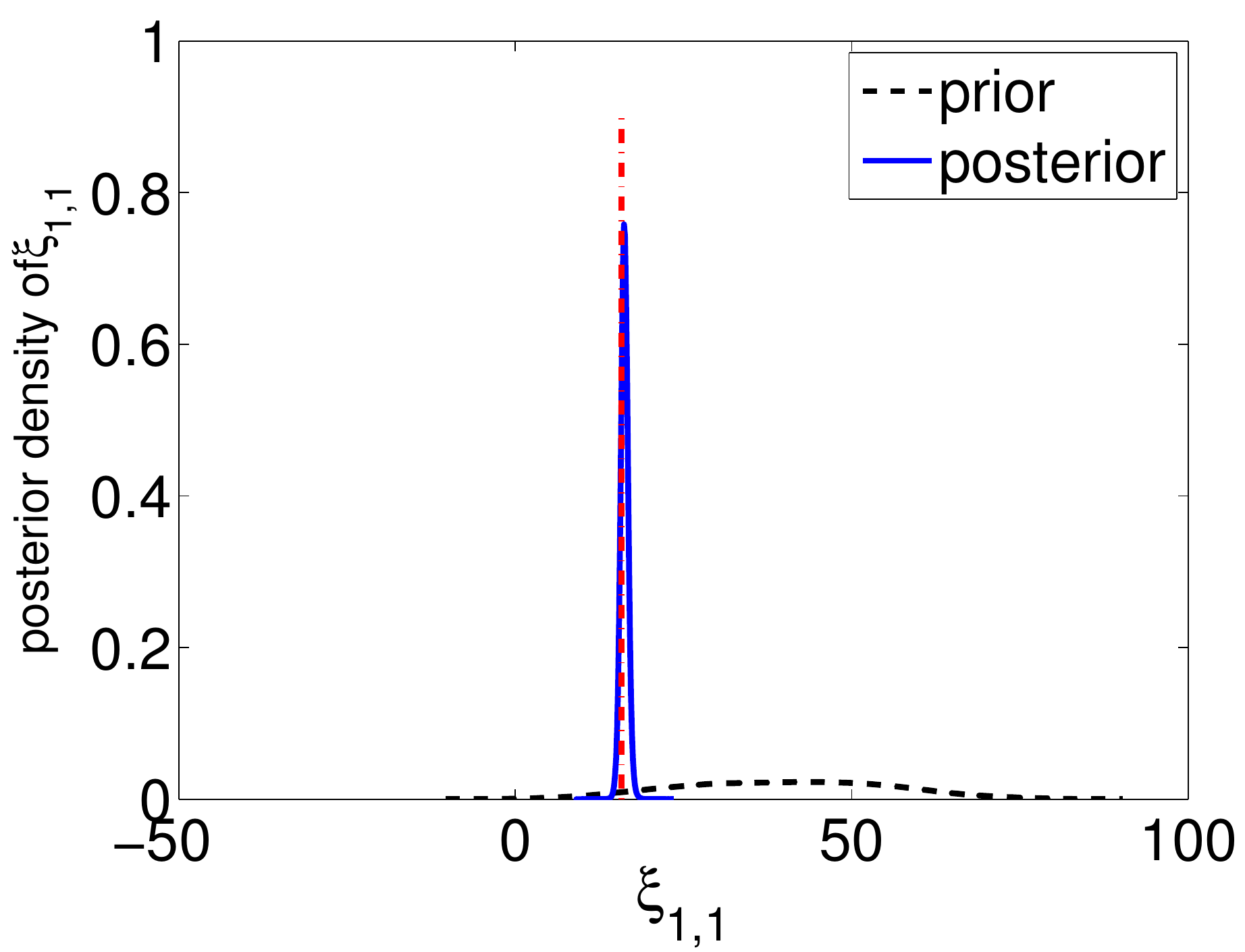}
\includegraphics[scale=0.25]{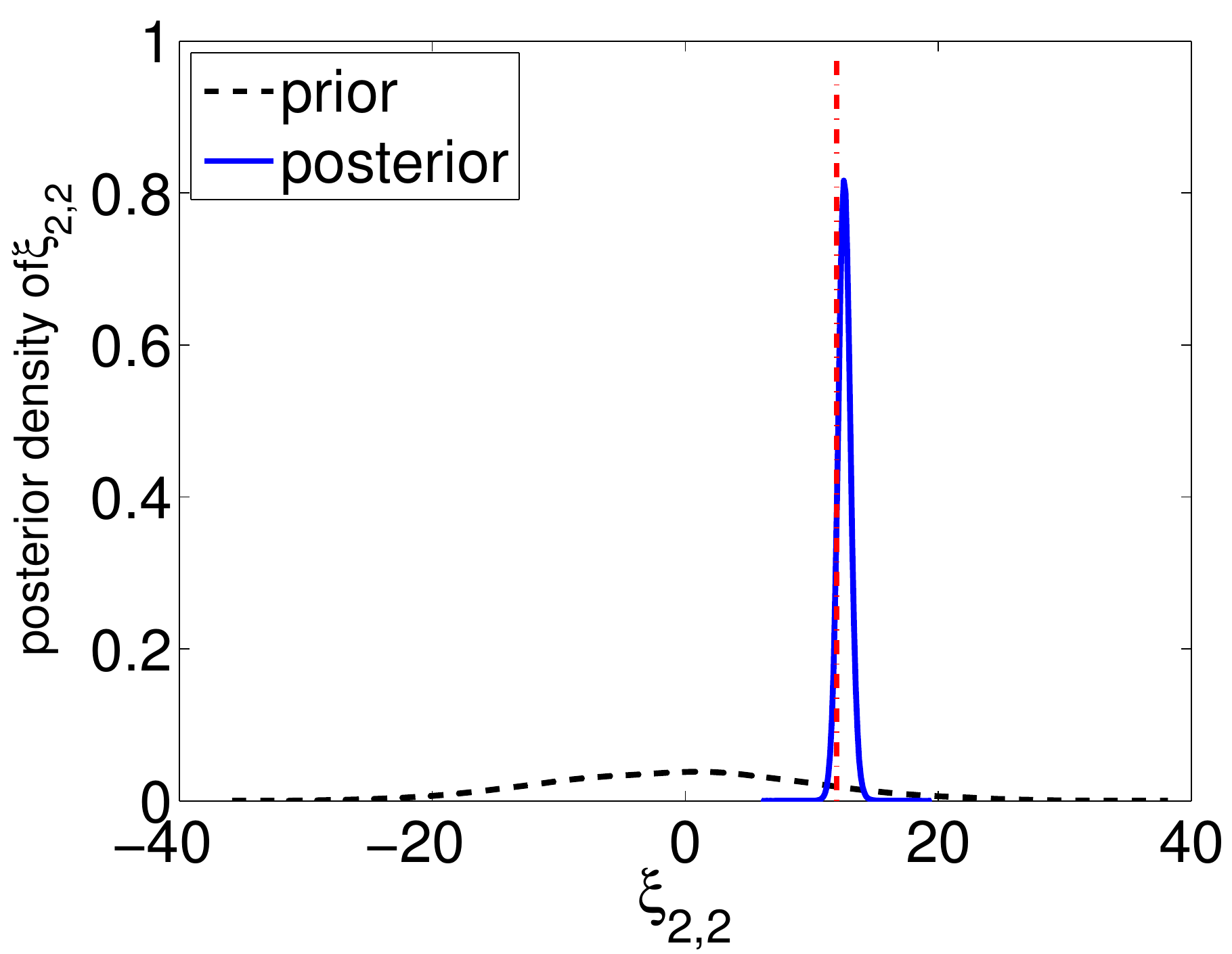}
\includegraphics[scale=0.25]{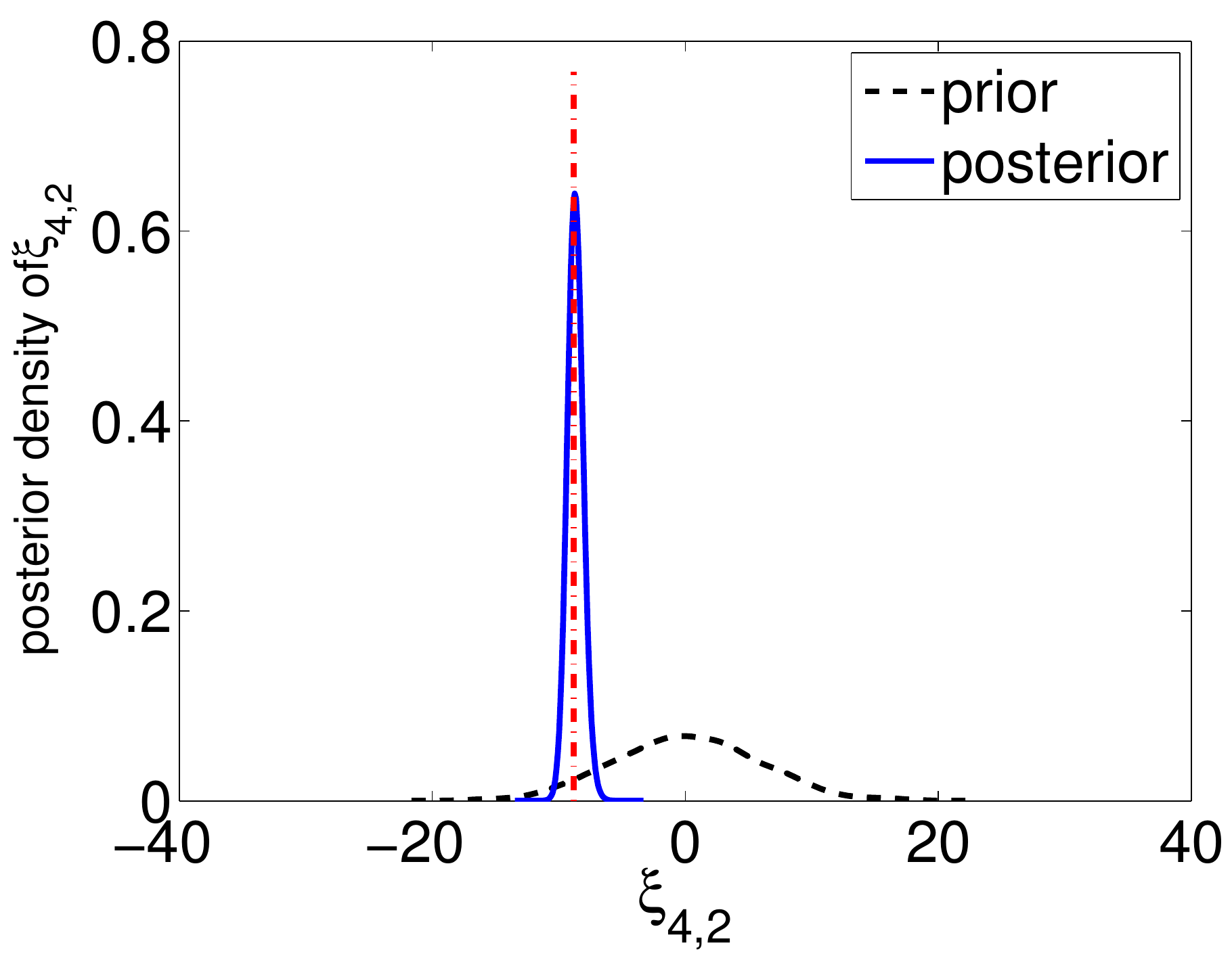}\\
\includegraphics[scale=0.25]{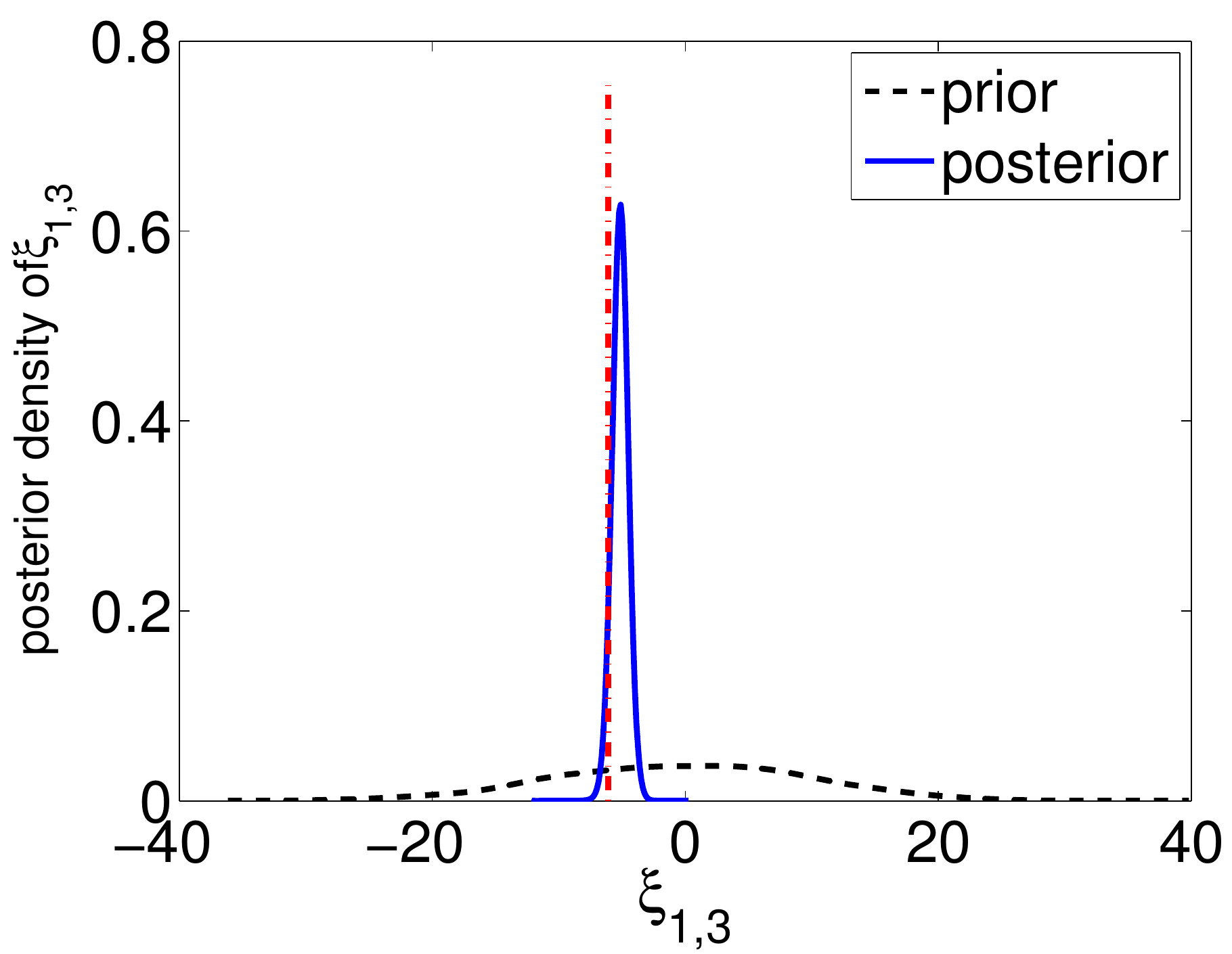}
\includegraphics[scale=0.25]{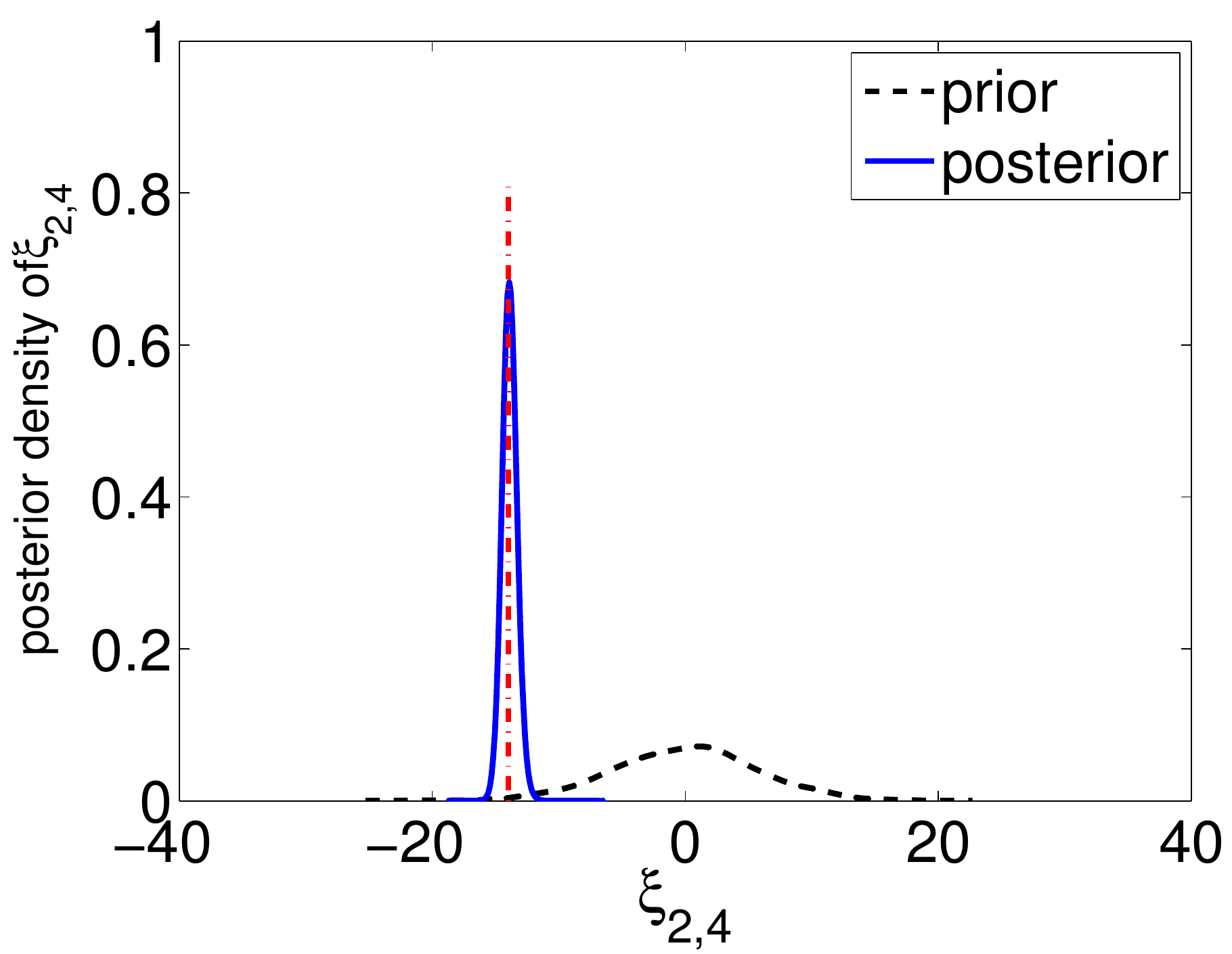}
\includegraphics[scale=0.25]{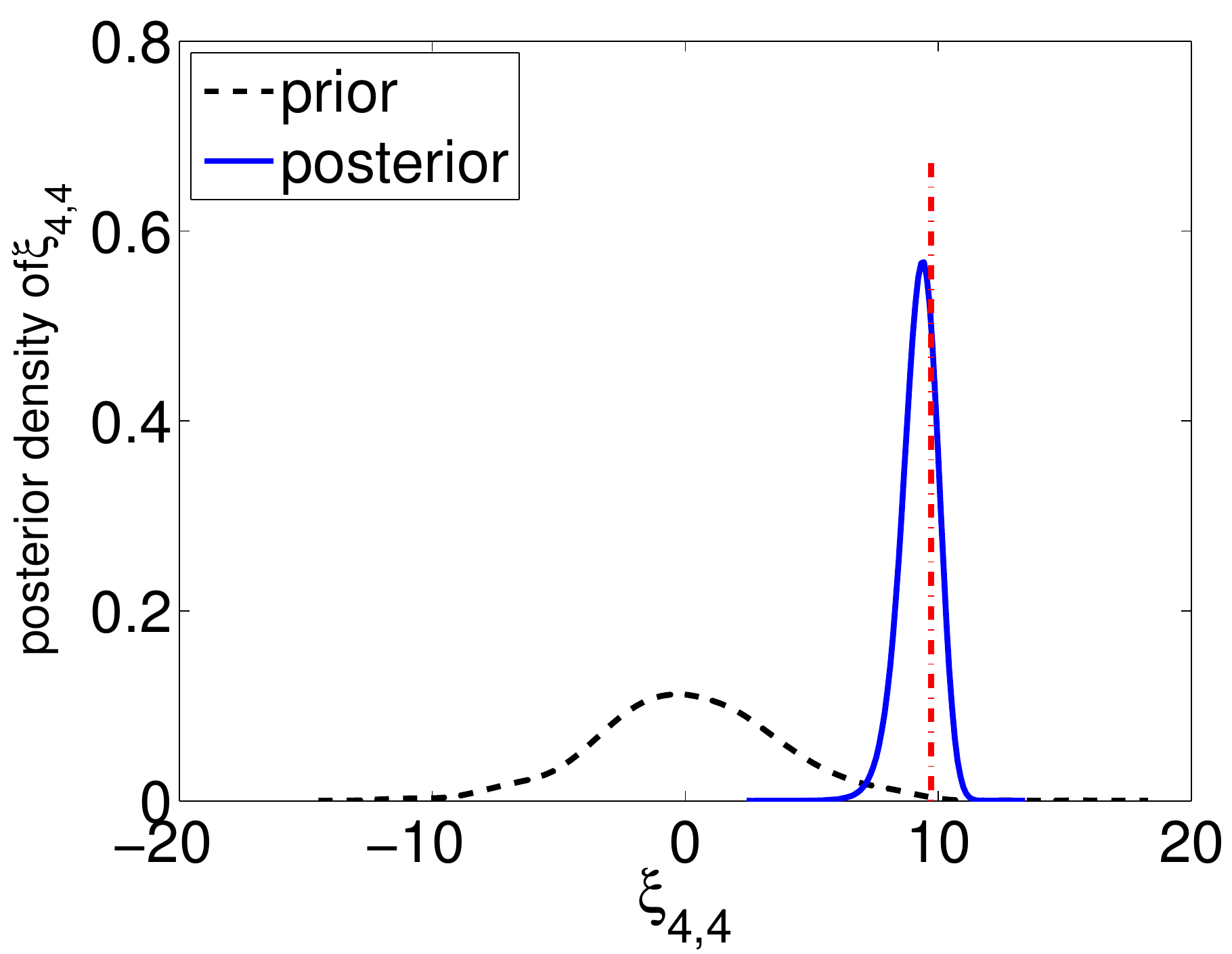}
 \caption{Inverse Potential. Densities of the prior and posterior of various DCT coefficients of the $\I_{D_{1}}$ where $\ D_1 = \{x\in D\ |  u(x) < 0\}$ obtained from MCMC samples on the level set $u$ for $L=0.3$ (vertical dotted line indicates the truth).}   \label{Fig5}
\end{center}

\end{figure}

\clearpage
\subsection{Structural Geology: Channel Model}

In this section we consider the inverse problem discussed in subsection \ref{ssec:tm2}. We consider the Darcy model (\ref{sdarcy}) but with a more realistic set of boundary conditions that consist of a mixed Neumman and Dirichlet conditions. For the concrete set of boundary conditions as well as the right-hand-side we use for the present example we refer the reader to \cite[Section 4]{EnKF_US}. This flow model, initially used in the seminal paper of \cite{Carrera}, has been used as a benchmark for inverse problems in \cite{EnKF_US,Hanke,IglesiasDawson2}. While the mathematical analysis of subsection is \ref{ssec:tm2} conducted on a model with Dirichlet boundary conditions, in order to streamline the presentation, the corresponding extension the case of mixed  boundary conditions can be carried out with similar techniques.

We recall that the aim is to estimate the interface between regions $D_{i}$ of different structural geology which result in a discontinuous permeability $\kappa$ of the form (\ref{eq:kd}). In order to generate synthetic data, we consider a true  $\kappa^{\dagger}(x)=\sum_{i=1}^{3}\kappa_{i}\I_{D_{i}^{\dagger}}$ with prescribed (known) values of $\kappa_{1}=7$, $\kappa_{2}=50$ and $\kappa_{3}=500$. This permeability, whose plot is displayed in \Fref{Fig6} (top), is used in (\ref{sdarcy}) to generate synthetic data collected from interior measurement locations (white squares in \Fref{Fig6}). The estimation of $\kappa$ is conducted given observations of the solution of the Darcy model (\ref{sdarcy}). To be
concrete, the observation operator $\O=(\O_{1},\dots,\O_{25})$ is defined in terms of 25 mollified Dirac deltas $\{\O_{j}\}_{j=1}^{25}$ centered  at the aforementioned measurement locations and acting on the solution $p$ of the Darcy flow model. For the generation of synthetic data we use a grid of $160\times 160$ which, in order to avoid inverse crimes \cite{KS05}, is finer than the one used for the inversion ($80\times 80$). As before, observations are corrupted with Gaussian noise proportional to the size of the noise-free observations ($\O_{j}(p)$ in this case).

For the estimation of $\kappa$ with the proposed Bayesian framework we assume that knowledge of three nested regions is available with the permeability values $\{\kappa_{i}\}_{i=1}^{3}$ that we use to define the true $\kappa^{\dagger}$. Again, we are interested in the realistic case where the rock types of the formation are known from geologic data but the location of the interface between these rocks is uncertain. In other words, the unknowns are the geologic facies $D_{i}$ that we parameterize in terms of a level set function, i.e. $ D_i = \{x\in D\ |  c_{i-1}\leq u(x) < c_{i}\}$ with $c_{0}=-\infty$, $c_{1}=0$, $c_{2}=1$, $c_{3}=\infty$. Similar to the previous example, we use a prior of the form (\ref{eq:cova2}) for the level set function. In  \Fref{Fig7} we display samples from the prior on the level set function (first, third and fifth rows) and the corresponding permeability mapping under the level set map (\ref{eq_lsf}) $F(u)(x)=\kappa(x)=\sum_{i=1}^{3}\kappa_{i}\I_{D_{i}}$ (second, fourth and sixth rows) for (from left to right) $L=0.2, 0.3, 0.35, 0.4, 0.5$. As before, we note that the spatial correlation of the covariance function has a significant effect on the spatial correlation of the interface between the regions that define the interface between the geologic facies (alternatively, the discontinuities of $\kappa$). Longer values of $L$ provide $\kappa$'s that seem more visually consistent with the truth. The results from \Fref{Fig8} show MCMC results from experiments with different priors corresponding to the aforementioned choices of $L$. The posterior mean level set function $\overline{u}$ is displayed in the top row of \Fref{Fig8}. The corresponding mapping under the level set function $\overline{\kappa}\equiv \sum_{i=1}^{3}\kappa_{i}\I_{\overline{D}_{i}}$ (with  $ \overline{D}_i = \{x\in D\ |  c_{i-1}\leq \overline{u}(x) < c_{i}\}$) is shown in the top-middle.

Similar to our discussion of the preceding subsection, for the present example we are interested in the push-forward of the posterior $\mu^{y}$ under the level set map $F$. More precisely, $F^\ast \mu^y$ provides a probability description of the solution to the inverse problem of finding the permeability given observations from the Darcy flow model. In \Fref{Fig8} we present the mean (bottom-middle) and the variance (bottom) of $F(\mu^{y})$ characterized by posterior samples on the level set function mapped under $F$. In other words, these are the mean and variance from the $\kappa$'s obtained from the MCMC samples of the level /
set function. As in the previous example, there is a critical value of $L=0.3$ below of which the posterior estimates cannot accurately identify the main spatial features of $\kappa^{\dagger}$. \Fref{Fig9} shows posterior samples of the level set function (first, third and fifth rows) and the corresponding $\kappa$ (second, fourth and sixth rows). The posterior samples, for values of $L$ above the critical value $L=0.3$, capture the main spatial features from the truth. There is, however, substantial uncertainty in the location of the interfaces. Our results offer evidence that this uncertainty can be properly captured with our level set Bayesian framework. Statistical measures of $F^\ast \mu^y$ (i.e. the posterior permeability measure on $\kappa$) is essential in practice. The proper quantification of the uncertainty in the unknown geologic facies is vital for the proper assessment of the environmental impact in applications such as CO$_2$ capture and storage, nuclear waste disposal and enhanced oil recovery. 

In \Fref{Fig6} (bottom-right) we show the ACF of the first KL mode of level set function from different MCMC chains corresponding to different priors defined by the choices of $L$ indicated previously. In contrast to the previous example, here we cannot appreciate substantial differences in the efficiency of the chain with respect to the selected values of $L$. However, we note that ACF exhibits a slow decay and thus long chains and/or multiple chains are need to properly explore the posterior. For the choice of $L=0.4$ we consider $50$ multiple MCMC chains. Our MCMC chains pass the Gelman-Rubin test \cite{Gelman}
as we can note from \Fref{Fig6} (bottom-left) where we show the PSRF computed from MCMC samples of the level set function $u$ (red-solid line) and the corresponding mapping, under the level set map, into the permeabilities $\kappa$ (blue-dotted line). As indicated earlier, we may potentially increase the number of multiple chains and thus the number of uncorrelated samples form the posterior. 

Finally, in \Fref{Fig10}  we show the prior and posterior densities of the first DCT coefficients on the $\kappa$ obtained from the MCMC samples of the level set function (the vertical dotted line corresponds to the DCT coefficient of the truth $\kappa^{\dagger}$). For some of these modes we clearly see that the posterior is concentrated around the truth. However, for the mode $\xi_{4,4}$ we note that the posterior is quite close to the prior indicating that the data have not informed this mode in any significant way.

\begin{figure}[htbp]
\begin{center}
\includegraphics[scale=0.35]{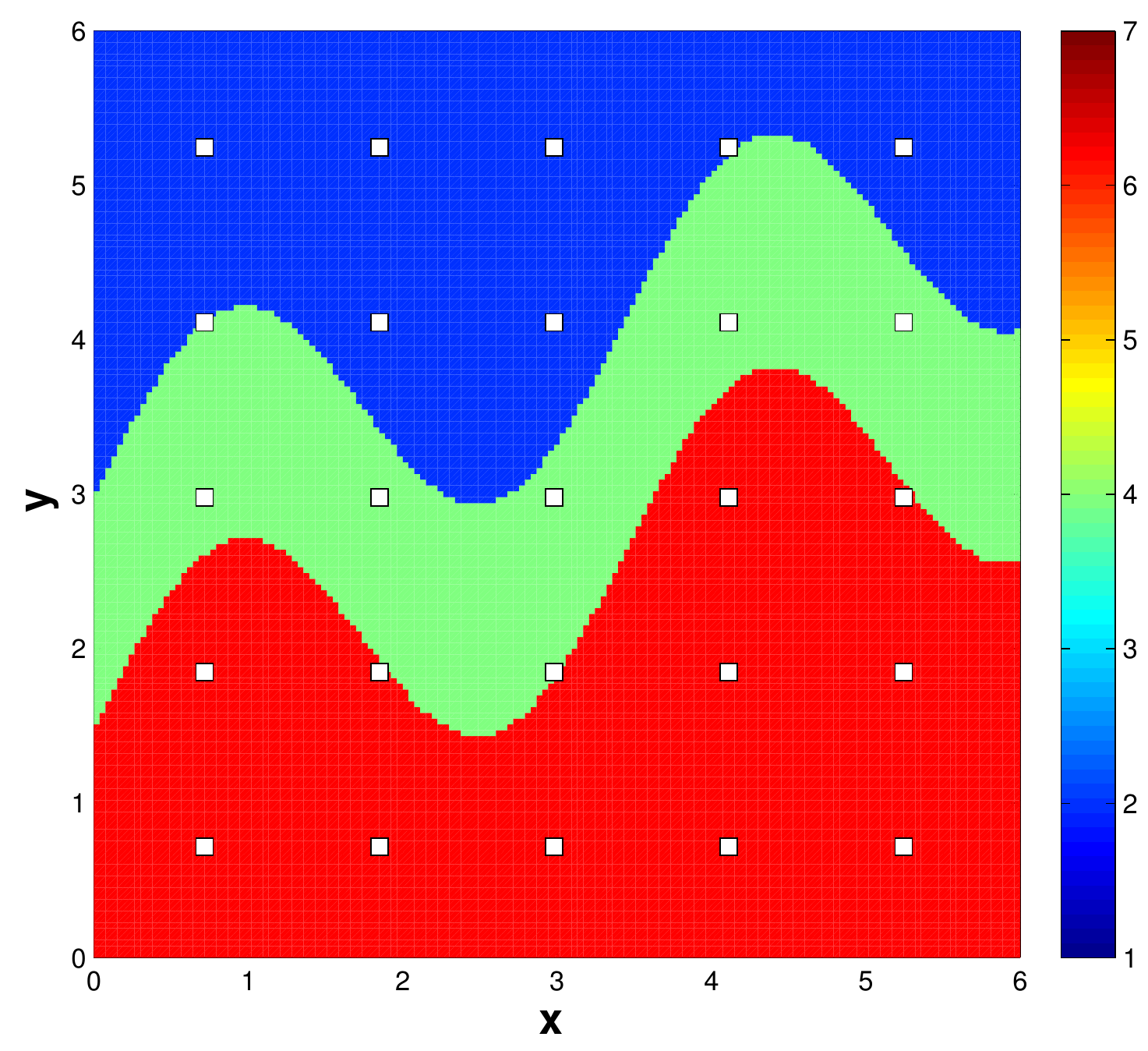}\\
\includegraphics[scale=0.3]{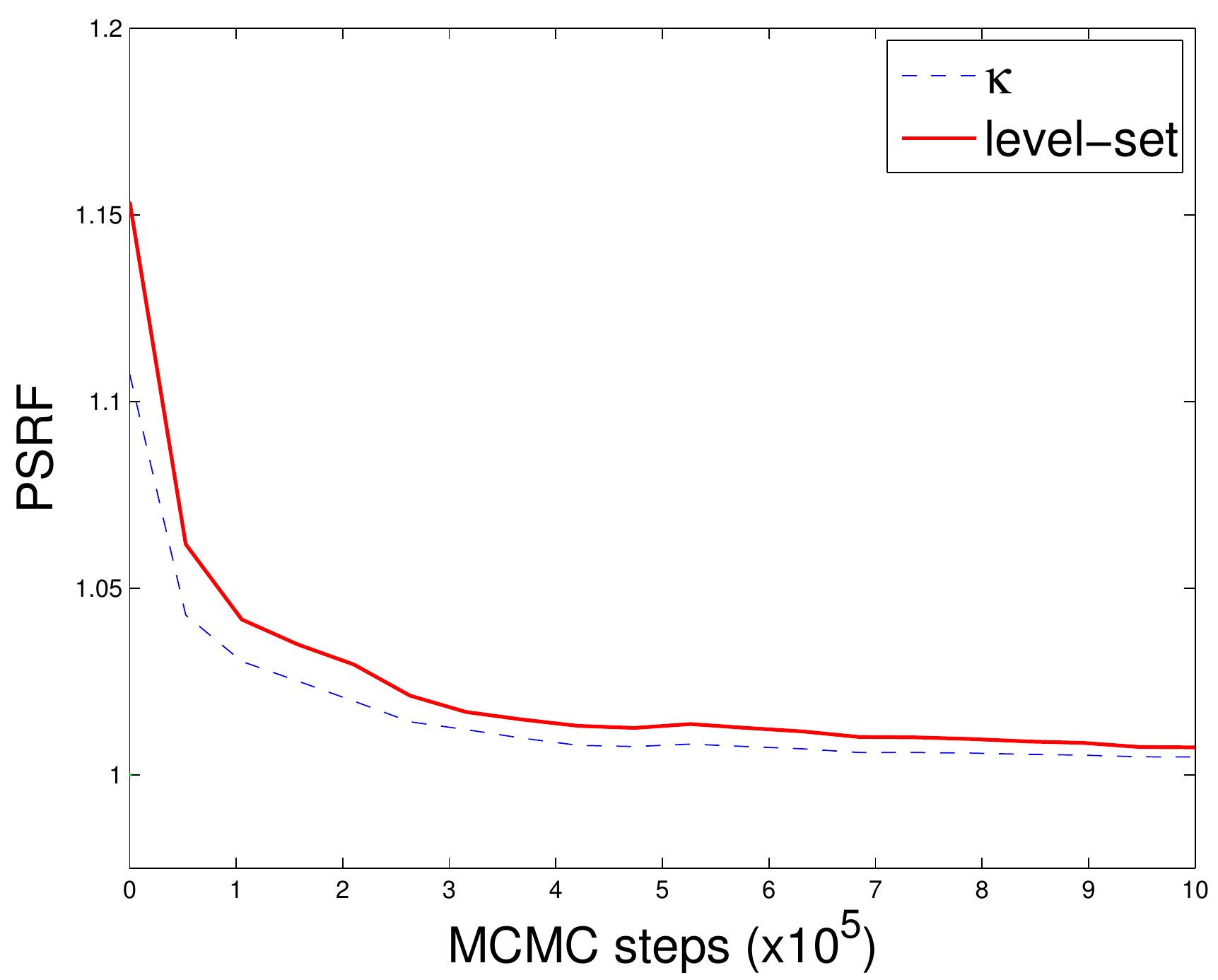}
\includegraphics[scale=0.3]{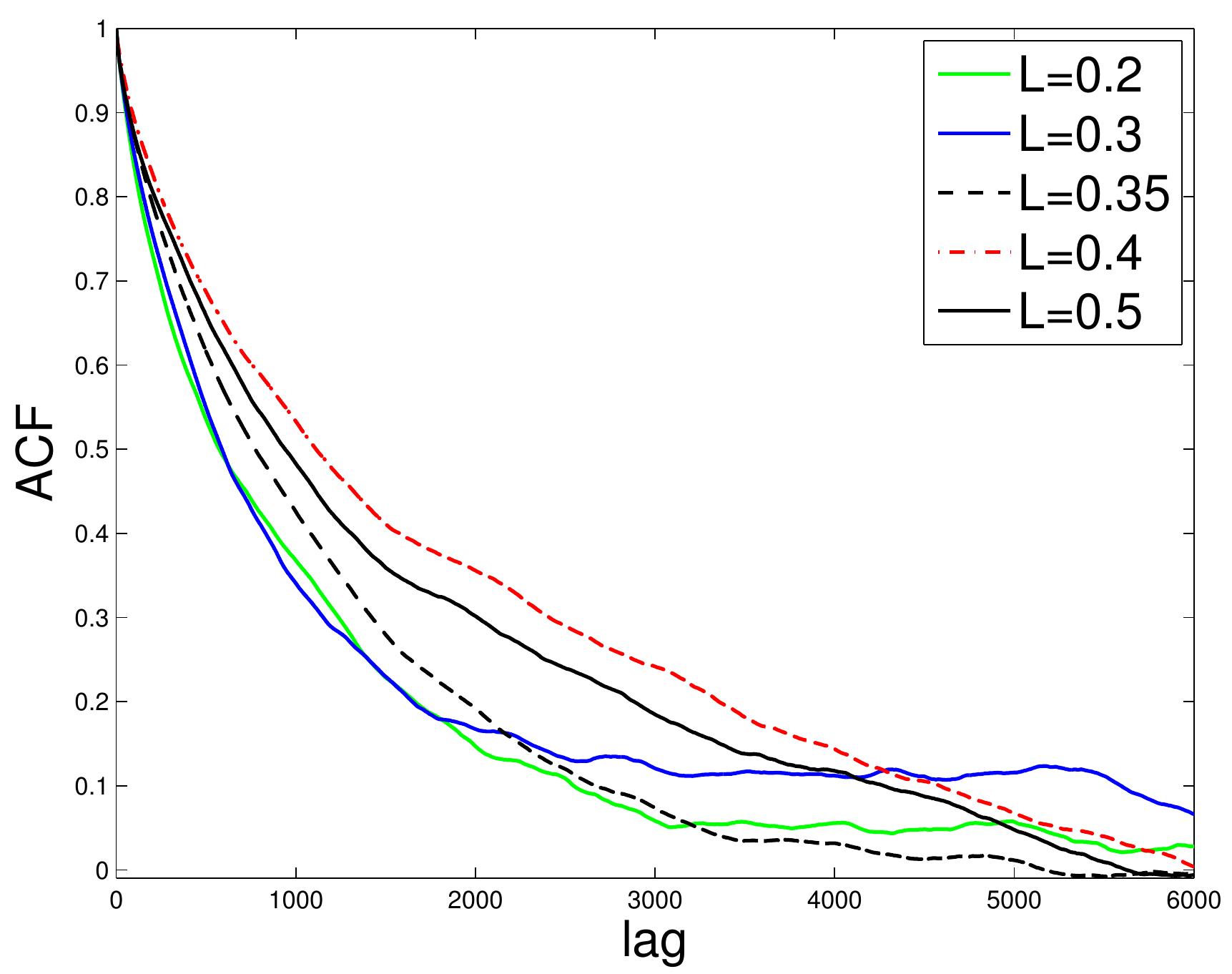}
 \caption{Identification of structural geology (channel model). Top: True $\kappa$ in eq. (\ref{sdarcy}). Bottom-left: PSRF from multiple chains with $L=0.4$ in (\ref{eq:cova1}).  Bottom-right: ACF of first KL mode of the level 
set function from single-chain MCMC with different choices of $L$.}   \label{Fig6}

\end{center}
\end{figure}

\begin{figure}[htbp]
\begin{center}
  \includegraphics[scale=0.85]{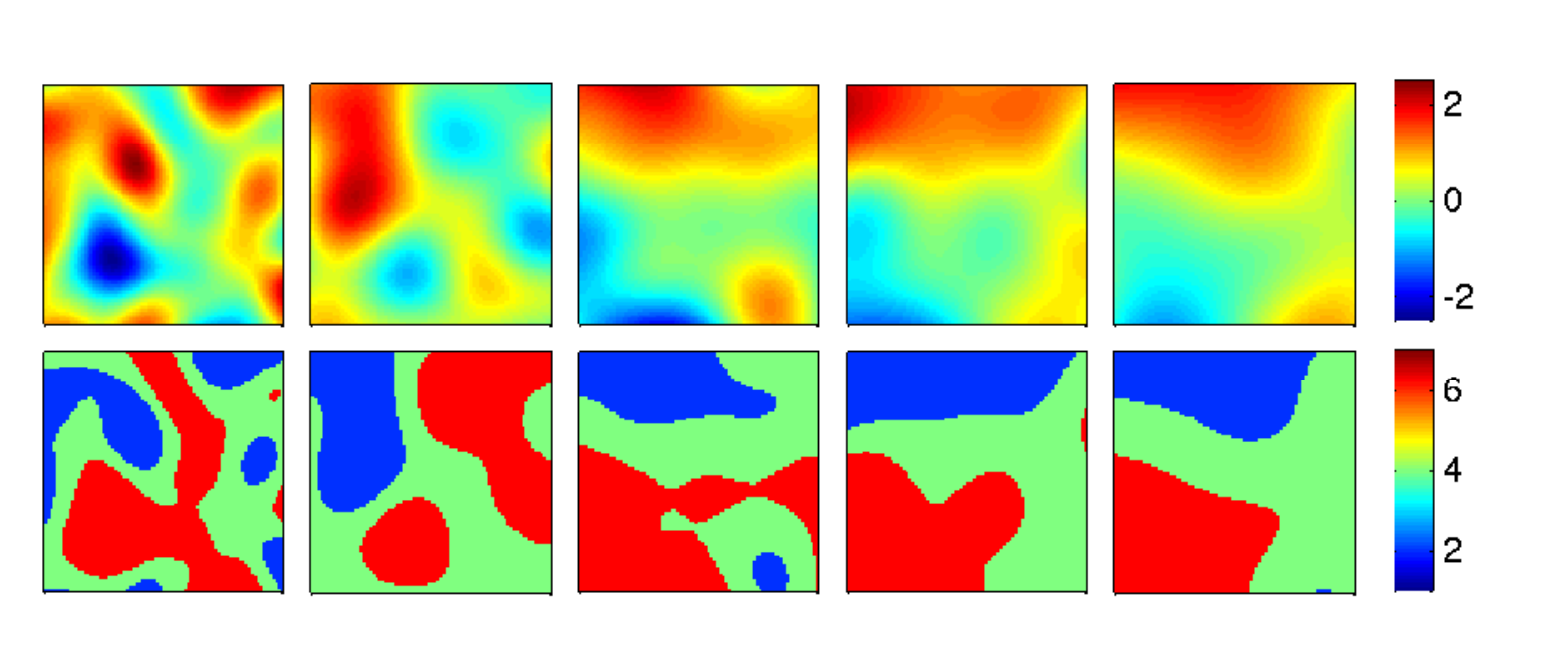}
\vskip-15pt
     \includegraphics[scale=0.85]{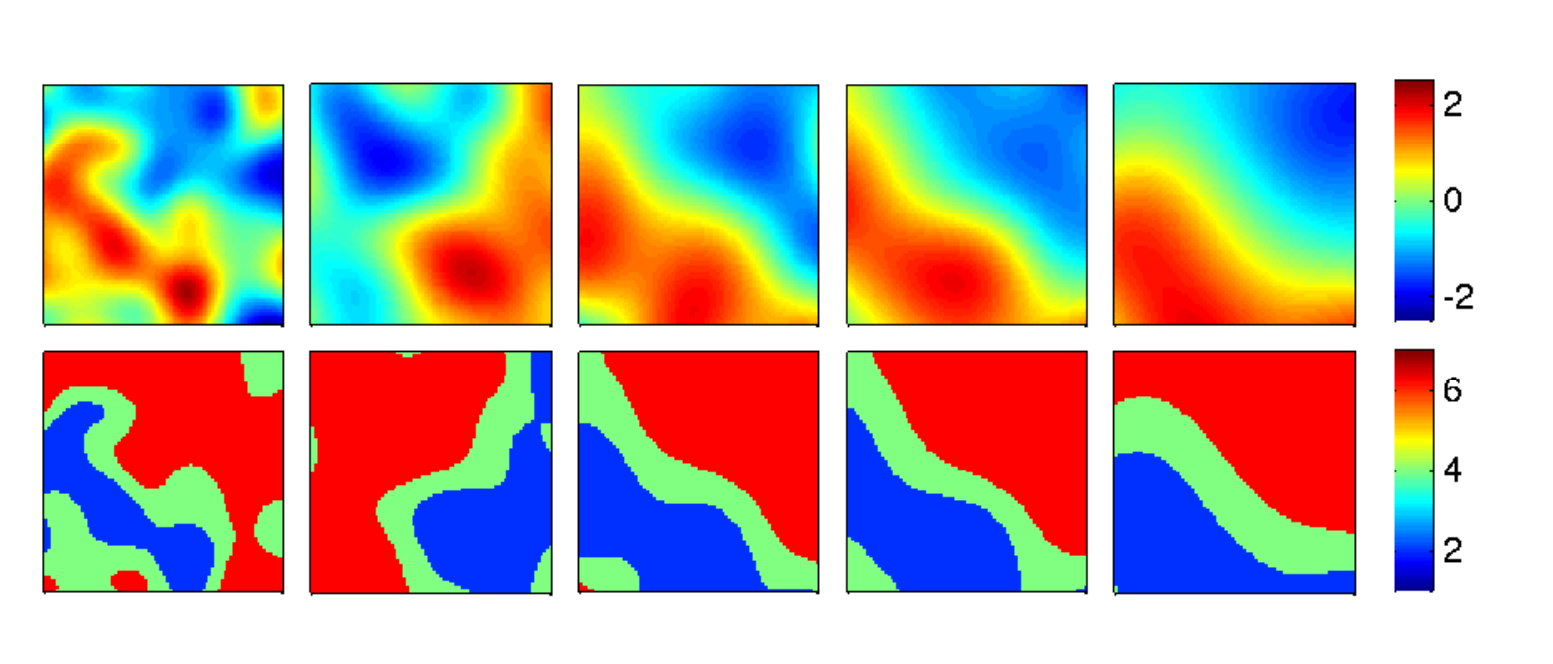}\\
\vskip-15pt
     \includegraphics[scale=0.85]{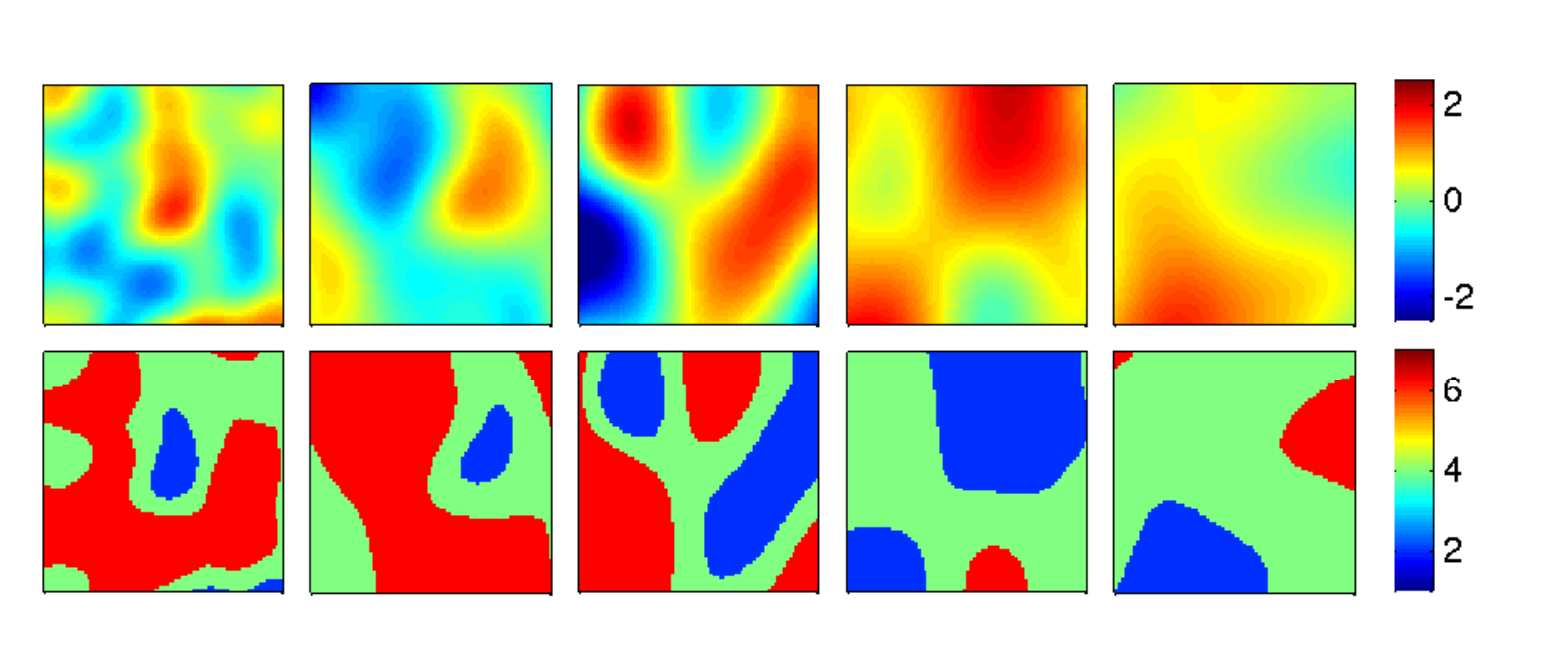}
 \caption{Identification of structural geology (channel model). Samples from the prior on the level set (first, third and fifth rows) for (from left to right) $L=0.2, 0.3, 0.35, 0.4, 0.5$. Pushforward onto $\kappa$ (second, fourth and sixth rows) associated to each of these samples from the level set function.}     \label{Fig7}
\end{center}
\end{figure}

\begin{figure}[htbp]
\begin{center}
\includegraphics[scale=1.1]{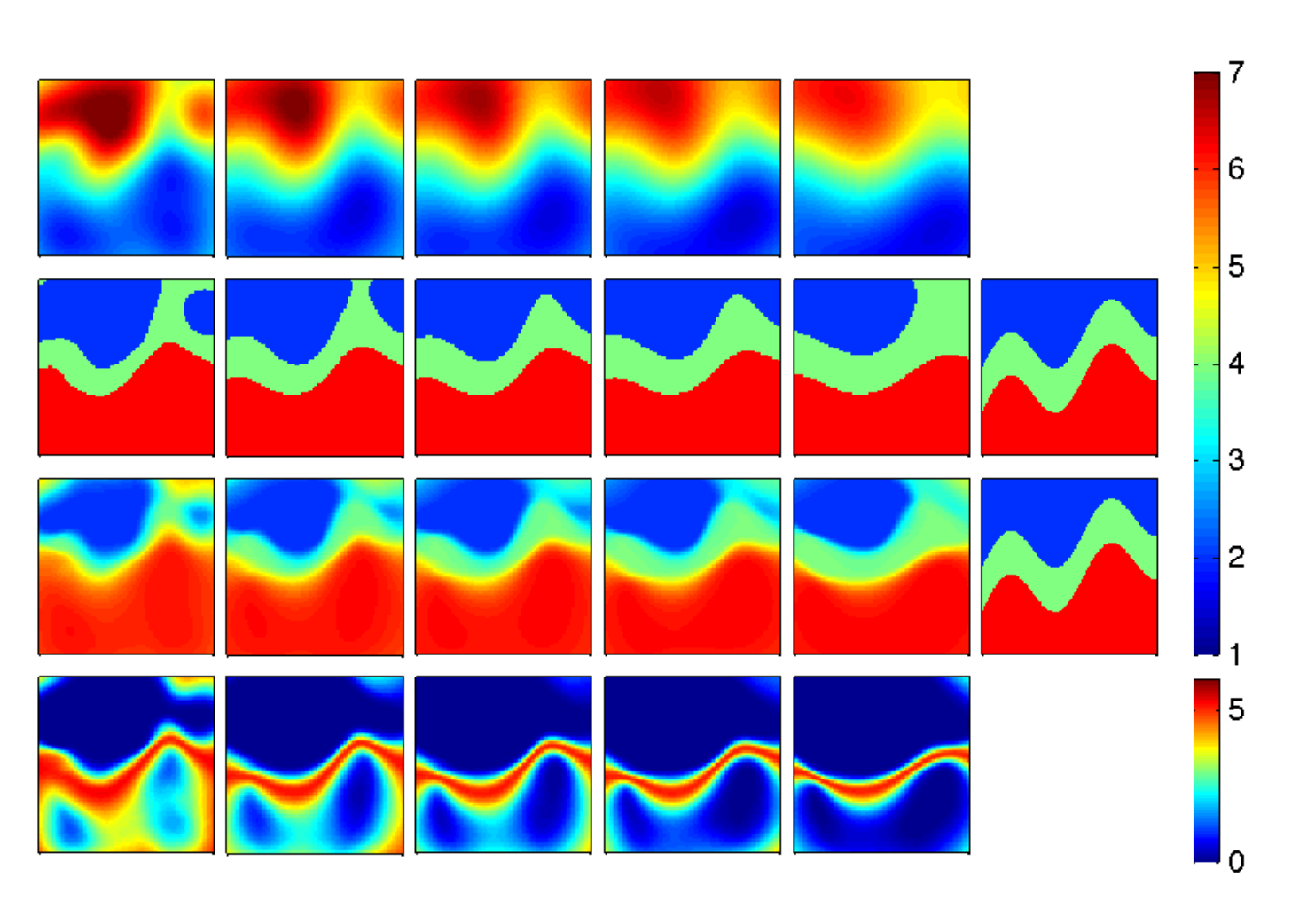}
 \caption{Identification of structural geology (channel model). MCMC results for (from left to right) $L=0.2, 0.3, 0.35, 0.4, 0.5$  in the eq. (\ref{eq:cova1}). Top: MCMC mean of the level set function. Top-middle: $\kappa$ associated to the mean of the level set function (true $\kappa$ is displayed in the last column). Bottom-middle: Mean of the $\kappa$. Bottom: Variance of $\kappa$}   \label{Fig8}
\end{center}
\end{figure}

\begin{figure}[htbp]
\begin{center}
\includegraphics[scale=0.85]{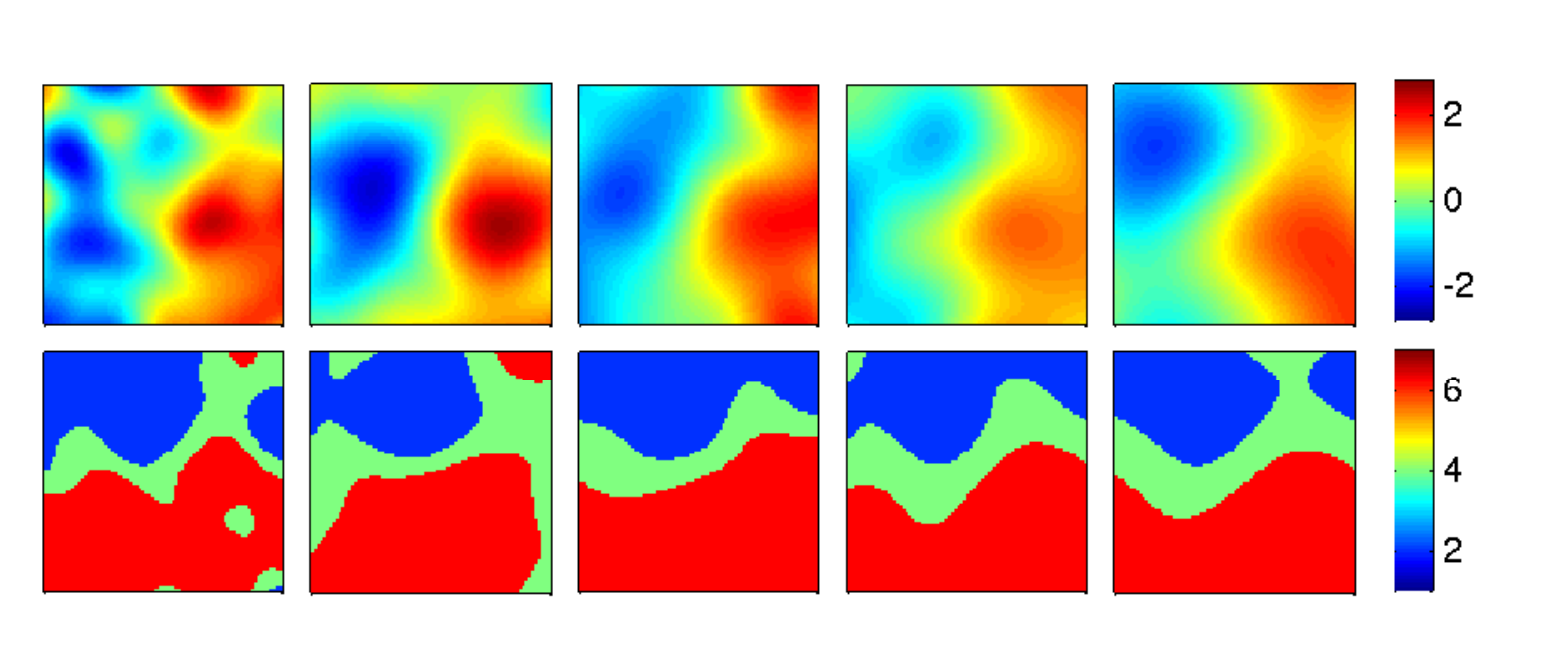}\\
\vskip-15pt
\includegraphics[scale=0.85]{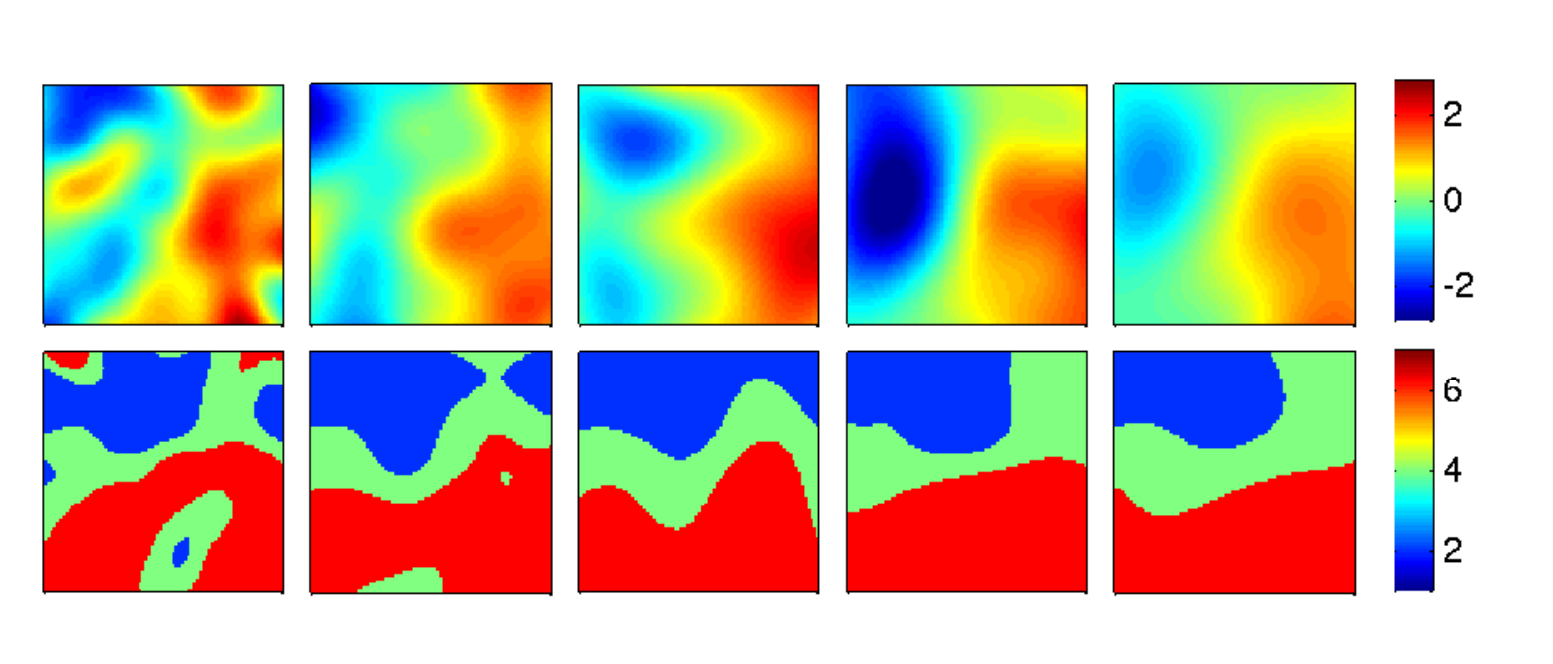}\\
\vskip-15pt
\includegraphics[scale=0.85]{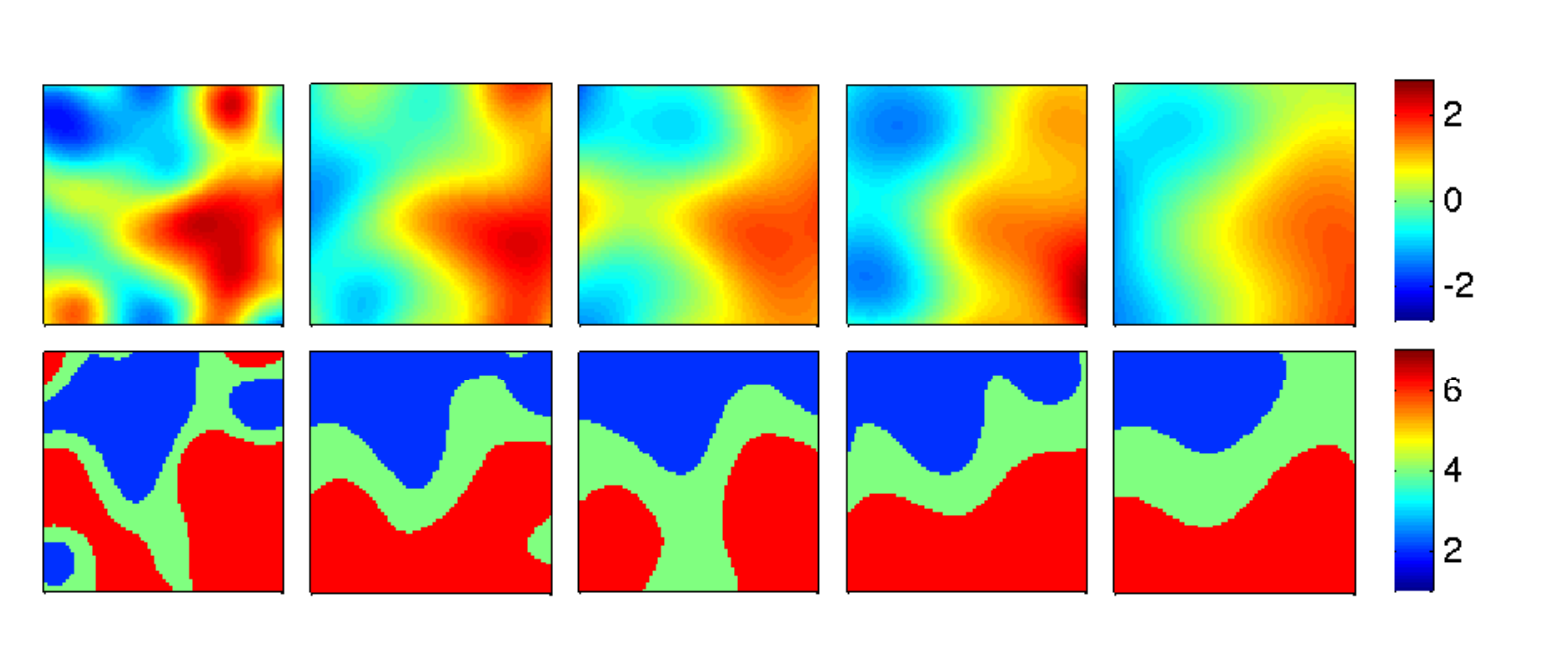}

 \caption{Identification of structural geology (channel model). Samples from the posterior on the level set (first, third and fifth rows) for (from left to right) $L=0.2, 0.15, 0.2, 0.3, 0.4$. $\log(\kappa)$ (second, fourth and sixth rows) associated to each of these samples from the level set function.}   \label{Fig9}
\end{center}
\end{figure}

\begin{figure}[htbp]
\begin{center}
\includegraphics[scale=0.25]{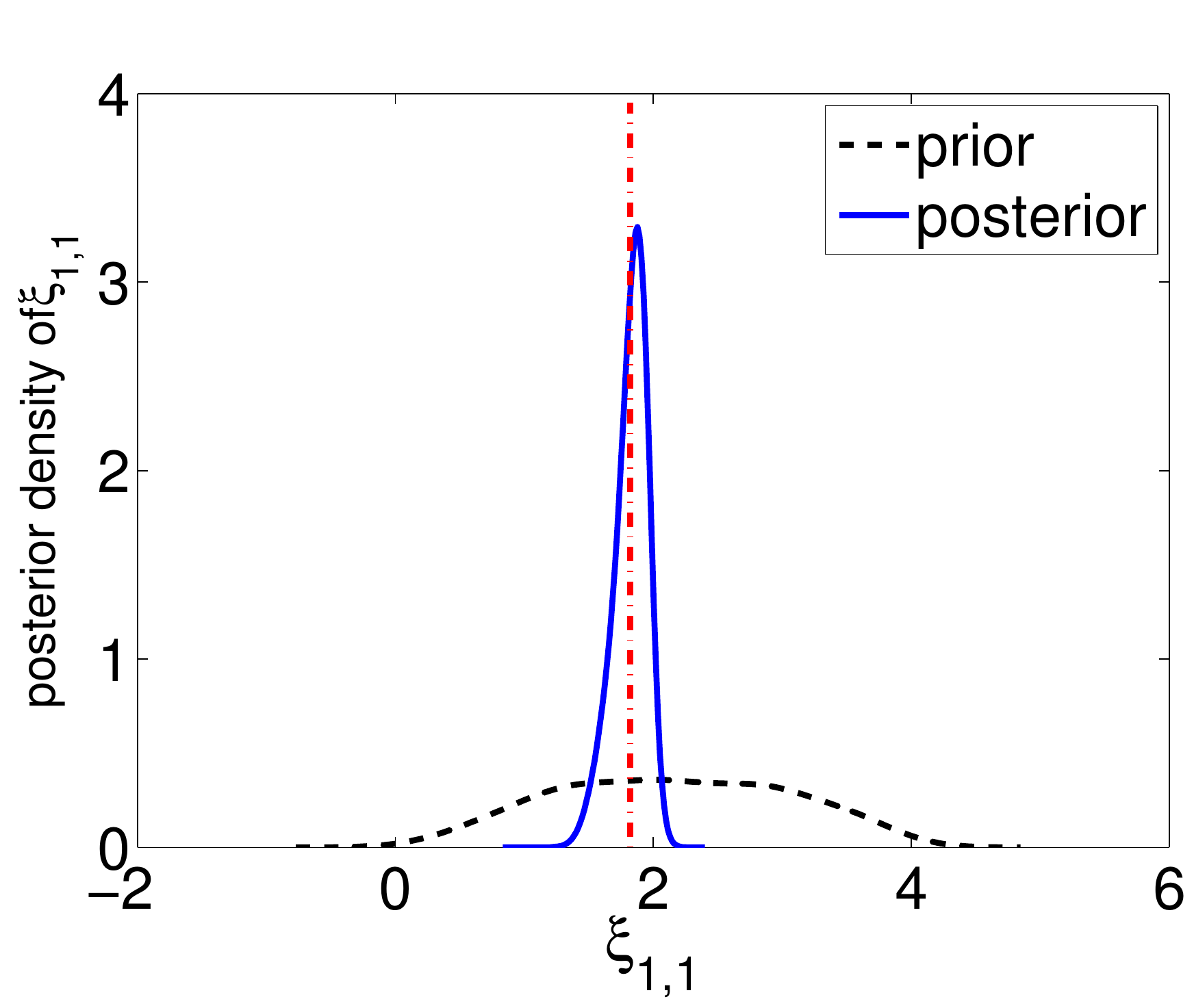}
\includegraphics[scale=0.25]{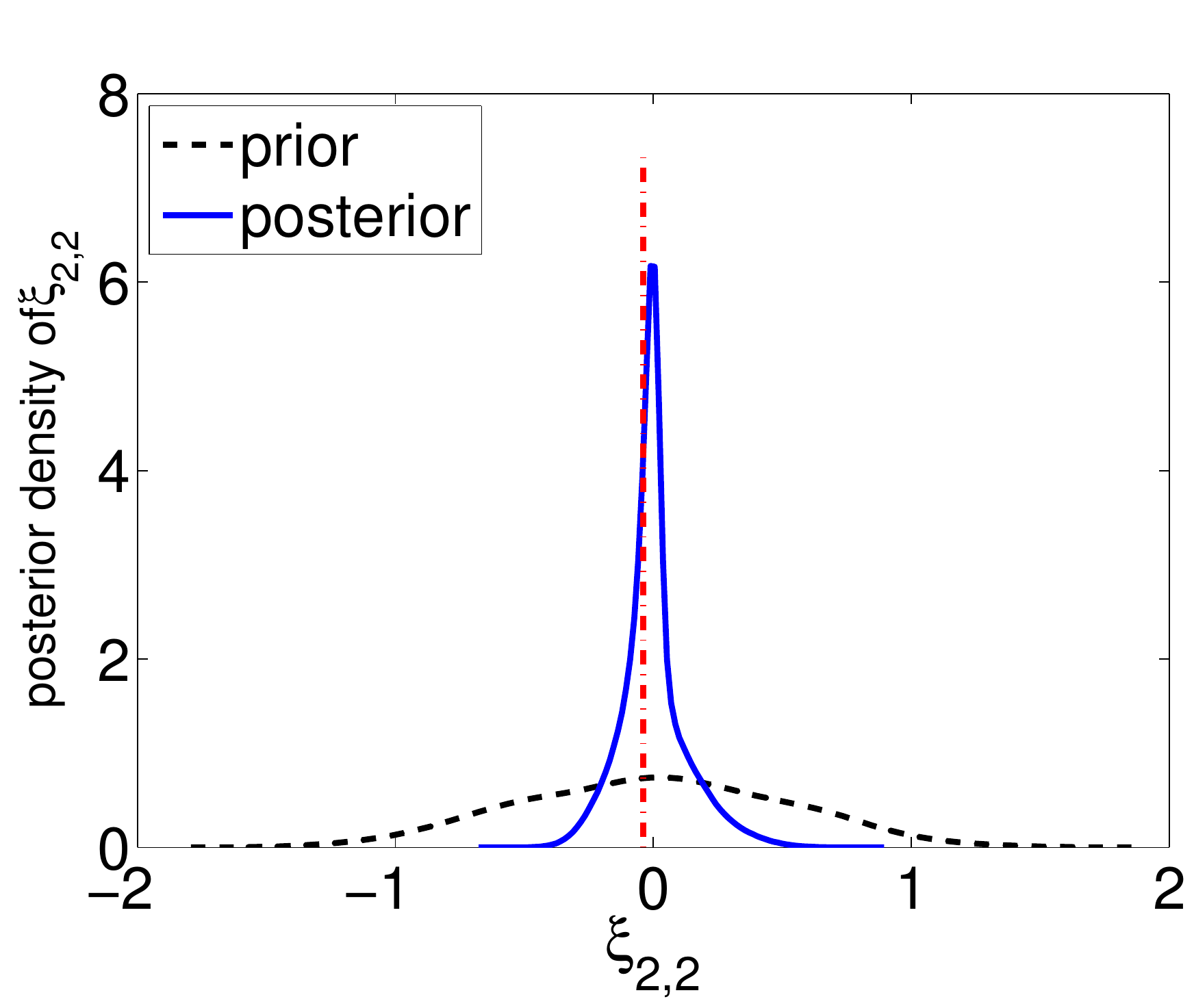}
\includegraphics[scale=0.25]{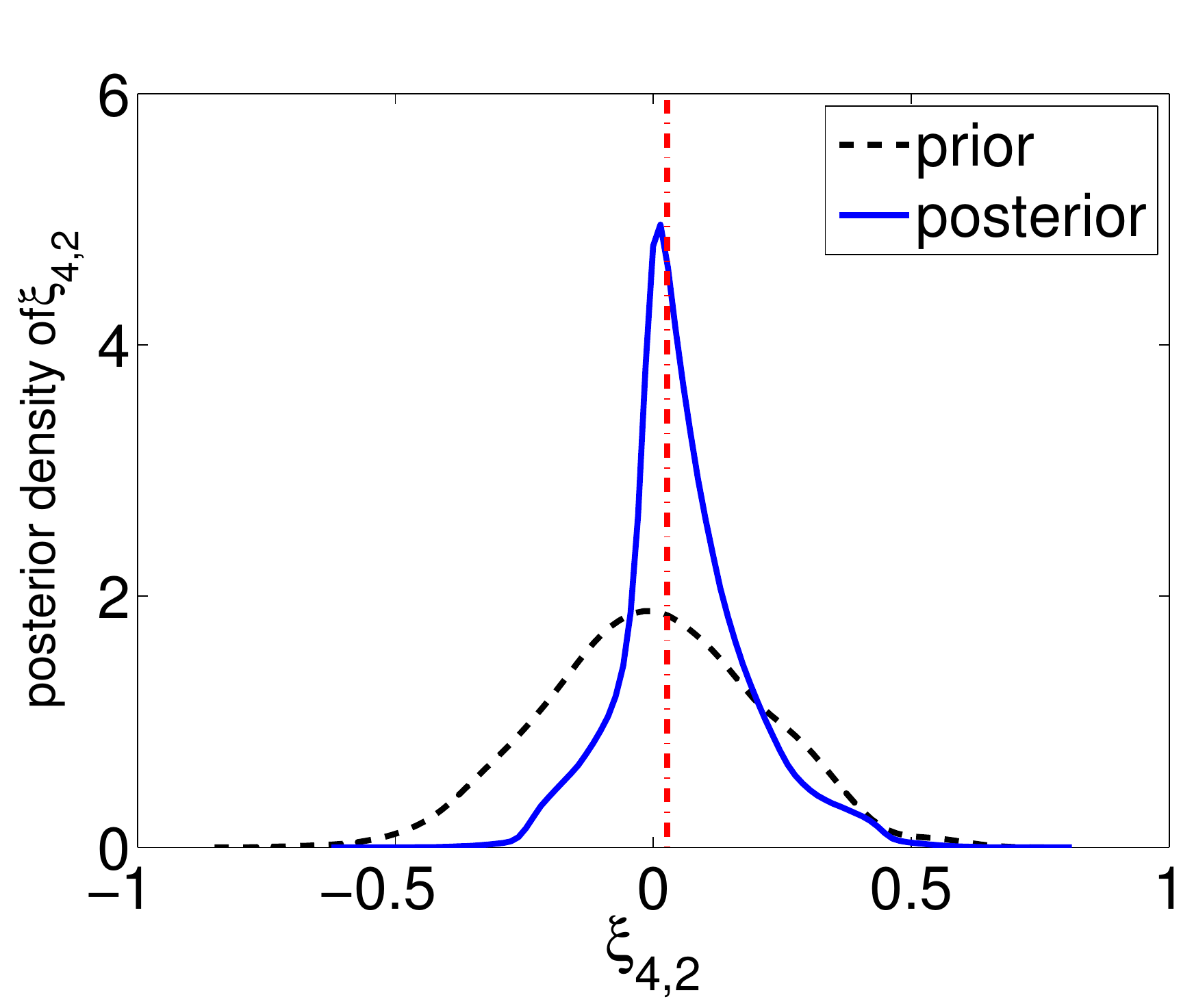}\\
\includegraphics[scale=0.25]{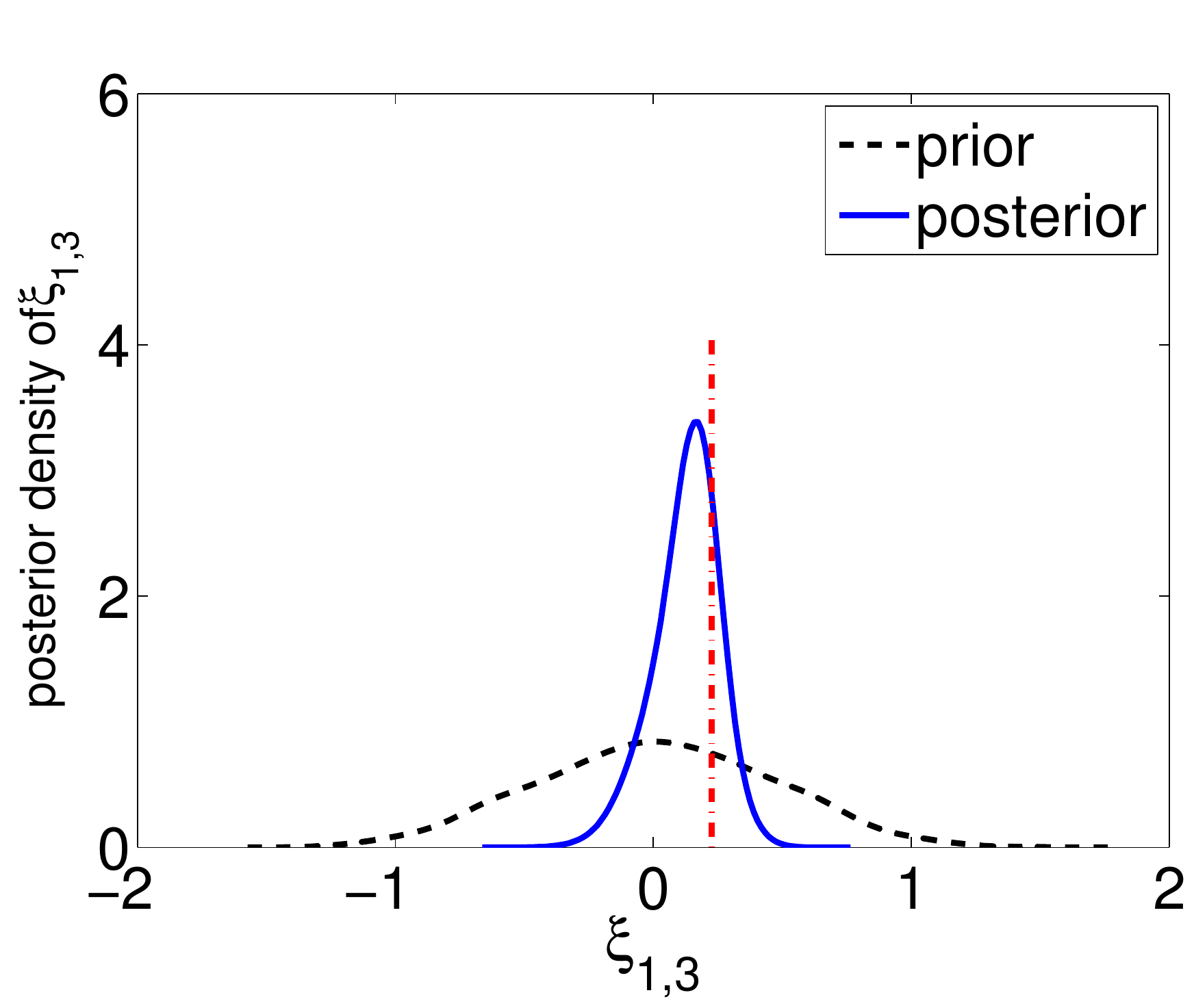}
\includegraphics[scale=0.25]{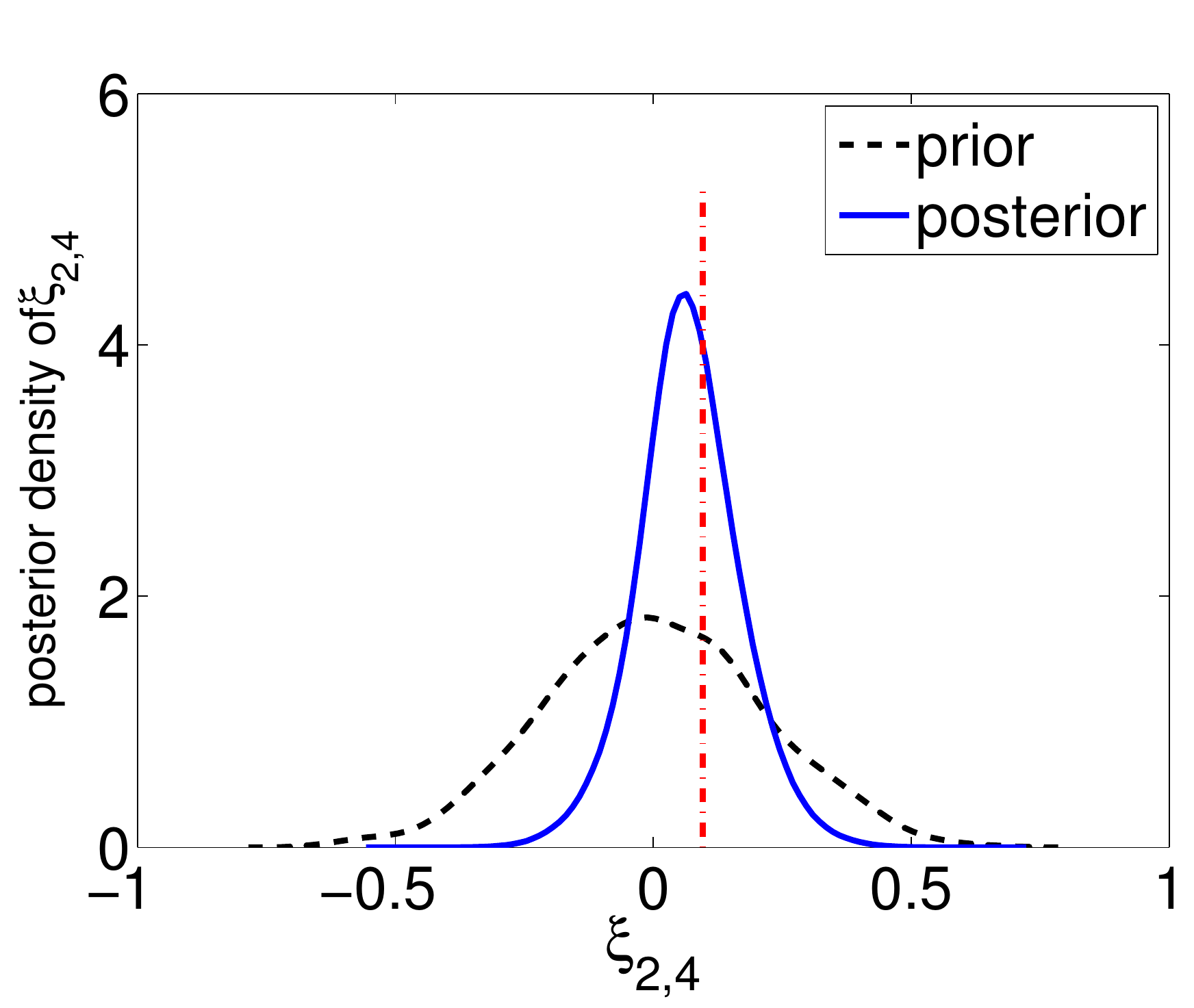}
\includegraphics[scale=0.25]{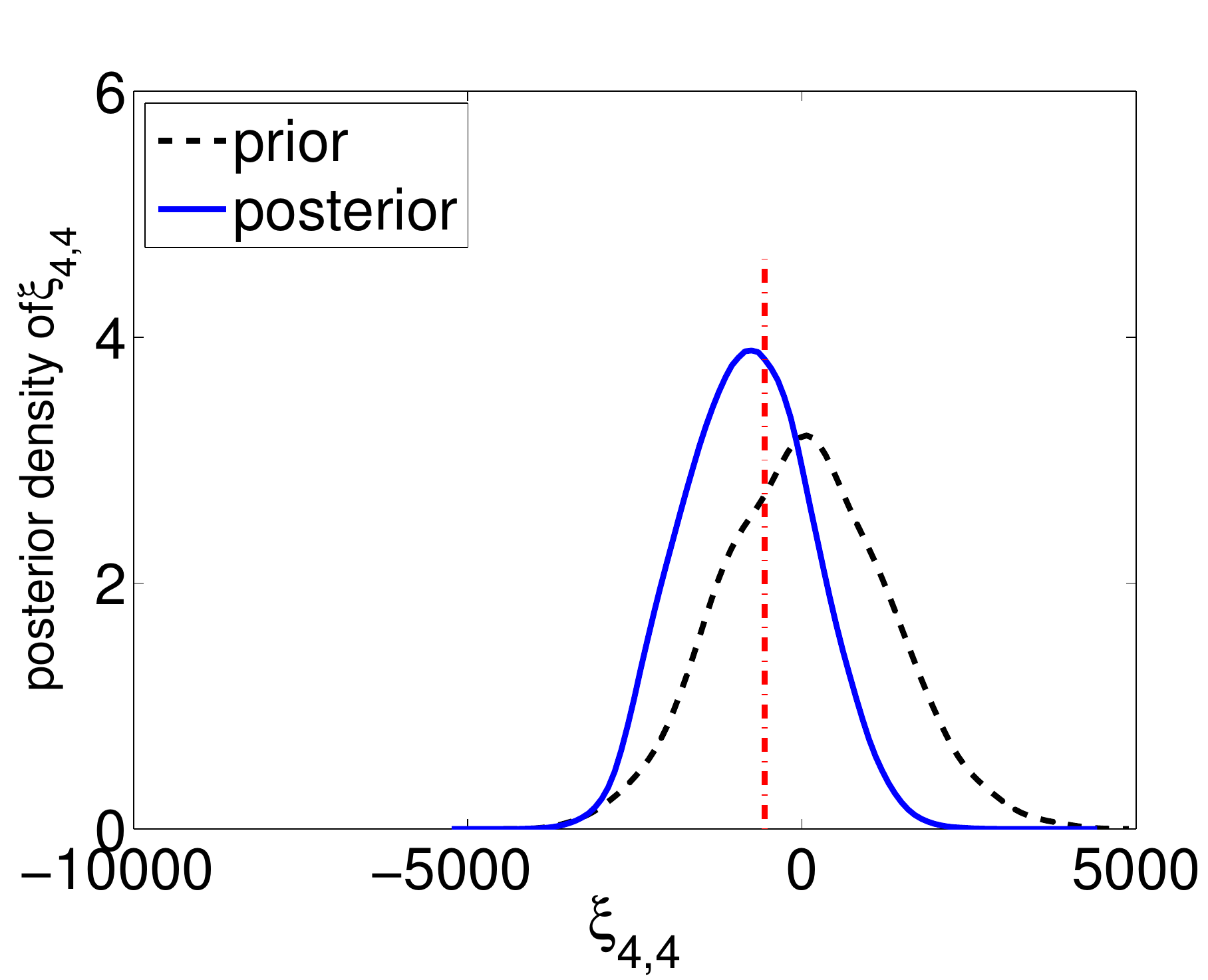}
 \caption{Identification of structural geology (channel model). Densities of the prior and posterior of some DCT coefficients of the  $\kappa$'s obtained from MCMC samples on the level set for $L=0.4$ (vertical dotted line indicates the truth).}   \label{Fig10}
\end{center}
\end{figure}

\subsection{Structural Geology: Layer Model}

In this experiment we consider the groundwater model (\ref{sdarcy}) with the same domain and measurement configurations from the preceding subsection. However, for this case we define the true permeability $\kappa^{\dagger}$ displayed in \Fref{Fig11} (top). The permeability values are as before. The generation of synthetic data is conducted as described in the preceding subsection. For this example we consider the Gaussian prior on the level set defined by (\ref{eq:cova1}). Since for this case the operator $\C$ is diagonalisable by cosine functions, we use the fast Fourier transform to sample from the corresponding Gaussian measure $\N(0,\C)$ required by the pCN-MCMC algorithm. 

The tunable parameter $\alpha$ in the covariance operator (\ref{eq:cova1}) determines the regularity of the corresponding samples of the Gaussian prior (see for example \cite{S10}). Indeed, in \Fref{Fig12} we show samples from the prior on the level set function (first, third and fifth rows) and the corresponding $\kappa$ (second, fourth and sixth rows) for (from left to right) $\alpha=1.5, 2.0, 2.5, 3.0, 3.5$. Indeed, changes in $\alpha$ have a dramatic effect on the regularity of the interface between the different regions. We therefore expect strong effect on the resulting posterior on the level set and thus on the permeability.

 In \Fref{Fig13} we display numerical results from MCMC chains with different priors corresponding to  (from left to right) $\alpha=1.5, 2.0, 2.5, 3.0, 3.5$. In \Fref{Fig13} we present the MCMC mean of the level set function.The corresponding $\kappa$ is shown in the top-middle of \Fref{Fig13}. In this figure we additionally display the mean (bottom-middle)  and the variance (bottom) of the $\kappa$'s obtained from the MCMC samples of the level set function. Above a critical value $\alpha=2.5$ we obtain a reasonable identification of the layer permeability with a small uncertainty (quantified by the variance).  \Fref{Fig14} shows posterior (MCMC) samples of the level set function (first, third and fifth rows) and the corresponding $\kappa$ (second, fourth and sixth rows) for the aforementioned choices of $\alpha$.

\Fref{Fig11} (bottom-right) shows the ACF of the first KL mode of level set function from MCMC experiments with different priors with $\alpha$'s as before. The efficiency of the MCMC chain does not seem to vary significantly for the values above the critical value of $\alpha$. However, as in the previous examples a slow decay in the ACF is obtained. An experiment using $50$ multiple MCMC chains initialized randomly from the prior reveals that the Gelman-Rubin diagnostic 
test \cite{Gelman} is passed for $\alpha=2.5$ as we can observe from  \Fref{Fig11} (bottom-left) where we the display PSRF from MCMC samples of the level set function (red-solid line) and the corresponding mapping into the $\kappa$ (blue-dotted line). In \Fref{Fig15}  we show the prior and posterior densities of the DCT coefficients on the $\kappa$ obtained from the MCMC samples of the level set function (the vertical dotted line corresponds to the truth DCT coefficient). We see clearly that the DCT coefficients are substantially informed by the data although the spread around the truth confirms the variability in the location of the interface between the layers that we can ascertain from the posterior samples (see \Fref{Fig14}).

\begin{figure}[htbp]
\begin{center}
\includegraphics[scale=0.35]{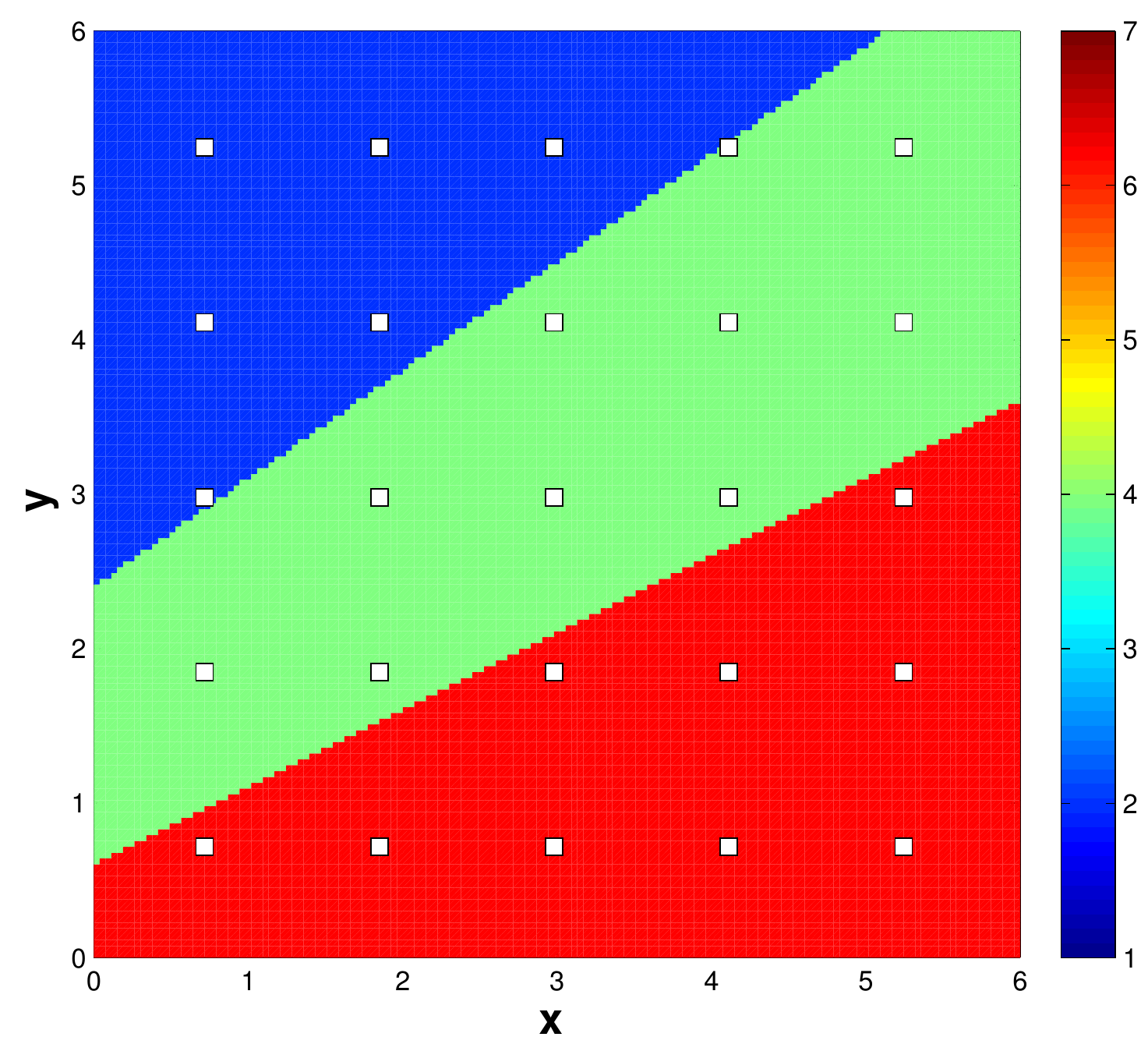}\\
\includegraphics[scale=0.3]{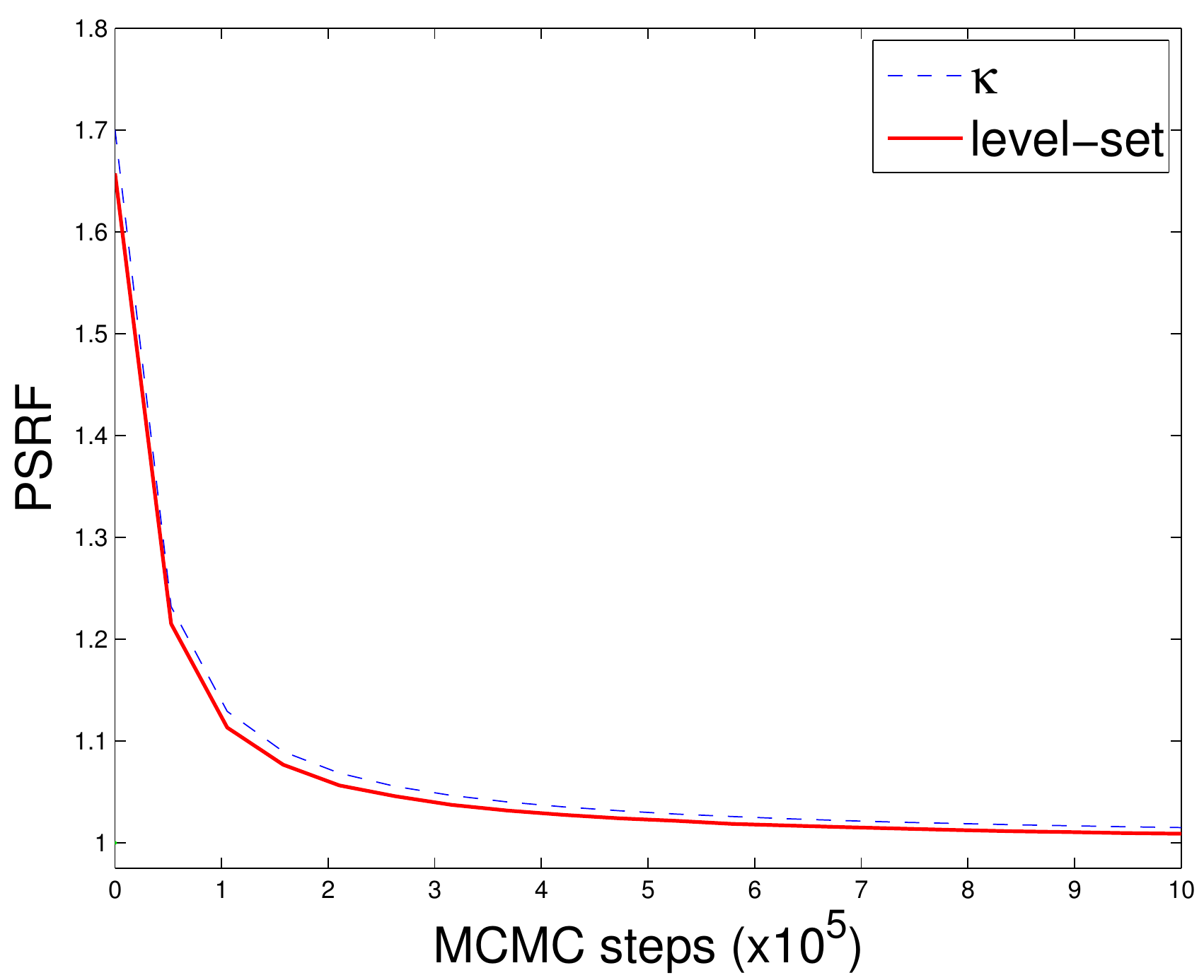}
\includegraphics[scale=0.3]{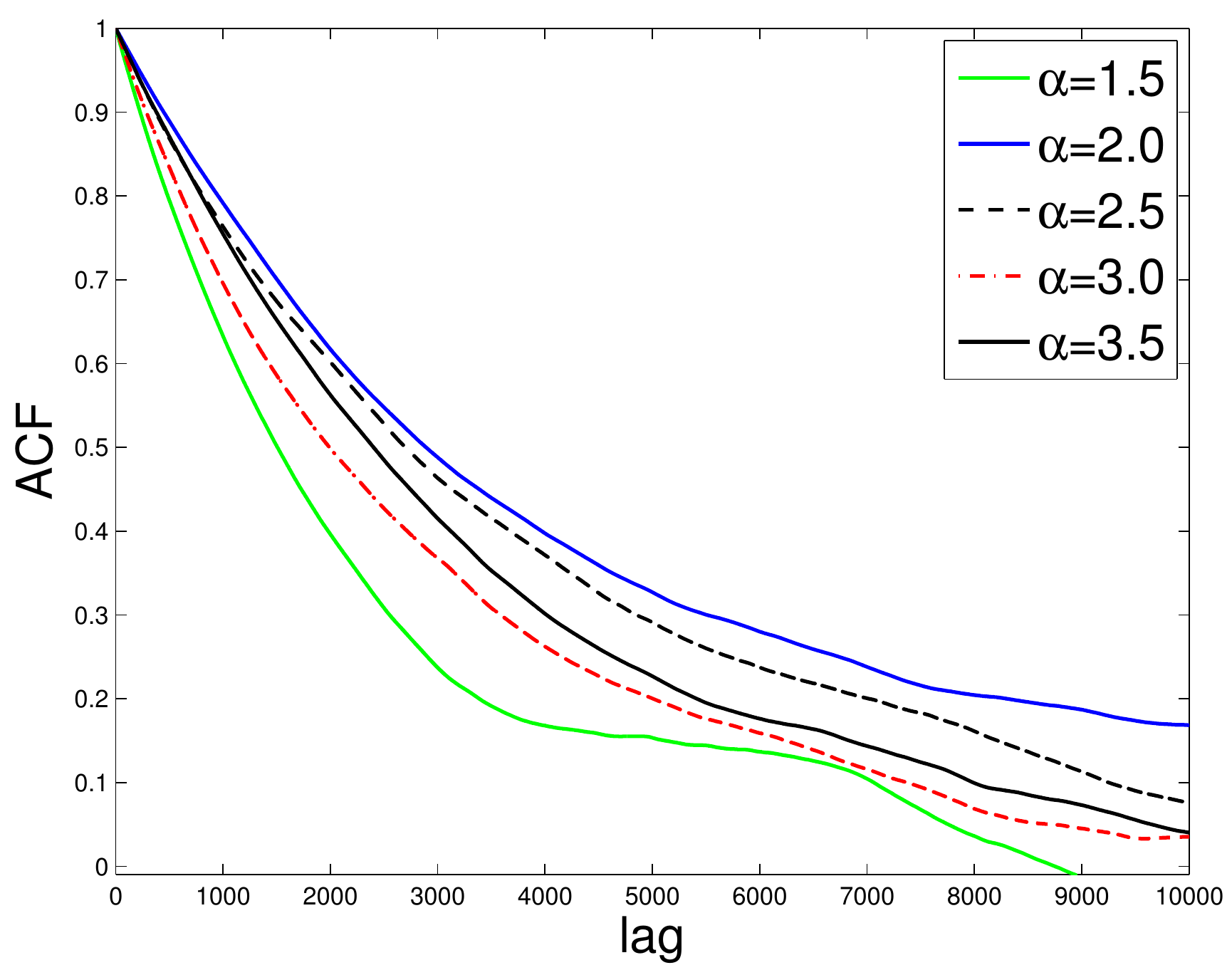}
 \caption{Identification of structural geology (layer model). Top: True $\kappa$ in eq. (\ref{sdarcy}). Bottom-left: PSRF from multiple chains with $\alpha=2.5$ in (\ref{eq:cova2}).  Bottom-right: ACF of first KL mode of the level set function from single-chain  single-chain MCMC with different choices of $\alpha$.}   \label{Fig11}
\end{center}
\end{figure}

\begin{figure}[htbp]
\begin{center}
\includegraphics[scale=0.85]{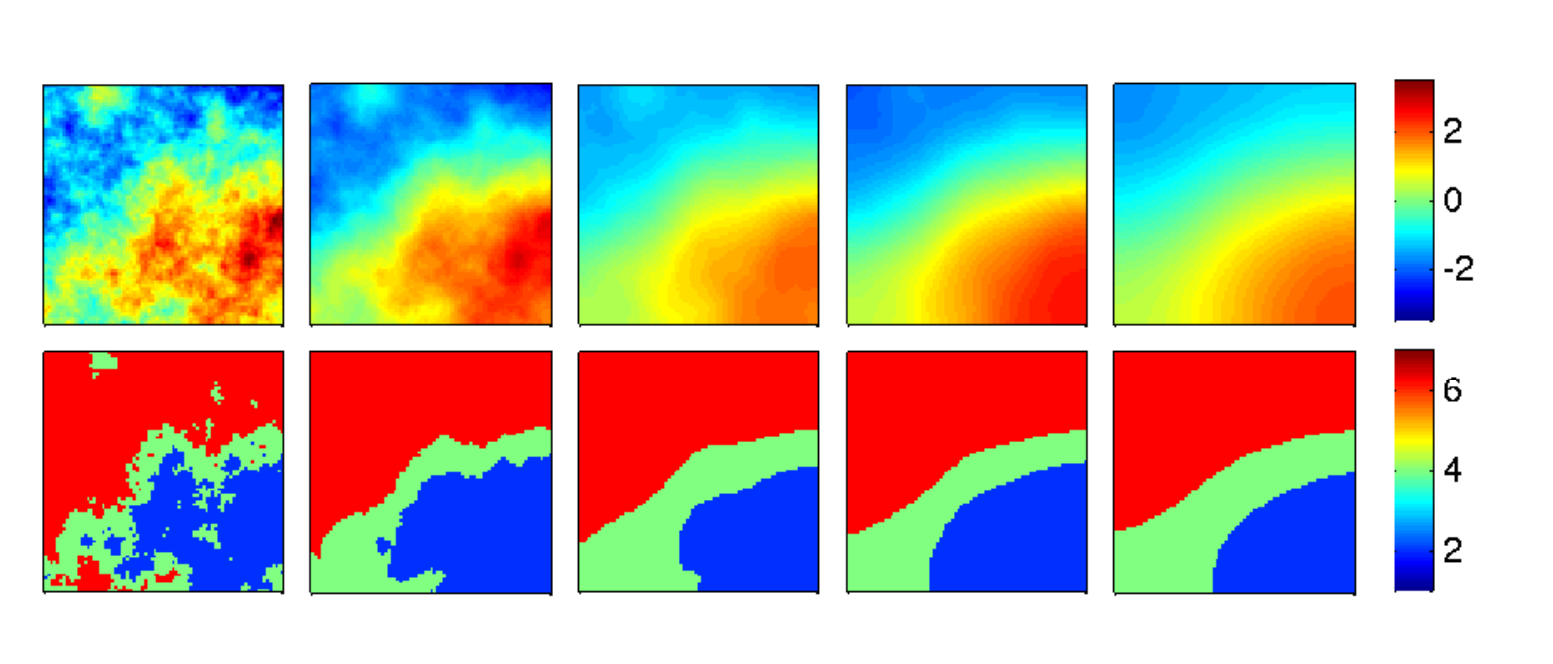}
\vskip-15pt
\includegraphics[scale=0.85]{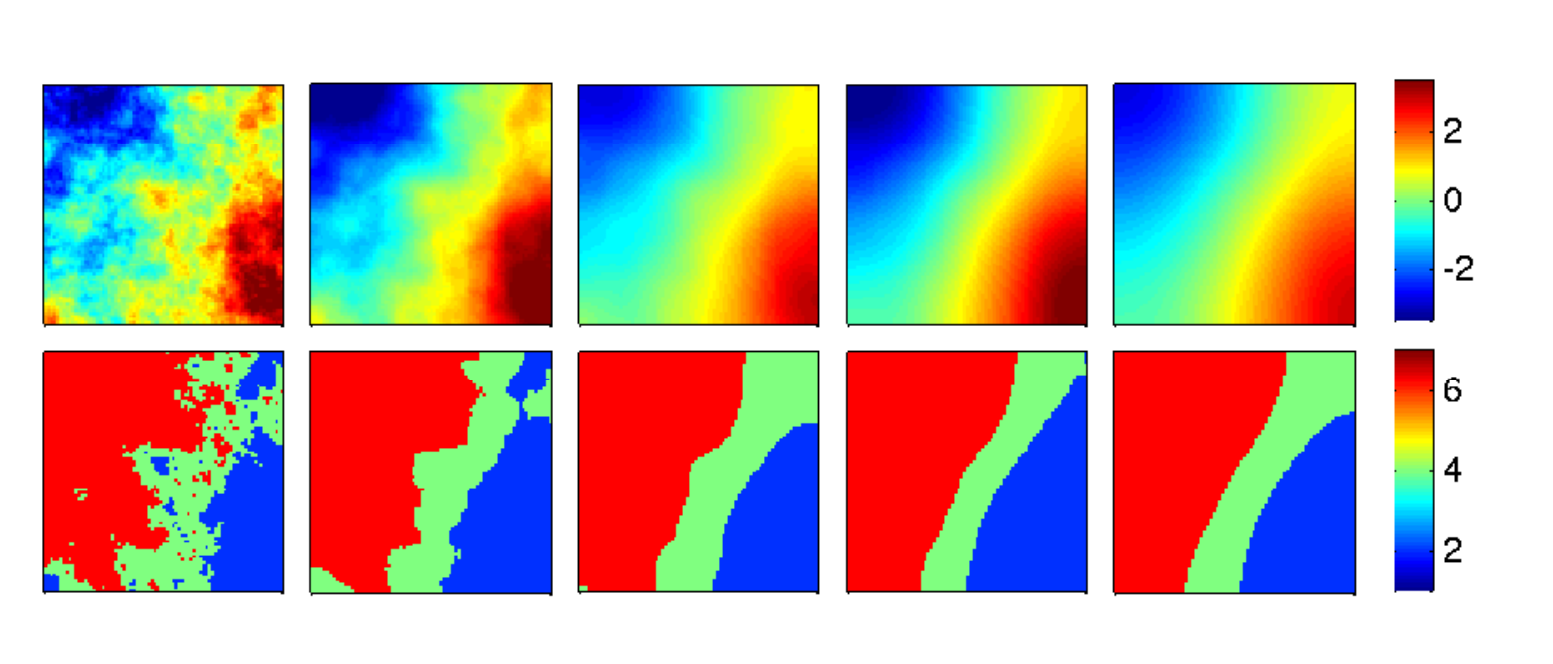}
\vskip-15pt
\includegraphics[scale=0.85]{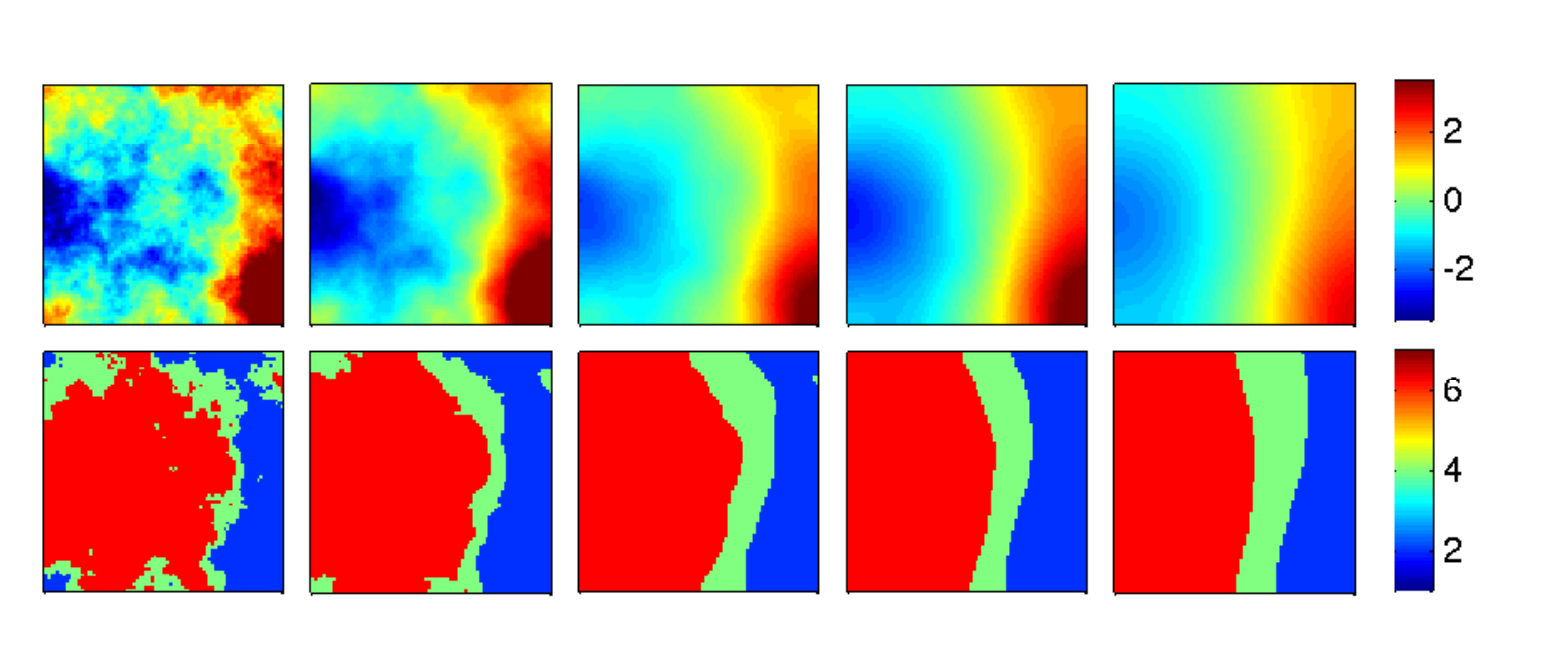}

 \caption{Identification of structural geology (layer model). Samples from the prior on the level set (first, third and fifth rows) for (from left to right) $\alpha=1.5, 2.0, 2.5, 3.0, 3.5$  in (\ref{eq:cova2}). $\kappa$ (second, fourth and sixth rows) associated to each of these samples from the level set function.}     \label{Fig12}
\end{center}
\end{figure}

\begin{figure}[htbp]
\includegraphics[scale=1.1]{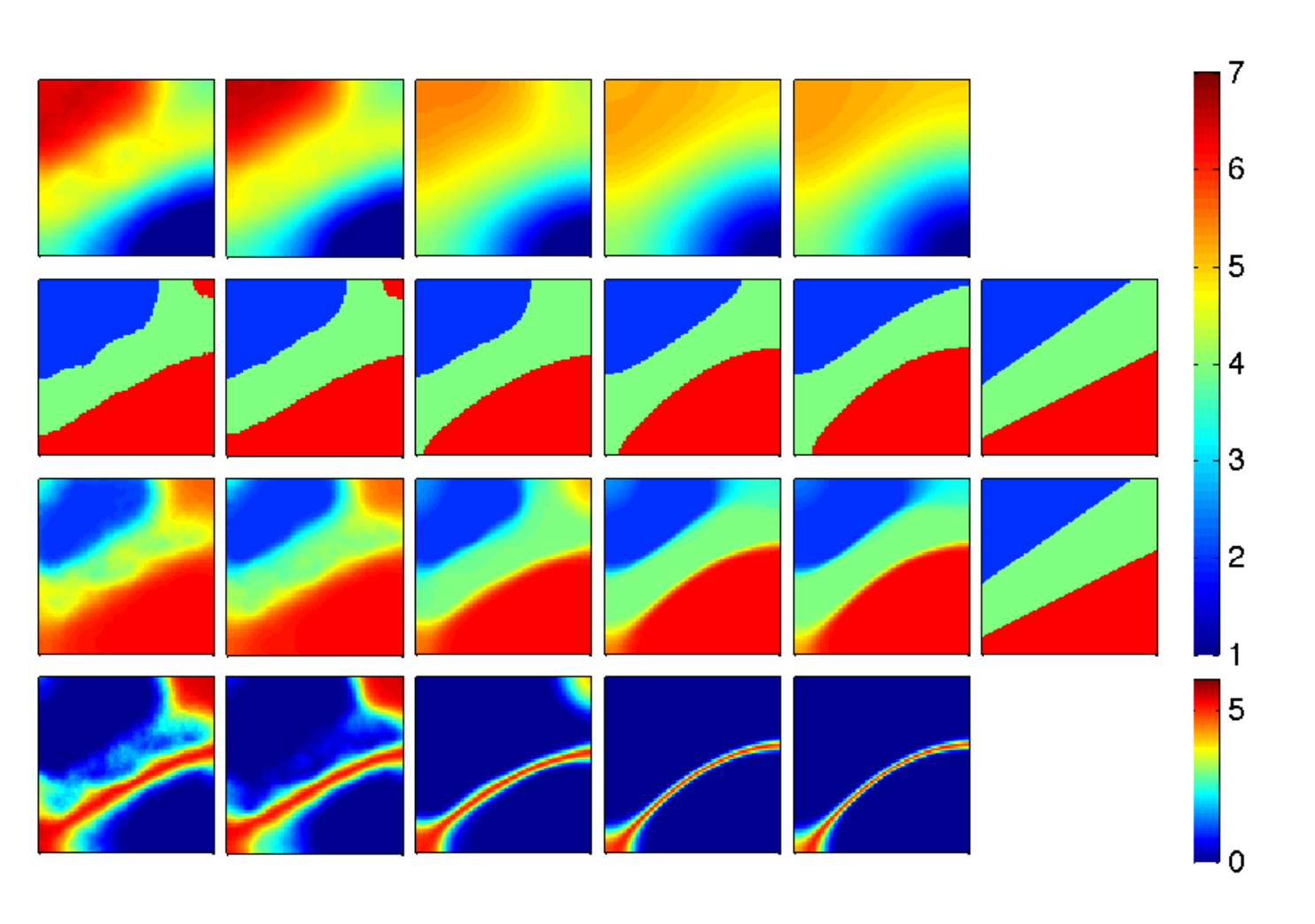}
 \caption{Identification of structural geology (layer model). MCMC results for (from left to right) $\alpha=1.5, 2.0, 2.5, 3.0, 3.5$  in the eq. (\ref{eq:cova2}). Top: MCMC mean of the level set function. Top-middle: $\kappa$ associated to the mean of the level set function (true $\kappa$ is displayed in the last column). Bottom-middle: Mean of the $\kappa$. Bottom: Variance of $\kappa$}   \label{Fig13}
\end{figure}

\begin{figure}[htbp]
\begin{center}
\includegraphics[scale=0.85]{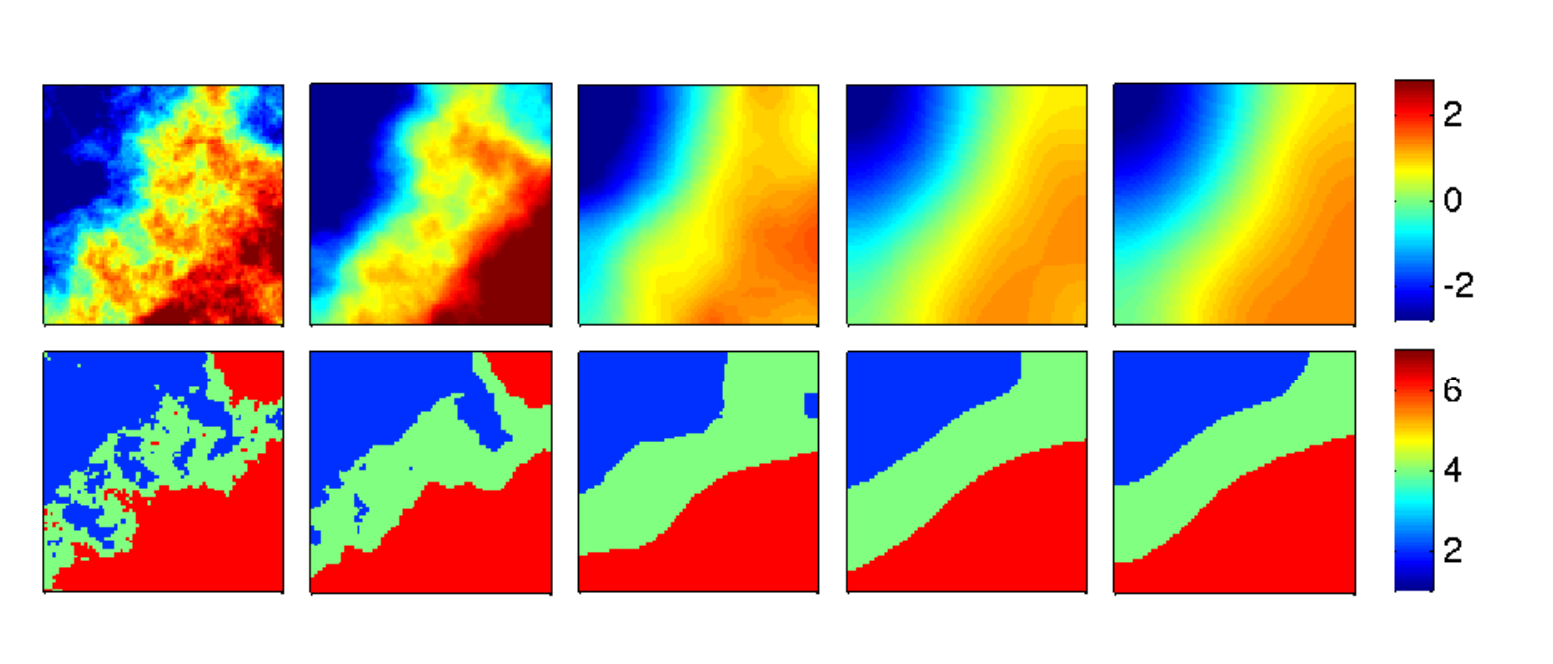}\\
\vskip-15pt
\includegraphics[scale=0.85]{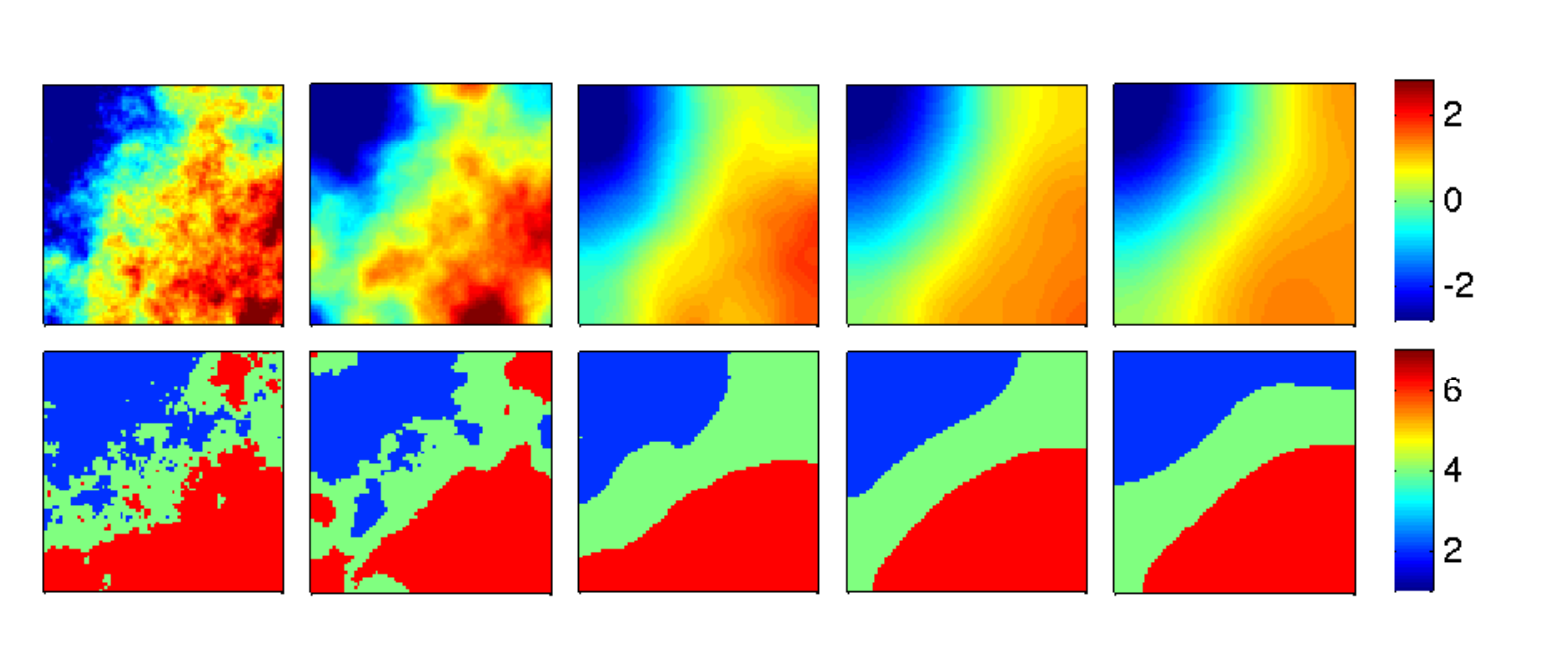}\\
\vskip-15pt
\includegraphics[scale=0.85]{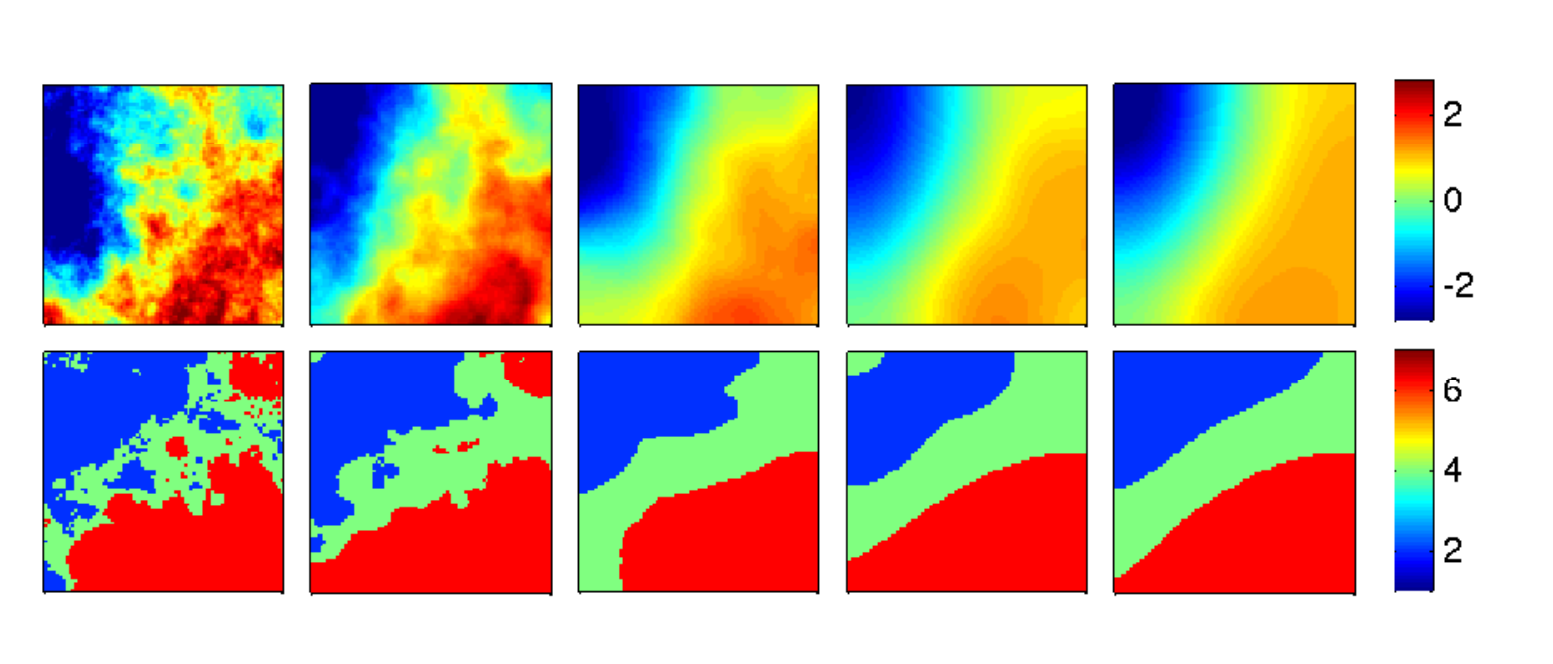}

 \caption{Identification of structural geology (layer model). Samples from the posterior on the level set (first, third and fifth rows) for (from left to right) $\alpha=1.5, 2.0, 2.5, 3.0, 3.5$  in the eq. (\ref{eq:cova2}). $\kappa$ (second, fourth and sixth rows) associated to each of these samples from the level set function.}   \label{Fig14}

\end{center}
\end{figure}

\begin{figure}[htbp]
\begin{center}
\includegraphics[scale=0.25]{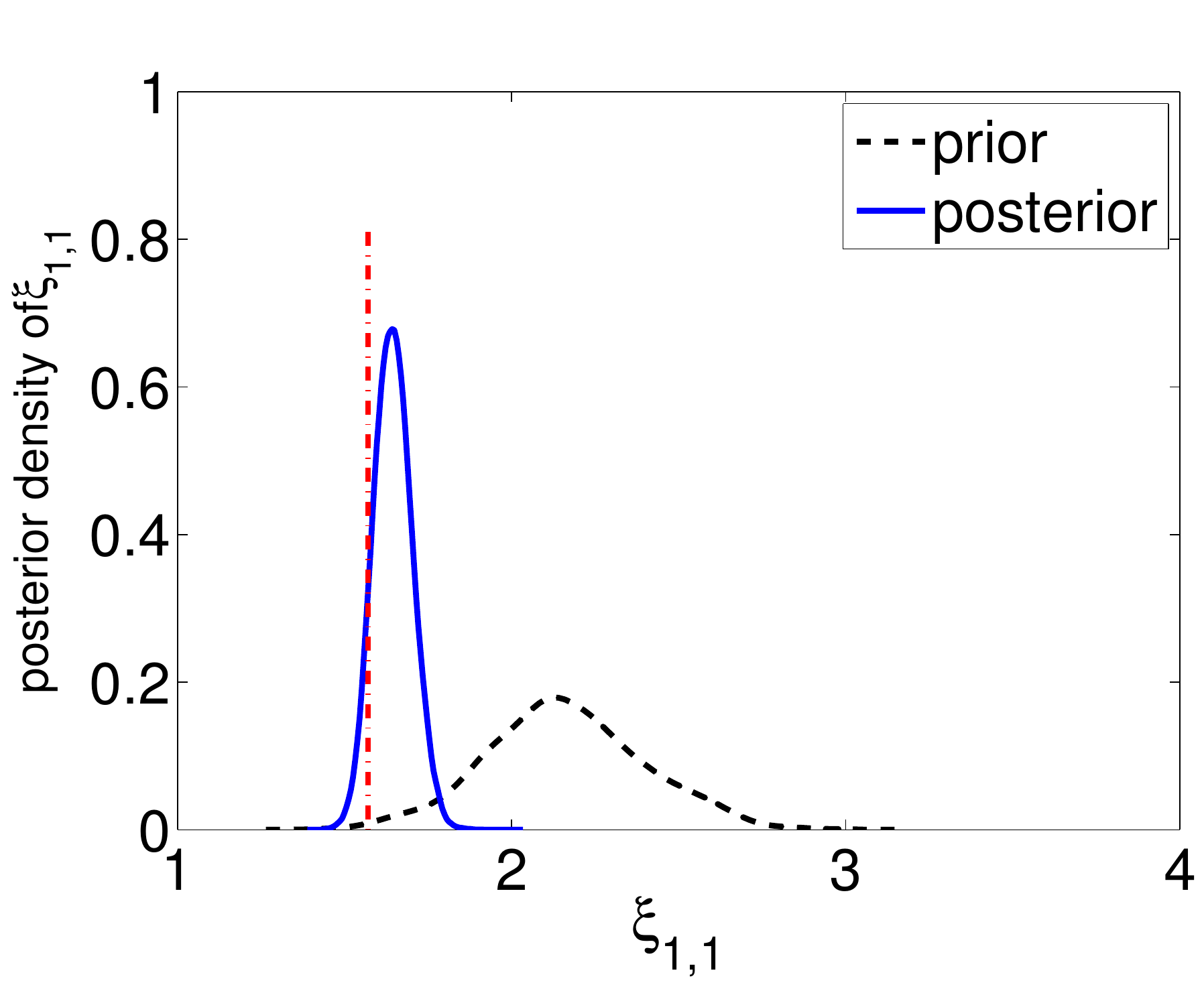}
\includegraphics[scale=0.25]{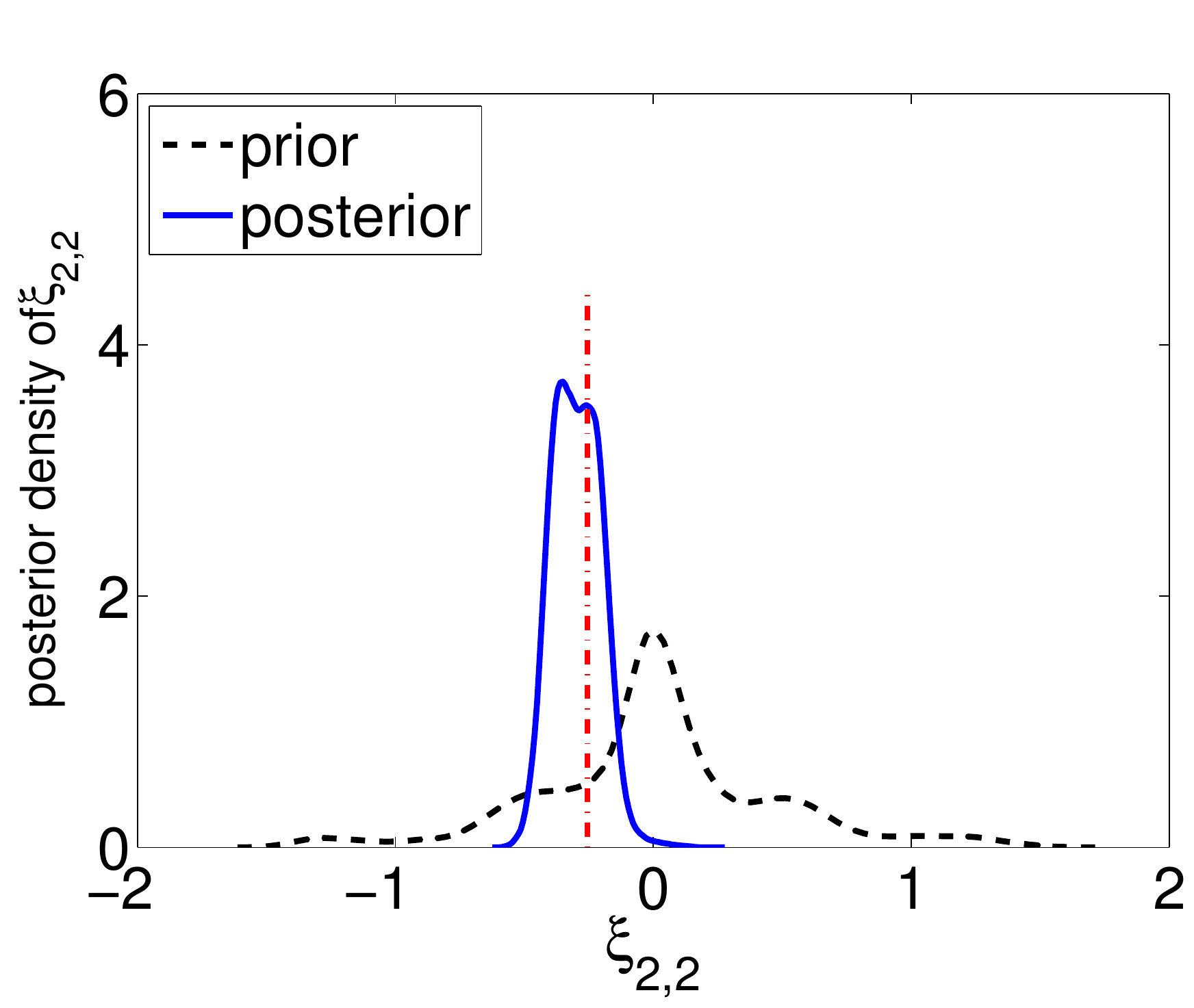}
\includegraphics[scale=0.25]{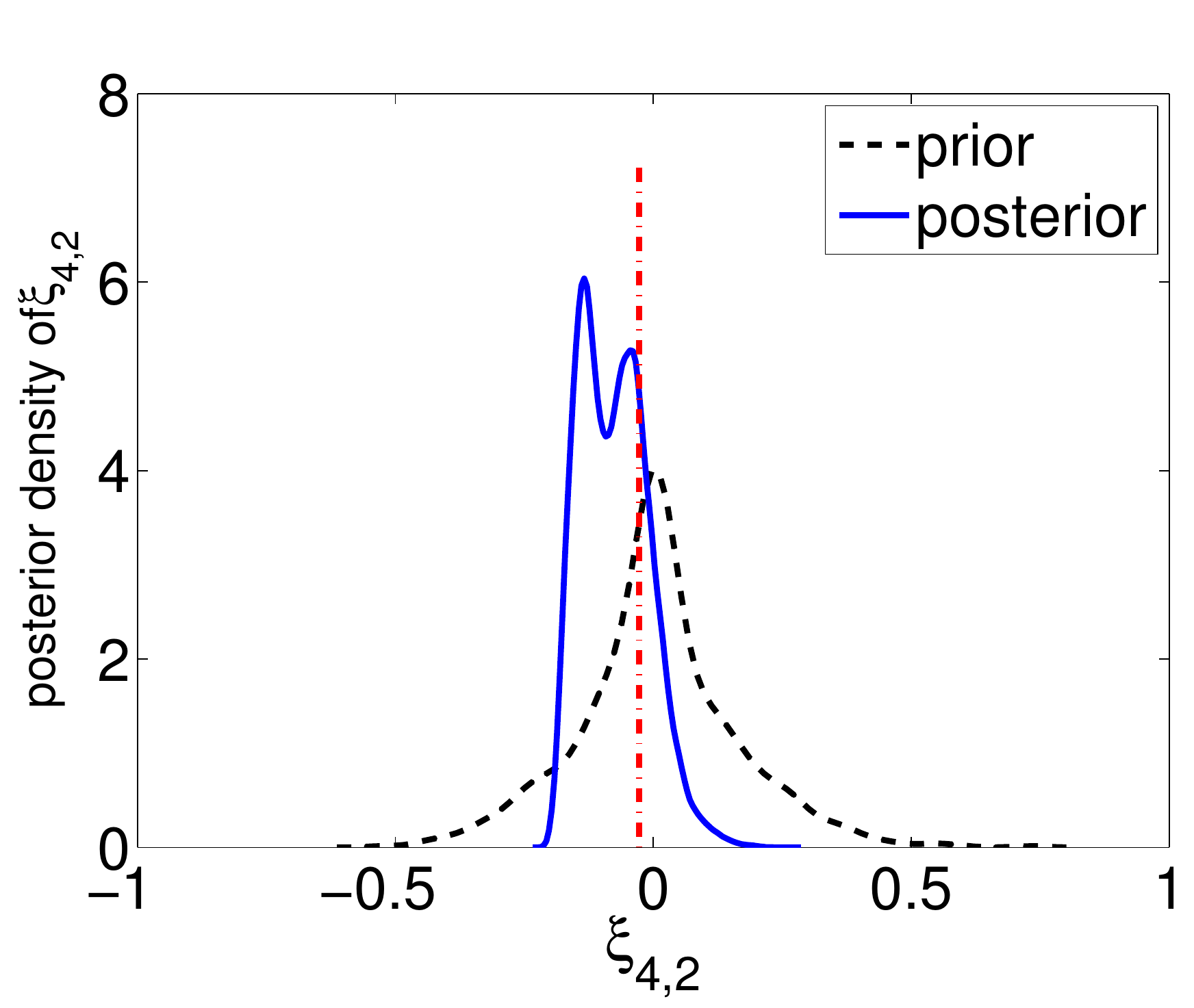}\\
\includegraphics[scale=0.25]{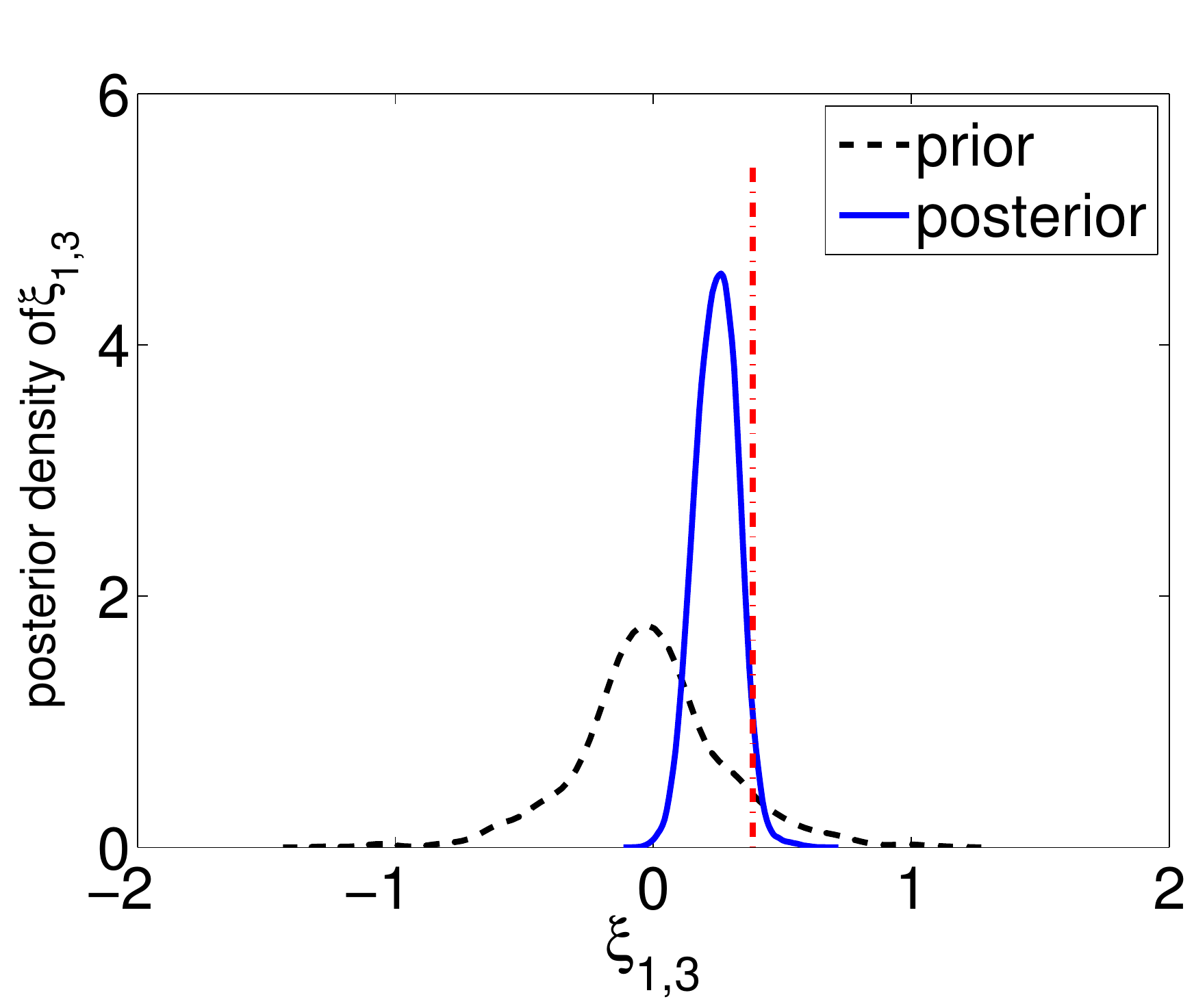}
\includegraphics[scale=0.25]{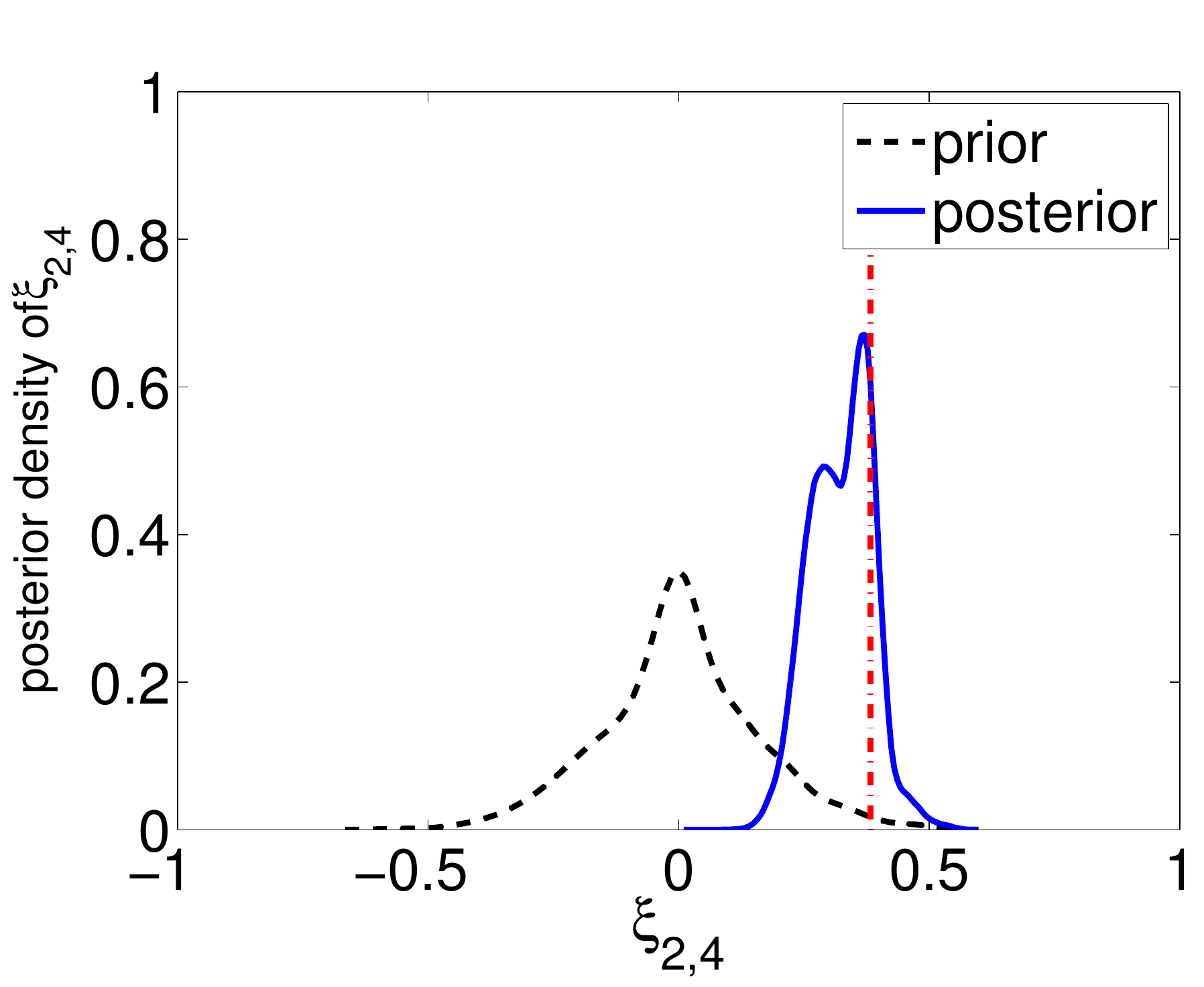}
\includegraphics[scale=0.25]{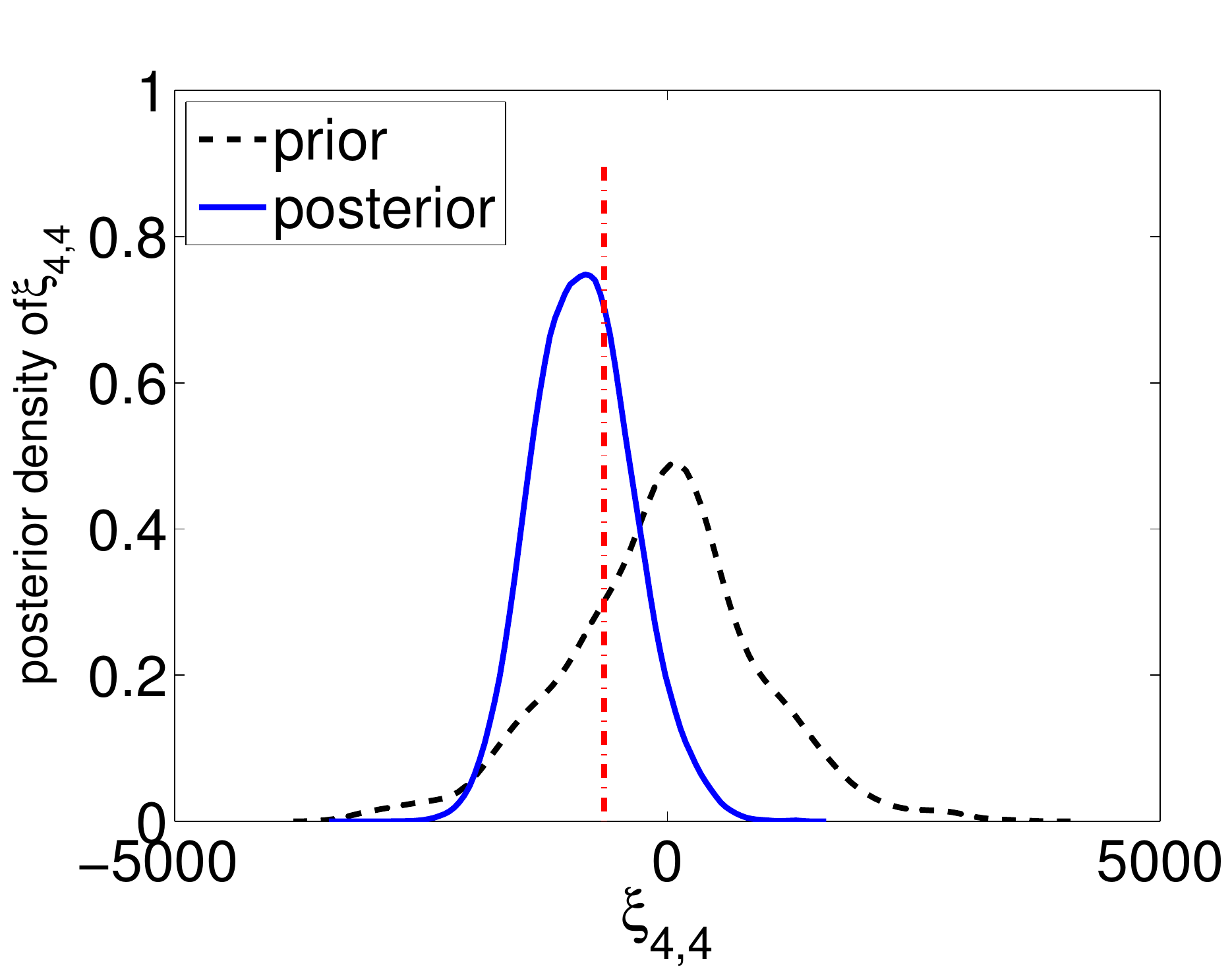}
 \caption{Identification of structural geology (layer model). Densities of the prior and posterior of some DCT coefficients of the  $\kappa$'s obtained from MCMC samples on the level set for $L=0.4$ (vertical dotted line indicates the truth).}   
   \label{Fig15}
\end{center}
\end{figure}

\section{Conclusions}
\label{sec:con}

The primary contributions of this paper are:

\begin{itemize}

\item We have formulated geometric inverse problems within the
Bayesian framework.

\item This framework leads to a well-posedness of the level set
approach, something that is hard to obtain in the context of
classical regularization techniques for level set inversion
of interfaces. 

\item The framework also leads to the use of state-of-the-art
function-space MCMC methods for sampling of the posterior
distribution on the level set function. An explicit motion
law for the interfaces is not needed: the MCMC accept-reject mechanism
implicitly moves them.

\item We have studied two examples: inverse source reconstruction
and an inverse conductivity problem. In both cases we have
demonstrated that, with appropriate choice of priors, the
abstract theory applies. We have also highlighted the behavior
of the algorithms when applied to these problems. 

\item The fact that no explicit level set equation is required helps 
to reduce potential issues arising in level set inversion, such as
reinitialization. In most computational level set approaches \cite{BO05}, 
the motion of the interface by means of the standard level set equation 
often produces level set functions that are quite flat. For the mean curvature flow problem, such fattening phenomena
is firstly observed in \cite{ES91} where the surface evolution starts from a "figure eight" shaped initial surface; in addition it has been shown to happen even if the initial surface is smooth \cite{AIC95}.  This causes 
stagnation as the interface then move slowly. Ad-hoc approaches, such as 
redistancing/reinitializing the level set function with a signed distance 
function, are then employed to restore the motion of the interface. 
In the proposed computational framework, not only does the MCMC accept-reject 
mechanism induce the motion of the intertace, but it does so in a way that 
avoids creating flat level set functions underlying the permeability. 
Indeed, we note that the posterior samples of the level set function inherit the same properties from the ones of the prior. Therefore, the probability of 
obtaining a level set function which takes any given value on a set of
positive measure is zero under the posterior, as it is under the prior.
This fact promotes very desirable, and automatic, algorithmic robustness.

\end{itemize}

Natural directions for future research include the following:

\begin{itemize}

\item The numerical results for the two examples that we consider
demonstrate the sensitive dependence of the posterior distribution
on the length-scale parameter of our Gaussian priors. It
would be natural to study automatic selection techniques
for this parameter, including hierarchical Bayesian modelling. 

\item We have assumed that the values of $\kappa_i$ on each unknown domain
$D_i$ are both known and constant. It would be interesting, and possible,
to relax either or both of these assumptions, as was done in the
finite geometric parameterizations considered in \cite{ILS14}.

\item The numerical results also indicate that initialization
of the MCMC method for the level set function 
can have significant impact on the performance of the inversion
technique; it would be interesting to study this issue more
systematically.

\item The level set formulation we use here, with a single level set function
and possibly multiple level set values $c_i$ has been used for modeling island dynamics \cite{Merriman98levelset} where a nested structure is assumed for the regions $\{D_i\}_{i=1}^n$ see Figure \ref{fig_dom_a}. However, we comment that there exist objects with non-nested regions, such as the one depicted in Figure \ref{fig_dom_b}, which can not be represented by a single level set function.
. It would be of interest to extend this work to the consideration
of vector-valued level set functions. 
In the case of binary obstacles, it is enough to represent the shape via a single level set function (cf. \cite{S96}). However, in the case of $n$-ary obstacles or even more complex geometric objects, the representation is more complicated; see the review papers \cite{DL06, DL09, TC04} for more details.

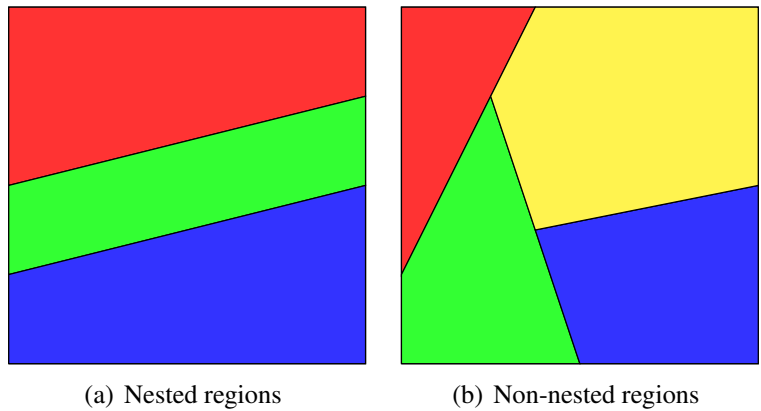
\begin{figure}[h]
\centering
\subfigure[Nested regions]{\label{fig_dom_a}
 \resizebox{0.3\textwidth}{!}{
\begin{tikzpicture}
\draw (0,0) -- (4,0) -- (4,4) -- (0,4) -- (0,0);
 \draw[fill=red!80] (0,2) -- (4,3) -- (4,4) -- (0,4) -- (0,2) --cycle;
 \draw[fill=green!80] (0,1) -- (4,2) -- (4,3) -- (0,2) --cycle;
 \draw[fill=blue!80] (0,0) -- (4,0) -- (4,2) -- (0,1) --cycle;
\end{tikzpicture}
}}
\subfigure[Non-nested regions]{\label{fig_dom_b}
 \resizebox{0.3\textwidth}{!}{
\begin{tikzpicture}
\draw (0,0) -- (4,0) -- (4,4) -- (0,4) -- (0,0);
 \draw[fill=red!80] (0,1) -- (1.5,4) -- (0,4) --cycle;
 \draw[fill=green!80] (0,0)-- (2,0)-- (1,3)-- (0,1) --cycle;
 \draw[fill=blue!80] (2,0)-- (4,0)-- (4,2)-- (1.5,1.5) --cycle;
 \draw[fill=yellow!80] (1.5,1.5) -- (4,2)--(4,4)--(1.5,4)--(1,3)--cycle;
\end{tikzpicture}
}}
\caption{Nested regions and non-nested regions}\label{fig_dom}
 \end{figure}

\item The Bayesian framework could be potentially combined with other parameterizations of unkwon geometries. For example, the pluri Gaussian approach has been used with EnKF in \cite{liu2005ensemble} to identify geologic facies. 

\end{itemize}

\vspace{0.2in}

\noindent{\bf Acknowledgments} 
YL is  is supported by EPSRC as part of the MASDOC DTC at the University of Warwick with grant No. EP/HO23364/1.
AMS is supported by the (UK) EPSRC Programme Grant EQUIP, and by the (US)
Ofice of Naval Research. 

\vspace{0.2in}

\bibliographystyle{plain}
\bibliography{ref}

\section{Appendix 1}

\begin{proof}[Proof of Theorem \ref{t:main}]
Notice that the random variable $y|u$ is distributed according to the measure $\Q_u$, which is the translate of $\Q_0$ by $\G(u)$, satisfying $\Q_u \ll \Q_0$ with Radon-Nikodym derivative
\[
\frac{\dd \Q_u}{\dd \Q_0}(y) \propto \exp\Bigl(- \Phi(u;y) \Bigr);
\]
where $\Phi: U\times \R^J \gt \R$ is the least squares function
given in \eref{Phi}.
Thus for the given data $y$, $\Phi(u; y)$ is up to a constant, the {\em negative log likelihood}. We denote by $\nu_0$ the product measure
\be\label{eq:prodmeas}
\nu_0 (du, dy) = \mu_0(d u)\Q_0(d y).
\en
Suppose that $\Phi(\cdot, \cdot)$ is $\nu_0$ measurable, then the random variable $(u, y)\in U\times Y$ is distributed according to $\nu (d u, d y)$ with
\[
\frac{\dd \nu}{\dd \nu_0}(u, y) \propto \exp\Bigl( -\Phi(u;y) \Bigr).
\] 

From Assumptions \ref{a:1}(ii) and the continuity of $\Phi(u;y)$ with 
respect to $y$, we know that $\Phi(\cdot;\cdot): U\times Y \gt \R$ is 
continuous $\nu_0-$almost surely. Then it follows from Lemma \ref{lem_asc} 
below that $\Phi(\cdot;\cdot)$ is $\nu_0$-measurable. On the other hand, by 
Assumptions \ref{a:1}(i), for $|y|_\Gamma \leq r$, we obtain the upper 
and lower bound for $Z$,
   \[
 0 < \exp(- K(r, \kappa_{\min}))   \leq Z = \int_U \exp(-\Phi(u; y))\mu_0 (du) \leq 1
   \]
Thus the measure is normalizable and applying the Bayes' Theorem 3.4
from \cite{S13} yields the existence of $\mu^y$.

Let $Z = Z(y)$ and $Z^\prime = Z(y^\prime)$ be the normalization constants for $\mu^y$ and $\mu^{y^\prime}$, i.e.
   \[
    Z = \int_U \exp(-\Phi(u; y))\mu_0 (du), \quad  Z^\prime = \int_U \exp(-\Phi(u; y^\prime))\mu_0 (du)
   \]
   We have seen above that
   \[
   1 \geq Z, Z^\prime \geq \exp(- K(r,\kappa_{\rm min})) > 0.
   \]
    From Assumptions \ref{a:1}(iii),
   \[
   |Z - Z^\prime|  \leq \int |\exp(-\Phi(u;y)) - \exp(-\Phi(u;y^\prime))| \mu_0 (\dd u) \leq \int |\Phi(u;y) - \Phi(u;y^\prime)| \mu_0 (\dd u) \\  \leq C(r) |y - y^\prime |_\Gamma
   \]
   Thus, by the definition of Hellinger distance, we have
   \[
   2 d_{{\rm Hell}} (\mu^y, \mu^{y^\prime})^2  = \int \left( Z^{-1/2} \exp\left(-\frac{1}{2}\Phi(u;y)\right) -  (Z^\prime)^{-1/2} \exp\left(-\frac{1}{2}\Phi(u;y^\prime)\right)\right)^2 \mu_0 (\dd u) \\
   \leq I_1 + I_2
   \]
   where
\begin{eqnarray*}
       I_1 & =& \frac{2}{Z} \int \left( \exp\left(-\frac{1}{2}\Phi(u;y)\right) - \exp\left(-\frac{1}{2}\Phi(u;y^\prime)\right)\right)^2 \mu_0 (\dd u) \\
       I_2  &=& 2|Z^{-1/2} - (Z^\prime)^{-1/2}|^2\int \exp(-\Phi(u; y^\prime)) \mu_0 (\dd u)
\end{eqnarray*}
   Applying (i) and (iii) in Assumptions \ref{a:1} again yields
   \[
   \frac{Z}{2} I_1 \leq C(r)|y - y^\prime|^2_{\Gamma}
   \]
   and
\[
   I_2 \leq C(r) |Z^{-1/2} - (Z^\prime)^{-1/2}|^2 \leq C(r) (Z^{-3} \vee (Z^\prime)^{-3}) |Z - Z^\prime|^2 \leq C(r)|y - y^\prime|^2_\Gamma
   \]
   Therefore the proof that the measure is Lipschitz is 
completed by combining the preceding estimates. The final statement follows
as in \cite{S10}, after noting that $f \in L^2_{\mu_0}(U;S)$ implies
that  $f \in L^2_{\mu^y}(U;S)$, since $\Phi(\cdot;y) \ge 0.$
\end{proof}

\begin{lem}\label{lem_asc}
 Let $U$ be a separable Banach space and $(U, \Sigma, \mu)$ be a complete probability space with $\sigma$-algebra $\Sigma$. If a functional $\cF: U \gt \R$ is continuous $\mu$-almost surely, i.e. $\mu(M) = 1$ where $M$ denotes the set of the continuity points of $\cF$, then $\cF$ is $\Sigma$-measurable. 
\end{lem}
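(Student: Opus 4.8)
The plan is to exploit the fine structure of the continuity set $M$ together with the completeness of $(U,\Sigma,\mu)$; throughout I recall that $\Sigma$ contains the Borel $\sigma$-algebra $\mathcal{B}(U)$, being its $\mu$-completion as constructed in Appendix 2. First I would establish the classical fact that, for any function $\cF:U\to\R$ on a metric space, the set of continuity points is Borel. To make this precise I would introduce the oscillation function
$$\omega(x)=\inf_{\delta>0}\ \sup_{y,z\in B(x,\delta)}|\cF(y)-\cF(z)|,$$
and note that $x$ is a continuity point of $\cF$ precisely when $\omega(x)=0$, so that $M=\{x:\omega(x)=0\}$. Since $\omega$ is upper semicontinuous — each set $\{x:\omega(x)<\epsilon\}$ is open — one writes $M=\bigcap_{n\ge1}\{x:\omega(x)<1/n\}$, a countable intersection of open sets, whence $M\in\mathcal{B}(U)\subseteq\Sigma$.

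Next I would restrict attention to $M$. By hypothesis each $x\in M$ is a continuity point of $\cF$ on $U$; since neighborhoods of $x$ in the subspace $M$ are traces of neighborhoods in $U$, this continuity passes to the restriction, so $\cF|_M:M\to\R$ is continuous. Consequently, for every Borel $A\subseteq\R$, the preimage $(\cF|_M)^{-1}(A)$ is Borel in the subspace $M$. Because $M$ is itself Borel in $U$, the subspace Borel $\sigma$-algebra of $M$ coincides with $\{B\cap M:B\in\mathcal{B}(U)\}$, so $(\cF|_M)^{-1}(A)=\cF^{-1}(A)\cap M$ is in fact Borel in $U$, hence an element of $\Sigma$.

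The remaining contribution to $\cF^{-1}(A)$ comes from the discontinuity set $N:=U\setminus M$, and this is exactly where completeness is essential. Since $\mu(M)=1$ we have $\mu(N)=0$ with $N\in\Sigma$, and by completeness every subset of the $\mu$-null set $N$ belongs to $\Sigma$; in particular $\cF^{-1}(A)\cap N\in\Sigma$ for any $A$, with no regularity on $\cF$ required over $N$. Writing
$$\cF^{-1}(A)=\bigl(\cF^{-1}(A)\cap M\bigr)\cup\bigl(\cF^{-1}(A)\cap N\bigr)$$
then exhibits $\cF^{-1}(A)$ as a union of two members of $\Sigma$, hence a member of $\Sigma$. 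Letting $A$ range over a generating family such as the half-lines $(a,\infty)$ yields $\Sigma$-measurability of $\cF$.

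The only genuinely delicate point is the first step — verifying that $M$ is Borel through the upper semicontinuity of the oscillation — together with the bookkeeping that Borel subsets of the Borel set $M$ remain Borel in $U$. Once these are in place, completeness absorbs the null set $N$ and the conclusion is immediate; I expect no further obstacle, since away from $M$ no information about $\cF$ is needed.
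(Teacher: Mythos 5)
Your proof is correct and follows essentially the same route as the paper's: both decompose the preimage into its trace on the continuity set $M$ and on the null complement, dispose of the latter by completeness, and show the former is the intersection of $M$ with a Borel set --- the paper via an explicit union of balls $\bigcup_{v} B_{\delta_v}(v)$ chosen by pointwise continuity, you via continuity of the restriction $\cF|_M$. Your additional verification that $M$ is a Borel $G_\delta$ set via the oscillation function is a welcome refinement, since the paper simply asserts the measurability of $M$.
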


\begin{proof}
 By the definition of measurability, it suffices to show that for any $c > 0$
 \[
 S := \{u\in U\ |\ \cF(u) > c\} \in \Sigma.
 \]
 One can write $S$ as $S = (S\cap M) \cup (S\setminus M)$. Since $\cF$ is continuous $\mu$-almost surely, $M$ is measurable and $\mu (M) = 1$. It follows from the completeness of the measure $\mu$ that $S \setminus M$ is measurable and $\mu(S \setminus M) = 0$. Now we claim that $S \cap M$ is also measurable. Denote $B_{\delta} (u) \subset U$ to be the ball of radius $\delta$ centered at $u \in U$. For each $v\in S \cap M$, as $\cF$ is continuous at $v$, there exists $\delta_v > 0$ such that if $v^\prime\in B_{\delta_v}(v)$, then $\cF(v^\prime) > c$. Therefore $S\cap M$ can be written as
\[
 S\cap M = M  \bigcap \bigcup_{v\in S \cap M} {B}_{\delta_{v}}\left(v \right)
 \]
 that is the intersection of the measurable set $M$ with the open set $\bigcup_{v\in S \cap M} {B}_{\delta_{v}}\left(v \right)$. So $S \cap M$ is measurable. Then it follows that $\cF$ is $\Sigma$-measurable.
 \end{proof}

\begin{proof}[Proof of Proposition \ref{thm_cont}]
``$\Longleftarrow$.''  
Let $\{u_\varepsilon\}$ denote
any approximating family of level set functions with limit $u$ as
$\varepsilon \to 0$ in $C(\OL{D};\R): \|u_\varepsilon - u\|_{C(\OL{D})} < \varepsilon \gt 0$.  Let $D_{i,\varepsilon}, D^0_{i,\varepsilon}$ be the sets defined in \eref{eq_lss} associated with the approximating level set function $u_\varepsilon$ and define $\kappa=F(u)$ by \eref{eq_lsf} and, similarly,
$\kappa_\varepsilon := F(u_\varepsilon)$. Let $m(A)$ denote the Lebesgue measure of the set $A$.

Suppose that $m (D^0_i) = 0, i = 1, \cdots, n-1$. Let $\{u_\varepsilon\}$ be the above approximating functions. We shall prove $\|\kappa_\varepsilon - \kappa\|_{L^q(D)} \gt 0$. In fact, we can write
  \begin{eqnarray*}
  \kappa_\varepsilon(x) - \kappa(x) & = \sum_{i=1}^n \sum_{j=1}^n (\kappa_i - \kappa_j) \I_{D_{i, \varepsilon} \cap D_j}(x)\\
 & = \sum_{i,j=1, i\neq j}^n (\kappa_i - \kappa_j) \I_{D_{i, \varepsilon} \cap D_j}(x).
  \end{eqnarray*}
  By the definition of $u_\varepsilon$, for any $x\in D$
  \be\label{eq_lsf_ineq}
  u(x) - \varepsilon <  u_\varepsilon(x) < u(x) + \varepsilon
  \en
  Thus for $|j- i| > 1$ and $\varepsilon$ sufficiently small, $D_{i,\varepsilon}\cap D_j = \varnothing$. For the case that $|i - j| = 1$, from \eref{eq_lsf_ineq}, it is easy to verify that

\begin{eqnarray}\label{eq_lsf_set}
   D_{i,\varepsilon}\cap D_{i+1}  \subset \widetilde{D}_{i,\varepsilon} &:=& \{x\in D\ |\ c_i \leq u(x) < c_i + \varepsilon\} \gt D^0_{i}, \ i = 1,\cdots, n-1\\
   D_{i,\varepsilon}\cap D_{i-1}  \subset \widehat{D}_{i-1,\varepsilon}&:= &\{x\in D\ |\ c_{i-1} - \varepsilon < u(x) < c_{i-1}\} \gt \varnothing ,\ i = 2, \cdots n
\end{eqnarray}

  as $\varepsilon \gt 0$. By this and the assumption that $m(D^0_i) = 0$, we have that $m(D_{i,\varepsilon}\cap D_j) \gt 0$ if $i\neq j$. Furthermore, the Lebesgue's dominated convergence theorem yields
  \[
  \|\kappa_\varepsilon - \kappa\|_{L^q(D)}^q = \sum_{i,j = 1, i\neq j}^n \int_{D_{i, \varepsilon}\cap D_j}|\kappa_i - \kappa_j|^q\dd x \gt 0
  \]
  as $\varepsilon \gt 0$. Therefore, $F$ is continuous at $u$.

``$\Longrightarrow$.'' We prove this by contradiction.  Suppose that there exists $i^\ast$ such that $m(D^0_{i^\ast}) \neq 0$. We define $u_\varepsilon := u - \varepsilon$, then it is clear that $\|u_\varepsilon - u\|_{C(\OL{D})}\gt 0$ as $\varepsilon \gt 0$. By the same argument used in proving the sufficiency,
  \[
     \|\kappa_\varepsilon - \kappa\|^q_{L^q(D)}  = \sum_{i=1}^{n-1} \int_{\widetilde{D}_{i,\varepsilon}\cup \widehat{D}_{i,\varepsilon}} |\kappa_{i+1} - \kappa_i|^q\dd x \gt \sum_{i=1}^{n-1}\int_{D^0_{i}} |\kappa_{i+1} - \kappa_i|^q \dd x \\
      > \int_{D^0_{i^\ast}} |\kappa_{i^\ast+1} - \kappa_{i^\ast}|^q \dd x > 0
  \]
  where we have used $m(D^0_{i^\ast}) \neq 0$ in the last inequality. However, this contradicts with the assumption that $F$ is continuous at $u$.
 \end{proof}

\section{Appendix 2}

Recall the Gaussian measure $\mu_0 = \N(0, \C)$ on the function space $\HH$
where $\C = \C_i, \HH = \HH_i, i = 1,2$ given in subsection \ref{ssec:pri}.
These measures can be constructed as Gaussian 
random fields. 

Let $(\Omega, \mathscr{F}, \mathbb{P})$ be a complete probability space, i.e. if $A\in \mathscr{F}$ with $\PP(A) = 0$, then $\PP(B) = 0$ for $B\subset A$. A random field on $D$ is a measurable mapping $u: D\times \Omega \gt \R$. Thus, for any $x\in D$, $u(x; \cdot)$ is a random variable in $\R$; whilst for any $\omega\in \Omega$, $u(\cdot; \omega): D\gt \R$ is a vector field. Denote by $(\R^\mathbb{N}, \B(\R^\mathbb{N}), \mathbb{P})$ the probability space of i.i.d standard Gaussian sequences equipped with product $\sigma$-algebra and product measure. In this paper, we consider $(\Omega, \mathscr{F}, \mathbb{P})$ as the completion of $(\R^\mathbb{N}, \B(\R^\mathbb{N}), \mathbb{P})$ in which case the $\sigma$-algebra $\mathscr{F}$ consists of all sets of the type $A\cup B$, where $A\in \B(\R^\mathbb{N})$ and $B \subset N\in \B(\R^\mathbb{N})$ with $\mathbb{P}(N) = 0$. Let $\omega = \{\xi_k\}_{k=1}^\infty \in \Omega = \R^\mathbb{N}$ be an i.i.d sequence with $\xi_1 \sim \N(0,1)$, 
and consider the random function $u\in \HH$ defined via the Karhunen-Lo\'eve 
expansion

\be\label{eq_rf}
   u(x; \omega) =\mathcal{T}(\omega):= \sum_{k=1}^{\infty} \sqrt{\lambda_k}\xi_k \phi_k (x),
   \en
   where $\{\lambda_k, \phi_k\}_{k=1}^\infty$ is the eigensystem of $\C$. By the theory of Karhunen-Lo\'eve expansions \cite{Bog98}, the law of the random function $u$ is identical to $\mu_0$. Recalling that $\alpha > 1$, the eigenvalues $\{\lambda_k\}$ of $\C_1$ decay like $k^{-\alpha}$ in two dimensions; whilst the eigenvalues of $\C_2$ will decay exponentially. Moreover, we assume further that $\phi_k\in U$ and that $\sup_k\|\phi_k\|_\infty < \infty$ which holds in simple geometries.
    Due to the decaying properties of the eigenvalues of $\C$, the truncated sum
\be\label{eq_rf2}
    \T_N(\omega) = \sum_{k=1}^{N} \sqrt{\lambda_k}\xi_k \phi_k
   \en
   admits a limit $\T$ in $L^2_{\PP}(\Omega; \HH)$. By the Kolmogorov
Continuity Theorem \cite{S13}, $\T$ is in fact H\"older continuous $\PP-$almost surely; in particular, $\T\in U$ $\PP$-almost surely. Then by Theorem 3.1.2 in \cite{AT07}, we have $\T_N\gt \T$ in the uniform norm of $U$, $\PP$-almost surely. Since for any $N\in \mathbb{}$, $\T_{N}: (\Omega, \mathscr{F})\gt (U, \mathscr{B}(U))$ is continuous and thus measurable, we know from the completeness of $(\Omega,\mathscr{F}, \PP)$ that the limit $\T$ is also measurable from $(\Omega, \mathscr{F})$ to $(U, \mathscr{B}(U))$ (see p30 in \cite{SS09}). The measurability of $\T$ enables us to define a new measure on $(U, \mathscr{B}(U))$ which we still denote by $\mu_0$ by the following:
 \be\label{eq_pf}   
 \mu_0 (A) = \mathbb{P}(\mathcal{T}^{-1}(A)) =  \mathbb{P}\left(\{\omega\in \Omega\ |\ \ u(\cdot; \omega)\in A\}\right
)\quad \textrm{ for } A\in \mathscr{B}(U).
   \en
   Thus $\mu_0$ is indeed the push-forward measure of $\mathbb{P}$ through $\mathcal{T}$. By definition, it is not hard to verify that $\mu_0$ is the Gaussian measure $\N(0,\C)$ on $(U, \mathscr{B}(U))$.
In addition, suppose that $B \subset N\in \mathscr{B}(U)$ with $\mu_0(N) = 0$; if we still define $\mu_0(B)$ according to \eref{eq_pf}, then $\mu_0(B) = \mathbb{P}(\mathcal{T}^{-1}(B)) = 0$ by the fact that $\mathcal{T}^{-1}(B)\subset \mathcal{T}^{-1}(N)$ and the completeness of $(\Omega, \mathscr{F}, \PP)$. Denote by $\Sigma$ the smallest $\sigma$ algebra containing $\BB(U)$ and all sets of zero measure under $\mu_0$ so that any set $E\in \Sigma$ is of the form $E = A\cup B$, where $A\in\mathscr{B}(U)$ and $B\subset N \in\mathscr{B}(U)$ with $\mu_0(N) = 0$.  Then $(U, \Sigma, \mu_0)$ is complete. 

We comment that although a Gaussian measure is usually defined as a Borel measure in the literature (see e.g. \cite{Bog98}), it is more convenient to work with a complete Gaussian measure in this paper; in particular, the completeness of $\mu_0$ is employed to show the measurability of the observational map in level set based inverse problems.

Considering a Gaussian random function $u(\cdot;\omega)$ with $\omega\in \Omega$, for any level constant $c\in \R$, we define the random level set 
\be\label{rls}
D^0_c = D^0_c(u(\cdot;\omega)) = D^0_c(\omega) := \{x\ |\ u(x; \omega) = c\}.
\en 
Recall that the measure space $(U, \Sigma, \mu_0)$ is the push-forward of $(\Omega, \mathscr{F}, \PP)$ under $\T$. We define the functional $\M_c: U \gt \R$ by
\[ 
\M_c u = m(\{x\ |\ u(x) = c\})
\]
and the composition $\RR_c := \M_c\circ \T$, as illustrated in the following commutative diagram:
\[
\begin{tikzcd}
(\Omega, \mathscr{F}, \mathbb{P}) \arrow{r}{\T}\arrow{rd}[swap]{\RR_c = \M_c\circ \T}
&(U, \Sigma, \mu_0)\arrow{d}{\M_c}\\
&(\R, \BB(\R))
\end{tikzcd}
\]

\begin{lem}\label{lem_meas1}
 For any $c\in \R$, $\M_c$ is $\Sigma$-measurable and $\RR_c$ is $\mathscr{F}$-measurable so that $m(D^0_c)$ is a random variable on both $(U, \Sigma, \mu_0)$ and $(\Omega, \mathscr{F}, \PP)$. 
\begin{proof}
 To prove $\M_c$ is $\Sigma$-measurable, we only need to check the set $A_t := \{u\in U \ |\  \M_c u \geq t\}\in \Sigma$ for any $t \in \R$. Since $\M_c$ is a non-negative map, for $t\leq 0$, it is obvious that $A_t = U$ and hence measurable. Now we claim that $A_t$ is closed in $U$ for $t > 0$. To that end, let $\{u_n\}_{n=1}^{\infty}$ be a sequence of functions in $A_t$ such that $\|u_n - u\|_U \gt 0$ for some $u\in U$ as $n\gt \infty$. We prove that $u\in A_t$. Since $\|u_n - u\|_U \gt 0$, there exists a subsequence which is still denoted by $u_n$ such that $\|u_n - u\|_U < 1/n$. By the definition of $A_t$, $u_n \in A_t$ means that $m(\{x\in \OL{D}\ | \ u_n(x) = c\}) \geq t$ for all $n$. Moreover, from the construction of $u_n$, $\{x\in \OL{D}\ |\ u_n(x) = c\}\subset B_n:= \{x\in \OL{D}\ | \ |u(x) - c| < 1/n\}$, which implies that $m(B_n) \geq t$. Noting that 
 \[
 \{x\in \OL{D}\ |\ u(x) = c\} = \cap_{n=1}^\infty B_n
 \]
 and that $B_n$ is decreasing, we can conclude that $m(\{x\in \OL{D}\ |\ u(x) = 0\}) \geq t$, i.e. $u\in A_t$. So $A_t$ is closed for $t > 0$. Then it follows from the measurability of $\T$ that $\RR_c$ is $\mathscr{F}$-measurable. Therefore $m(D^0_c)$ is a random variable on both $(U, \Sigma, \mu_0)$ and $(\Omega, \mathscr{F}, \PP)$. 
\end{proof}
\end{lem}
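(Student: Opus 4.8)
The plan is to reduce the $\Sigma$-measurability of $\M_c$ to a topological statement about its superlevel sets, and then to obtain the measurability of $\RR_c$ for free by composition. First I would observe that $\M_c$ is well defined on all of $U$: since $u\in C(\OL{D})$ is continuous, the level set $\{x: u(x)=c\}$ is closed, hence Lebesgue measurable, so $\M_c u = m(\{u=c\}) \in [0, m(\OL{D})]$ makes sense. Because $\M_c$ is real valued and non-negative, to prove measurability it suffices to show that the superlevel set $A_t := \{u\in U: \M_c u \ge t\}$ is measurable for every $t\in\R$. The case $t\le 0$ is immediate, since then $A_t = U$; the whole content is therefore in the range $t>0$.

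The main step, and the one I expect to be the crux, is to show that $A_t$ is \emph{closed} in $U$ for $t>0$; this is exactly the statement that $\M_c$ is upper semicontinuous, and it is where the delicate discontinuity behaviour of level sets must be controlled. I would take a sequence $u_n\in A_t$ with $u_n \gt u$ uniformly and aim to show $u\in A_t$. The key geometric observation is that uniform closeness forces the level set of $u_n$ into a thin tube around the level set of $u$: if $\|u_n-u\|_\infty < 1/n$ then $u_n(x)=c$ implies $|u(x)-c|<1/n$, so $\{u_n=c\}\subset B_n := \{x: |u(x)-c|<1/n\}$ and hence $m(B_n) \ge \M_c u_n \ge t$. (Passing to a subsequence with $\|u_n-u\|_\infty < 1/n$ is a convenient way to index this.) I would then note that the sets $B_n$ are \emph{decreasing} in $n$ with $\bigcap_n B_n = \{u=c\}$, and, crucially, that $m(B_1) \le m(\OL{D}) < \infty$ because $D$ is bounded. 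Continuity of Lebesgue measure from above then gives $m(\{u=c\}) = \lim_n m(B_n) \ge t$, i.e.\ $u\in A_t$, proving closedness.

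With $A_t$ closed for $t>0$ (and $A_t = U$ for $t\le 0$), every superlevel set lies in $\BB(U)\subset\Sigma$, so $\M_c$ is Borel measurable, and \emph{a fortiori} $\Sigma$-measurable. The measurability of $\RR_c = \M_c\circ\T$ is then a one-line consequence: $\T:(\Omega,\mathscr{F})\gt(U,\BB(U))$ is measurable (this was established earlier in the construction of the push-forward measure), and composing with the Borel-measurable $\M_c$ yields an $\mathscr{F}$-measurable map into $(\R,\BB(\R))$. I would close by remarking that the one-sided nature of the argument is essential: only the inclusion $\{u_n=c\}\subset B_n$ --- together with finiteness of $m$ and continuity from above --- is needed, and no matching lower bound holds (indeed $\M_c$ genuinely jumps down under perturbation, for instance off a flat level), which is precisely why one proves closedness of the superlevel sets rather than full continuity of $\M_c$.
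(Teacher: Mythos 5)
Your proposal is correct and follows essentially the same route as the paper's proof: reduce to closedness of the superlevel sets $A_t$ for $t>0$, trap $\{u_n=c\}$ in the shrinking tube $B_n=\{|u-c|<1/n\}$, and pass to the limit via continuity of Lebesgue measure from above, then obtain measurability of $\RR_c$ by composing with the measurable map $\T$. Your explicit remark that $m(B_1)\leq m(\OL{D})<\infty$ is needed for continuity from above is a small point the paper leaves implicit, but otherwise the arguments coincide.
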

The following theorem demonstrates that $m(D^0_c)$ vanishes almost surely on both measure spaces above. 
\begin{proposition}\label{thm_lsm}
Consider a random function $u$ drawn from one of the Gaussian probability 
measures $\mu_0$ on $(U,\Sigma)$ given in subsection \ref{ssec:pri}. 
 For $c \in \R$, the random level set $D^0_c$ of $u$ is defined by \eref{rls}. Then 
 
 (i) $m(D^0_c) = 0, \PP$-almost surely;
 
 (ii) $m(D^0_c) = 0, \mu_0$-almost surely.
 \begin{proof}
  (i) For any fixed $x\in D$, since the point evaluation $u(x)$ acts as a bounded linear functional on $U$, $u(x; \cdot)$ is a real valued Gaussian random variable, which implies $\PP(\{\omega\ |\ u(x, \omega) = c\}) = 0$. Moreover, noting that the random field $u : D\times \Omega \gt \R$ is a measurable map, if we view $m(D^0_c)$ as a random variable on $\Omega$, then

\begin{eqnarray*}
     \E[m(D^0_c)]  = \int_{\Omega} m\left(D^0_c(\omega)\right) \dd \mathbb{P}(\omega) = \int_{\Omega} \int_{\R^d} \I_{\{x\,|\, u(x; \omega) = c\}}\dd x \dd \mathbb{P}(\omega) \\
      = \int_{\Omega} \int_{\R^d} \I_{\{(x,\, \omega)\,|\, u(x; \omega) = c\}}\dd x \dd \mathbb{P}(\omega) \stackrel{\textrm{Fubini}}{=}  \int_{\R^d} \int_{\Omega} \I_{\{(x,\, \omega)\,|\, u(x; \omega) = c\}} \dd \mathbb{P}(\omega) \dd x \\ 
     = \int_{\R^d} \int_{\Omega} \I_{\{\omega\,|\,u(x; \omega) = c\}} \dd \mathbb{P}(\omega) \dd x = \int_{\R^d} \mathbb{P}(\{\omega\,|\, u(x; \omega)= c\}) \dd x  = 0
\end{eqnarray*}
  Noting that $m(D^0_c)\geq 0$, we obtain $m(D^0_c) = 0, \mathbb{P}$-almost surely.
  
  (ii) Recall that $A_t = \{u\in U \ |\  \M_c u \geq t\}$ defined in Lemma \ref{lem_meas1} is closed in $U$ for any $t > 0$. Thus the set $A := \{u\in U \ |\ m(\{x\ |\ u(x) = c\}) = 0\}= (\cup_{k=1}^{\infty} A_{1/k})^c = \cap_{k=1}^\infty A_{1/k}^c$ is a Borel set of $U$ and measurable. Since $\mu_0$ is the push-forward measure of $\PP$ under $\T$,
  \[
  \mu_0(A) = \PP(\T^{-1}(A)) = \PP(\{\omega\ |\ m(D_c^0(\omega)) = 0\}) = 1
  \]
  where the last equality follows from (i).
 \end{proof}
\end{proposition}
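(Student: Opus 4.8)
The plan is to establish part (i) directly, by computing the \emph{expected} Lebesgue measure of the level set via a Tonelli interchange, and then to deduce part (ii) from (i) through the pushforward relation $\mu_0 = \PP\circ\T^{-1}$.

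First I would fix $x\in D$ and exploit the Gaussian structure. Since the point evaluation $u\mapsto u(x)$ is a bounded linear functional on $U$, the real number $u(x;\cdot)$ is a scalar Gaussian random variable on $(\Omega,\mathscr{F},\PP)$, with variance $\sum_k \lambda_k \phi_k(x)^2$ read off from the Karhunen--Lo\`eve expansion \eref{eq_rf}. Provided this variance is strictly positive---which holds for both priors, since for $\C_2$ it equals $c(x,x)=1$ and for $\C_1$ it is a positive series---the law of $u(x;\cdot)$ is absolutely continuous with respect to Lebesgue measure, so that
\[
\PP(\{\omega : u(x;\omega)=c\}) = 0 \qquad \text{for every } c\in\R.
\]
This is the only step where the Gaussianity of $\mu_0$ is used.

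Next I would represent the measure of the random level set as an integral of an indicator,
\[
m(D^0_c(\omega)) = \int_{D} \I_{\{x\,:\, u(x;\omega)=c\}}\,dx,
\]
and interchange the order of integration in $\E[m(D^0_c)]=\int_\Omega m(D^0_c(\omega))\,d\PP(\omega)$. The interchange is legitimate because the random field $u:D\times\Omega\to\R$ is jointly measurable and the integrand is nonnegative, so Tonelli's theorem applies; after the swap the inner integral is $\int_\Omega \I_{\{\omega:u(x;\omega)=c\}}\,d\PP(\omega)=\PP(u(x)=c)=0$, whence $\E[m(D^0_c)]=0$. Since $m(D^0_c)\geq 0$ is a genuine random variable on $(\Omega,\mathscr{F},\PP)$ by Lemma \ref{lem_meas1}, a nonnegative random variable with vanishing expectation is zero $\PP$-almost surely, which is (i).

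Finally, for part (ii) I would transfer the conclusion through the commutative diagram relating $\T$, $\M_c$ and $\RR_c$. By Lemma \ref{lem_meas1} the set $A:=\{u\in U : m(\{x:u(x)=c\})=0\}$ belongs to $\Sigma$, and since $\mu_0$ is the pushforward of $\PP$ under $\T$,
\[
\mu_0(A) = \PP(\T^{-1}(A)) = \PP(\{\omega : m(D^0_c(\omega))=0\}) = 1,
\]
the last equality being precisely (i). I expect the principal obstacle to lie not in the Gaussian estimate but in the measure-theoretic bookkeeping supporting the Tonelli interchange: one must know that $u$ is jointly measurable in $(x,\omega)$ and that $m(D^0_c)$ is a bona fide random variable, the latter being exactly the reason Lemma \ref{lem_meas1} is proved beforehand.
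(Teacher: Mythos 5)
Your proposal is correct and follows essentially the same route as the paper's own proof: the pointwise Gaussian observation $\PP(\{\omega : u(x;\omega)=c\})=0$, a Tonelli/Fubini interchange giving $\E[m(D^0_c)]=0$ and hence (i), and the pushforward identity $\mu_0=\PP\circ\T^{-1}$ combined with the measurability of $A$ from Lemma \ref{lem_meas1} for (ii). Your explicit check that the pointwise variance is strictly positive (so that the law of $u(x;\cdot)$ is nondegenerate) is a small but welcome refinement that the paper leaves implicit.
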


\end{document}